\newenvironment{enonce}[1]
    { \par \medskip  \noindent {\bf #1}
    }
    { \\
    }
\newtheorem{theo}{Theorem}
\newtheorem{prop}[theo]{Proposition}
\newtheorem{lem}[theo]{Lemma}
\newcommand {\pare}[1] {\left( {#1} \right)}
\newcommand {\cro}[1] {\left[ {#1} \right]}
\newcommand {\acc}[1] {\left\{ {#1} \right\}}
\newcommand {\nor}[1] { \left\| {#1} \right\|}
\newcommand {\bra}[1] { \left\langle {#1} \right\rangle}
\newcommand {\abs}[1] {\left\lvert {#1} \right\rvert}
\newcommand {\va}[1] {\left| {#1} \right|}
\newcommand {\sous}[2] { \begin{array}{c} \scriptstyle{{#1}} \\[-.2cm] \scriptstyle{{#2}} \end{array}}
\newcommand {\cal}{\mathcal}
\def \E {\mathbb{E}}
\def \N  {\mathbb{N}} 
\def \P {\mathbb{P}}
\def \R  {\mathbb{R}} 
\def \Z  {\mathbb{Z}}
\def \AA {{\mathcal A}}
\def \EE {{\mathcal E}}
\def \FF {{\mathcal F}}
\def \LL {{\mathcal L}}
\def  \RR {{\mathcal R}}
\def \XX {{\mathcal X}}
\def \UU {{\mathcal U}}
\def  \VV {{\mathcal V}}
\def \SS {{\mathcal S}}
\def \la {\lambda}
\def \ind {{\bf 1}}
\def \bx {\bar{x}}
\def \ba {\bar{\alpha}}
\def \bX {\bar{\XX}}
\def \bw {\bar{w}}
\def \by {\bar{y}}
\def \baf {\bar{f}}
\def \bR {\bar{R}}
\def \br {\bar{r}}
\def \bLL {\bar{\LL}}
\def \bK {\bar{K}}
\def \brX {\breve{\XX}}
\def \brx {\breve{x}}
\def \bry {\breve{y}}
\def \brz {\breve{z}}
\def \brf {\breve{f}}
\def \brr {\breve{r}}
\def \brR {\breve{R}}
\def \brK {\breve{K}}
\def \Id {\mbox{Id}}
\def \Tr {\mbox{Trace}}
\def \Span {\mbox{Span}}
\def \equi {\Longleftrightarrow}
\begin{document}

\title[Multiresolution on graphs]{Intertwining wavelets \\ or Multiresolution analysis on graphs \\ 
through  random forests}
\author{Luca Avena} 
\address{Leiden University.}
\email{l.avena@math.leidenuniv.nl}
\today

\author{Fabienne Castell} 
\address{
 Aix Marseille Univ, CNRS, Centrale Marseille, I2M, Marseille, France.}
\email{fabienne.castell@univ-amu.fr}

\author{Alexandre Gaudilli\`ere} 
\address{
 Aix Marseille Univ, CNRS, Centrale Marseille, I2M, Marseille, France.}
\email{alexandre.gaudilliere@math.cnrs.fr} 

\author{Clothilde M\'elot} 
\address{
Aix Marseille Univ, CNRS, Centrale Marseille, I2M, Marseille, France.}
\email{clothilde.melot@univ-amu.fr}

\subjclass[2010]{94A12, 05C81, 05C85, 15A15, 60J20, 60J28}
\keywords{Graph signal processing, multiresolution analysis, wavelet basis, intertwining, Markov processes, 
random spanning forests, determinantal processes.}

\begin{abstract}
We propose a new method for performing  multiscale analysis of functions defined on the vertices of a finite connected weighted graph.
Our approach relies on a random spanning forest to downsample the set of vertices,
and on approximate solutions of Markov intertwining relation to provide a subgraph structure and a filter bank leading to a wavelet basis of the set of functions.
Our construction involves two parameters $q$ and $q'$. The first one controls the mean number of kept vertices in the downsampling,
while the second one is a tuning parameter between space localization and frequency localization.
We provide an explicit reconstruction formula, bounds on the reconstruction
operator norm and on the error in the intertwining relation, and a Jackson-like inequality. These bounds lead
to recommend a way to choose the parameters $q$ and $q'$. We illustrate the method by numerical experiments.   
\end{abstract}

\maketitle

\section{Introduction}
  Graphs provide a flexible representation of geometric structures of irregular domains, and they are now a commonly used tool in numerous applications including neurosciences, social sciences, biology,  transport, 
  communications, etc. One edge between two vertices models an interaction between them, and 
  one can also associate a weight to each edge to quantify levels of interaction, leading to a weighted graph. A graph signal  associates data to each vertex of the graph. As a measure of the brain activity in distinct functional regions, functional magnetic resonance images are an instance of a such a graph signal. Therefore, signal processing on weighted graphs has raised significant interest in recent years (see \cite{SCHU} for a review on the subject).  
    This paper is concerned with the problem of defining a multiresolution analysis for graph signals.
 To introduce the problem, let us recall the now classical method to perform such a multiresolution analysis on 
 a regular domain. 
 
\subsection{Multiresolution analysis  on regular grids.}\label{sec:intro}
 Let us consider a discrete periodic function  $f:\Z_n=\Z/n\Z\rightarrow \R$, viewed as a vector in $\R^n$.  The multiresolution analysis of $f$  is based on wavelet analysis and the so-called multiresolution scheme. Performing the wavelet analysis    
amounts to compute an "approximation" $f_{1}\in \R^{n/2}$ and a "detail" component $g_{1}\in\R^{n/2}$ through classical operations in signal processing such as "filtering" and "downsampling". 
Roughly speaking one wishes the approximation to give the main trends present in $f$ whereas the detail would contain more refined information. This is done by splitting the frequency content of $f$ into two components:  $f_{1}$ focuses on the low frequency part of $f$ whereas the high frequencies in $f$ are contained in $g_{1}$. 
In signal processing and image processing, "filtering" (i.e computing the convolution $f\star k$ for some well chosen kernel $k$) allows to perform such frequency splittings. In our case $K^l(f)=f \star k^l$ yields $K^l(f)$ as a low frequency component of $f$ and 
$K^h(f)=f\star k^h$ yields $K^h(f)$ as a high frequency version of $f$.
The  vectors $f_1$ and $g_1$ are "downsampled" versions of $K^l(f)$ and $K^h(f)$ by a factor of 2, which means that one keeps one coordinate of $K^l(f)$ and  $K^h(f)$ out of two, to build $f_{1}$ and $g_{1}$ respectively. Thus the total length of  the concatenation of the two vectors $[f_{1},g_{1}] $ is exactly $n$, the length of $f$. To sum up we have
\begin{eqnarray*}
f_{1}(\overline{x})&=&K^l(f)(\overline{x})=\langle \varphi_{\overline{x}},f \rangle\\
g_{1}(\overline{x})&=&K^h(f)(\overline{x})=\langle \psi_{\overline{x}},f\rangle
\end{eqnarray*}
where 

\begin{itemize}
\item $\overline{x}$ belongs to the set of downsamples $\overline{\Z_n} $ isomorphic to $\Z/\frac{n}{2}\Z $.
\item $\{\varphi_{\overline{x}},\overline{x}\in \overline{\Z_n}\}$ is the set of functions such that the equality between linear forms $\langle\varphi_{\overline{x}},\cdot \rangle=K^l(\cdot)(\overline{x}) $ holds for all $\overline{x}\in\overline{\Z_n}$.
\item In the same way $\{\psi_{\overline{x}},\overline{x}\in \overline{\Z_n}\}$ is such that  $\langle\psi_{\overline{x}},\cdot \rangle=K^h(\cdot)(\overline{x}) $ holds for all $\overline{x}\in\overline{\Z_n}$.
\end{itemize}
 The choice of $k^l$ and $k^h$ is clearly crucial and is done in such a way that perfect reconstruction of $f$ is possible from $f_{1}$ and $g_{1}$, so that there is no lost information in the representation $[f_{1},g_{1}] $. One can iterate the scheme to produce a sequence $f_{N_0}\in\R^{n/2^{N_0}}, g_{N_0}\in\R^{n/2^{N_0}},...,g_{1}\in\R^{n/2}  $ such that the total length of the concatenated vectors $[f_{N_0},g_{N_0},...,g_{1}]$ is exactly $n$. This is the reason why this scheme yields a multiresolution representation of $f$. 
Remark that the perfect reconstruction condition amounts to have $\mathcal{B}=\{\varphi_{\overline{x}},\psi_{\overline{x}},\overline{x}\in\overline{\Z_n}\} $ a basis for the signals 
$f$ on $\Z_n$. A famous construction by Ingrid Daubechies  \cite{DAUB} derives several families of orthonormal compactly supported such  basis $\mathcal{B} $. These families combine localization in space around the point $\overline{x}$ and localization properties in frequency due to the filtering step they have been built from. 
Using this space-frequency localization one can derive key properties of the wavelet analysis of a signal which rely on the deep links between the local regularity properties of $f$ and the behavior and decay properties of detail coefficients. Let us finish by saying that wavelet type algorithms led to developments of many powerful algorithms of compression, signal restauration, classification, and other kind of time-frequency transforms etc. We will not go  into more detail and refer the interested reader to one of the numerous books on wavelet methods and their applications such as \cite{DAUB} or \cite{MAL}.  \\

\subsection{Signal processing on graphs.}
Our aim is to define a multiresolution analysis on a  generic weighted finite graph $G=(\XX, w, \mu)$,  
 where:
 \begin{itemize}
 \item $\XX$ is the set of vertices of $G$ with cardinality $\va{\XX}=n$;
 \item $w: \XX \times \XX \mapsto \R^+$ is a weight function, which associates to each pair $(x,y)$  a positive weight $w(x,y)$; 
 \item $\mu$ is a measure on $\XX$ such that $w$   is reversible w.r.t. $\mu$, meaning that 
\begin{equation}
  \label{symmetry}
\forall x, y \in \XX \, ,  \, \,  \mu(x) w(x,y) = \mu(y) w(y,x)  \, .
 \end{equation}
 \end{itemize}
The weight function $w$ provides a graph structure on $\XX$, $(x,y)$ being an edge
of the graph iff $w(x,y)>0$. On such a graph, the operations of translation and downsampling are no longer canonically defined and several attempts to tackle these issues and generalize the wavelet constructions have been proposed: see \cite{SCHU} for a review on this subject and \cite{HAM} for one of the most popular methods. We will describe some of them in section \ref{review}, but 
let us stress that when going to the case of a generic 
weighted graph, there are three main problems one has to face:
\begin{description}
\item[(Q1)] What kind of downsampling should we use? What is the meaning of "keep one point out of two"?
\item[(Q2)] On which weighted graph should the approximation $f_1$ be defined to iterate the procedure?
\item[(Q3)] Which kind of filters should one use? What is a good "local" mean?
\end{description}  

This paper proposes a construction of a multiresolution analysis on a weighted graph which is based on a random downsampling method to answer question (Q1),  and on Markov processes interwining to answer questions (Q2) and (Q3).  
Markov processes enter naturally into the game through the
graph Laplacian $\LL$  defined on $\ell_2( \XX,\mu)$ by 
\begin{equation}
 \label{Generateur.def}  \LL f(x) := \sum_{y \in \XX} w(x,y) (f(y)- f(x)), 
 \end{equation}
where $f: \XX \rightarrow \R$ is an arbitrary function, and $w(x,y)$ is now viewed as the transition rate from $x$ to $y$. For $x \in \XX$, let  
\[ w(x) := \sum_{y \in \XX \setminus \acc{x}} w(x,y) \, . 
\]
Note that $\LL$ acts on functions as the matrix, still denoted by $\LL$:
\[ \LL(x,y)=w(x,y)  \mbox { for } x \neq y \, ; \, \, \LL(x,x)=-w(x) \, . 
\]
$\LL$ is the generator of  a continuous time Markov process $X=(X(t), t \geq 0)$, which jumps from $x$ to a neighboring site $y$ at random exponential times (see section \ref{Markov.sec}).  The law of 
$X$ starting from $x$ is denoted by $P_x$, and $E_x$ is the expectation  w.r.t. $P_x$.
Apart from the symmetry assumption \eqref{symmetry}, we will assume throughout the paper that 
\begin{equation} 
\label{irred.hyp}
\LL \mbox{ is irreducible. }
\end{equation}

Under this assumption, $\mu$ is the unique (up to multiplicative constants) invariant measure of the process $(X(t), t \geq 0)$ 
(i.e $\mu \LL=0$), and $\mu(x) > 0$ for any $x \in \XX$.  Since $\XX$ is finite, we can assume without loss of generality that $\mu$ is a probability measure on $\XX$.

\subsection{Our approach for downsampling, weighting and filtering procedures on graphs.}
 The first step in our proposal of a multiresolution scheme consists in answering (Q1) and constructing a random 
 subset  $\bX$ of $\XX$, whose main feature is to be "well spread" on $\XX$. In this respect, we use an adaptation of Wilson's algorithm (\cite{Wi}) studied in \cite{LA}, whose 
output is a random rooted spanning forest $\Phi$ on $G$. Its law $\pi_q$ depends on the weight function $w$ and on a positive real $q$. This algorithm and the properties of the corresponding random forest 
will be described in section \ref{rsf.sec}.  The set of roots of $\Phi$, denoted by $\rho(\Phi)$,
 has the nice property of being a determinantal process on $\XX$ with kernel given by the Green kernel:
\begin{equation}
\label{Kq.def} 
K_q(x,y) := q (q \Id - \LL)^{-1}(x,y)= P_x\cro{X(T_q)=y}  \, , 
\end{equation}
where $T_q$ is an exponential random variable with parameter $q$, independent of the process $X$.
Being a determinantal process, the roots of $\Phi$ tend to repulse each other, and this repulsiveness property results in the noteworthy fact 
that the mean time needed by the 
 process $(X(t), t \geq 0)$ to reach $\rho(\Phi)$, 
 does not depend on the starting point in $\XX \setminus \rho(\Phi)$. Therefore, this set of roots is a natural choice for $\bX$, especially as the 
 parameter $q$ can be tuned  to control the mean number of points in $\rho(\Phi)$. See Section \ref{foret.sec} for more details.

  Once the downsampling $\bX$ is fixed, it remains to answer questions (Q2) and (Q3), i.e. we have to construct a weight function
  $\bw$ on $\bX \times \bX$, and to  define a filtering method which gives the
  approximation $f_1$ and the detail $g_1$ of a given function $f_0$ in $\ell_2(\XX,\mu)$. 
  This will be achieved by looking for a   solution $(\bLL,\Lambda)$
   to the intertwining relation
 \begin{equation}
\label{DF.eq}
 \bLL  \Lambda = \Lambda  \LL \, , 
  \end{equation}
  where 
  \begin{itemize}
  \item $\bLL$ is a Markov generator on $\bX$;
  \item $\Lambda : \bX \times \XX \mapsto \R^+$ is a positive rectangular matrix. 
  \end{itemize}
An intertwining relation gives a natural link between two Markov processes living on different 
state spaces. It appeared in the context of diffusion processes in the paper by Rogers and Pitman \cite{RP}, as  a tool to state identities in laws, and was later successfully applied 
to many other examples (see for instance \cite{CPY},
\cite{MY}, etc). In the context of Markov chains, intertwining was used by Diaconis and Fill \cite{DF}
 to construct strong stationary times, and to control the rate of convergence to equilibrium. 
 At the time being, applications of intertwining include random matrices \cite{DMDY}, particle systems \cite{War}, etc.
 For our purpose, the intertwining relation can be very useful since 
 \begin{itemize} 
 \item $\bLL$ provides a natural choice for the graph structure to be put on $\bX$;
\item each row $\nu_{\bx}:=\Lambda(\bx, \cdot)$ ($\bx \in \bX$) defines a positive measure on $\XX$, which can serve as a "local mean" around
 $\bx$. $f_0$ will be 
  approximated by the function $f_1$  defined on $\bX$ by
 \[ \forall \bx  \in \bX,   f_1(\bx) := \nu_{\bx}(f_0) = \sum_{x \in \XX} \Lambda(\bx,x) f_0(x) \, ;
 \]
 \item it will serve as a basis for getting a Jackson-type inequality (see Proposition \ref{jackson.prop}). 
 \end{itemize}
 To be of any use in signal processing, the "filters"   $(\nu_{\bx}, \bx \in \bX)$  have 
 to be well localized in space and frequency. In the graph context, frequency localization means that the filters belong to an eigenspace of the graph laplacian $\LL$. Hence, 
  we are interested in solutions to \eqref{DF.eq} such that the measures  $(\nu_{\bx}, \bx \in \bX)$ 
  are linearly independent measures tending
  to be non-overlapping (space localization), and contained in eigenspaces of $\LL$.  
  In addition, in order to iterate the procedure, we also need  $\bLL$ to be reversible  on $\bX$. 

Note that  saying that $(\Lambda, \bLL)$ is an exact
  solution to \eqref{DF.eq} implies that the linear space spanned by the measures $(\nu_{\bx}, 
  \bx \in \bX)$ is stable by $\LL$, and is therefore a direct sum of  eigenspaces of $\LL$, so that these measures provide filters which are frequency localized. Hence the error in the intertwining relation is a measure of frequency localization: the smaller the intertwining error, the better the frequency localization. 
Finding solutions to \eqref{DF.eq} is the purpose of \cite{ACGM1}, where an exact linearly independent solution 
to \eqref{DF.eq} is provided. However this solution
tends to be overlapping. \cite{ACGM1} provides also approximate solutions to  \eqref{DF.eq} with small overlapping,
and it is one of these approximate solutions
 we use in our multiresolution analysis scheme. In order to describe it, 
we first consider the trace process of $(X(t), t \geq 0)$ on $\bX$, 
i.e. the process $(X(t), t \geq 0)$ sampled at the 
passage times on $\bX$. This process is a Markov process on $\bX$, whose generator 
$\bLL$ defines  a weight function $\bw$ on $\bX$, symmetric with respect to 
 the measure $\mu$ restricted on $\bX$. We will recall some general facts about
 Markov processes, and make the previous statements  more precise in section \ref{Markov.sec}.

Turning to the rectangular matrix $\Lambda$, it depends on a parameter $q'$ 
and involves the Green kernel $K_{q'}$ (see definition \eqref{Kq.def}). 
The rectangular matrix $\Lambda$ is just the restriction of $K_{q'}$ to 
$\bX \times \XX$. Note that 
\begin{itemize}
\item when $q'$ goes to $0$,  for any $\bx \in \bX$, 
$K_{q'}(\bx,y)$ goes to $\mu(y)$ so that \eqref{DF.eq} is clearly satisfied. $\mu$ being  
the left-eigenvector of $\LL$ corresponding to the eigenvalue $0$, the $K_{q'}(\bx,y)$ are well 
frequency localized. However, the vectors
$(K_{q'}(\bx, \cdot), \bx \in \bX)$ become linearly dependent and very badly space localized. 
\item when $q'$ goes to $\infty$, $K_{q'}(\bx, \cdot)$ goes to $\delta_{\bx}$. Hence, the space localization is perfect. However, the frequency localization is lost, and  the error in \eqref{DF.eq} tends to grow.
\end{itemize}
Hence, a compromise has  to be made concerning the choice of $q'$, and we will discuss this point later. 

To sum up our proposal, two parameters $q$ and $q'$ being fixed, 
\begin{itemize}
\item take  $\bX :=\rho(\Phi)$, $\Phi$ being sampled with $\pi_q$. Set $\brX = \XX \setminus \bX$.
\item $\bLL$ is the generator of the trace process of $X$ on $\bX$. It can be shown (see Lemma \ref{LbarSchur.lem} section \ref{Lbar.sec}) that 
$\bLL$ is irreducible and 
reversible w.r.t. the probability measure $\mu_{\bX}$, which is the measure $\mu$ conditioned to the set $\bX$: 
$\mu_{\bX}(A)=\mu(A \cap \bX)/\mu(\bX)$. $\bLL$ is actually  the Schur complement in $\LL$, of 
$\LL$ restricted to $\brX$, which was already used for instance in \cite{HAM} . For any $\bx$, $\by$ in $\bX$, such that
$\bx \ne \by$, the weight function $\bw$ on $\bX \times \bX$ is then defined by $\bw(\bx,\by)= \bLL(\bx,\by)$.  
\item $\Lambda(\bx, \cdot) := K_{q'}(\bx, \cdot)$.
\end{itemize}
$f$ (${=}f_0$) being a function in $\ell_2(\XX,\mu)$, we define the approximation of $f$ as the function 
$\baf$ (${=}f_1$) defined on $\bX$ by 
\begin{equation}
\label{barf.def}
 \forall \bx \in \bX \, , \, \,  \baf(\bx) = \Lambda f(\bx)= K_{q'}f(\bx) \, , 
\end{equation}
and its detail function as the function  $\brf$ (${=}g_1$) defined on $\brX$ by 
\begin{equation}
\label{brevef.def}
\forall \brx \in \brX \, , \, \,  \brf(\brx) = (K_{q'}-\Id)f(\brx) \, .  
\end{equation}
We proved in \cite{ACGM1} that some localization property of the  $(\nu_{\bx}=\Lambda(\bx,\cdot) , \bx \in \bX)$ defined by \eqref{barf.def} followed from the fact that $\bX$ is a determinantal process with kernel $K_q$.  Since $\brX= \XX \setminus \bX$ is also a determinantal process,
 with kernel $ \Id-K_q$, this suggested the detail definition \eqref{brevef.def}, in which the sign convention is chosen to have a self-adjoint
  analysis operator $U: f \mapsto (\baf, \brf)$ in $\ell_2(\XX,\mu)$. 
  
The process can then be iterated with $(\bX,\bLL,\mu_{\bX})$ in place of $(\XX, \LL,\mu)$. This leads to a multire\-solution scheme. 

\subsection{Description of the results.}
We give now a description of our main results. Note that whatever the choice of a subset $\bX$  of $\XX$, one can still define 
$\bLL$, $\Lambda$, $\baf$ and $\brf$ as we just did it previously. Our first set of results assumes that 
$\bX$ is any subset of $\XX$, and provides a reconstruction 
formula, and bounds on various operators including approximation, detail  and analysis operators.
They are expressed in terms of the hitting time of $\bX$:
  \[
H_{\bX}:= \inf \acc{t \geq 0, X(t) \in \bX } \, .
\] 
To state them, we introduce some notation. First of all, we define  the maximal rate $\alpha > 0$ by 
\begin{equation}
\label{MaxTaux.eq}
 \alpha = \max_{x \in \XX} w(x) \, . 
\end{equation}
Hence, the matrix $P= \Id + \LL / \alpha$ is a stochastic matrix on $\XX$. $\bX$ being any subset of $\XX$, we define the two positive real numbers $\beta$ and $\gamma$ by 
 \begin{equation}
  \label{beta-gamma-def}
   \frac{1}{\gamma} := \max_{\brx \in \brX} E_{\brx}\cro{H_{\bX}} 
     \, , \,\, 
 \frac{1}{\beta} := \max_{\bx \in \bX} \sum_{z \in \XX} P(\bx, z) E_{z}\cro{H_{\bX}} 
  \, .
  \end{equation}
$\ba$ denotes the maximal rate of $\bLL$:
\[ \ba := \max_{\bx \in \bX} - \bLL(\bx,\bx) \, . 
\]
We stress the obvious fact that $\gamma$, $\beta$ and $\ba$ are functions of the subset $\bX$. 

Finally, when 
$(M(u,v), u \in \UU, v \in \VV)$ is a rectangular matrix,  when $U$ is a subset of  $\UU$, and
$V$ a subset of $\VV$, $M_{UV}$ denotes the submatrix of $M$ obtained by 
restricting the entries of $M$ to $U$ and $V$.

To begin with our results, we get a reconstruction formula:

\begin{enonce}{Reconstruction formula}
(Proposition \ref{reconstruction.prop}.) \\
{\it For any $f \in \ell_2(\XX,\mu)$, let $\baf \in \ell_2(\bX,\mu)$ and $\brf \in \ell_2(\brX,\mu)$ be defined by 
\eqref{barf.def} and \eqref{brevef.def}. Then, 
\[ f = \bR \baf + \brR \brf  \, ,  
\]
where $\bR$ are $\brR$ are rectangular matrices  indexed respectively  by $\XX \times \bX$  and $\XX \times \brX$, whose block decompositions are 
\begin{equation}
\label{bR.def}
 \bR = \begin{pmatrix} \Id_{\bX} - \frac{1}{q'} \bLL 
 \\[.2cm]
  (- \LL_{\brX \brX})^{-1} \LL_{\brX \bX}
\end{pmatrix}  
\, , 
\,\, 
\mbox{ and }
\brR 
=\begin{pmatrix}  \LL_{\bX \brX} (-\LL_{\brX \brX})^{-1} \\  - \Id_{\brX} - q' (- \LL_{\brX \brX})^{-1} 
\end{pmatrix}
 \, . 
\end{equation}
}
\end{enonce}
We provide also upper bounds on norms of various operators: the approximation operator $\bR$, the detail 
operator $\brR$, and the interwining error operator $ \bLL \Lambda - \Lambda  \LL$. These bounds are given here 
when these operators are seen as operators from one $\ell_{\infty}$ space to another one. 
They  are stated in greater generality in Propositions \ref{bR-norm.prop}, \ref{brR-norm.prop} and  \ref{intertwiningp.prop}.
\begin{enonce}{Approximation and detail operator norms}
(Propositions \ref{bR-norm.prop} and  \ref{brR-norm.prop}). \\
{\it
Let $\bX$ be any proper subset of $\XX$ (i.e. $\emptyset \subsetneq \bX \subsetneq \XX$), $\brX = \XX \setminus \bX$, and let $\bR$ and 
  $\brR$ be the operators defined in \eqref{bR.def}.  For any   $\baf \in \ell_{\infty}(\bX, \mu_{\bX})$
and any $\brf \in \ell_{\infty}(\brX, \mu_{\brX})$, 
\begin{equation}
\label{normR.eq}
 \nor{\bR \baf}_{\infty,\XX}
   \leq  
\pare{1+ 2 \frac{\ba}{q'}}  \,   \nor{\baf}_{\infty,\bX} 
  \,  ,  \mbox{ and } 
 \nor{\brR \brf}_{\infty,\XX}
   \leq  
\max \pare{\frac{\alpha}{\beta}; 1+ \frac{q'}{\gamma}}  \,   \nor{ \brf}_{\infty,\brX} 
  \,  .
  \end{equation}
  }
   \end{enonce} 
\begin{enonce}{Intertwining error norm} (Proposition \ref{intertwiningp.prop}). \\
{\it
  Let $\bX$ be any proper subset of $\XX$, $\brX = \XX \setminus \bX$.  For any $f \in \ell_{\infty}(\XX,\mu)$, 
  \begin{equation}
  \label{normInter.eq}
  \nor{\pare{\bLL \Lambda - \Lambda \LL} f}_{\infty,\bX} 
  \leq 2 q' \frac{\alpha }{\beta}  \nor{f}_{\infty,\XX}
  \, . 
  \end{equation}
  }
 \end{enonce}
                         
Note that these bounds reflect the competition between $\nor{\pare{\bLL \Lambda - \Lambda \LL} f}_{\infty,\bX}$ (frequency localization) and 
$ \nor{\bR \baf}_{\infty,\XX}$ (space localization). Actually, the term $ \ba/q'$ appearing in \eqref{normR.eq} is a decreasing function of $q'$ and an increasing function of $\bX$, while the term $q' / \beta$ in \eqref{normInter.eq}  is increasing in $q'$ and decreasing in $\bX$. 

Other results are stated in the paper, including a bound on the norm of the detail $\brf$ in terms of the 
norm of $\LL f$ (Proposition \ref{brf.prop}), and a Jackson's type inequality on the approximation error after
$K$ steps of the mutiresolution scheme (Proposition \ref{jackson.prop}). 

The other set of results focuses on the case where $\bX$ is the set of roots of the random forest $\Phi$. They 
provide estimates on its cardinality $\va{\bX}$, and on the quantities $\ba$, $\beta$ and $\gamma$ involved in 
the previous statements. $\bX$ being random, 
all these quantities are random ones, and the estimates we get
are averaged ones. To state them, we will assume that the random variable $\Phi$ is defined  on some probability space $(\Omega_f, \AA_f, \P_q)$. The corresponding expectation will be denoted 
 by $\E_q$. 

 \begin{enonce}{Cardinality estimates} (Proposition \ref{card.prop}).  \\ 
{\it  Let $\bX= \rho(\Phi)$ and $\brX = \XX \setminus \bX$. Let $r \in ]0;1[$, and define 
$\RR_r = \acc{ x \in\XX; w(x) \geq r \alpha}$ the set of "rapid" points.   Then, 
\[  \va{\XX} \frac{q}{q+\alpha} \leq  \E_q\cro{\va{\bX }} \, ; \, \, 
\E_q\cro{\va{\bX \cap \RR_r }}  \leq \va{\RR_r} \frac{q+(2-r) \alpha}{q + 2 \alpha} 
\, . 
\]
}
 \end{enonce}
 
 These estimates can be used to tune the parameter $q$ in order to target a given proportion of kept points. Next, we obtain bounds on the quantities $\ba$, $\beta$ and $\gamma$ when $\bX= \rho(\Phi)$. Unfortunately, these bounds are just 
 lower bounds, and are moreover averaged ones. Nevertheless, since they are expressed in terms of the cardinality of $\rho(\Phi)$, they
 can easily be estimated by Monte Carlo methods, and can serve as a guide for the choice of $q$ and $q'$. These are then the key estimates where our particular random choice for $\bX$  plays a central role. 
 
 \begin{enonce}{Estimates on $\ba$, $\beta$, $\gamma$} 
(Propositions  \ref{alphabar.prop},  \ref{beta.prop} and \ref{gamma.prop}). \\
{\it  Let $\bX= \rho(\Phi)$ and $\brX = \XX \setminus \bX$.
 \begin{equation}
   \E_q\cro{\ba } \geq
 q \E_q \cro{\frac{\va{\brX}}{\va{\bX} + 1}} 
  \, , \, \,  
    \E_q\cro{\frac{1}{\beta} } \geq
   \E_q \cro{\frac{\va{\brX}}{\alpha \va{\bX}}} 
     \, , \, \,  \E_q\cro{\frac{1}{\gamma} } \geq
  \frac{1}{q}  \E_q \cro{\frac{\va{\XX}}{\va{\brX} + 1}} \, . 
  \end{equation}
 }
 \end{enonce}

\begin{enonce}{Discussion on the choice of $q$ and $q'$}
\end{enonce}
Based on these estimates, we argue in section \ref{discussionq.sec} that the parameter $q$ should be chosen 
in $[\theta_1 \alpha; \theta_2 \alpha]$ in order to ensure that the mean number of vertices of the subgraph is at least 
a given proportion ($\theta_1/(\theta_1+1)$) of the size of the original graph, and that a given proportion of the rapid points are decimated. In addition it should  minimize the function $q \mapsto \E_q(\bar{\alpha}/\beta)$, for 
the approximation operator norm and the intertwining error to be small (see \eqref{normR.eq} and  \eqref{normInter.eq}). The Monte-Carlo estimation of this function is computationally  costly, so that we propose to minimize the function $q \in [\theta_1 \alpha; \theta_2 \alpha]
\mapsto   q \E_q\cro{\va{\brX}/(1+\va{\bX})}  \E_q\cro{\va{\brX}/\va{\bX}}$. This is possible in practice since, from \cite{LA}, we can simultaneously sample a whole continuum of forests $\Phi_q$ for $q  \in [\theta_1 \alpha; \theta_2 \alpha]$, each of them with the correct distribution $\pi_q$.
  Once $q$ has been fixed, $\bX$ is sampled according to $\pi_q$, $\ba$ and $\beta$ are computed 
  and $q'$ is chosen equal to $2 \ba \va{\bX}/\va{\brX}$, which will ensure numerical stability of the algorithm (see Section \ref{discussionq.sec}). 
   
\subsection{Related works.}
\label{review}
  Previous authors have explored multiresolution analysis on graphs, and have proposed answers to questions (Q1-3). Far from being exhaustive, we describe some of these, and refer the interested reader to \cite{SCHU} for a more complete state of the art. 
  
  Concerning the downsampling procedure, also referred to as the graph coarsening problem, many approaches have been investigated. 
  To mention a few,  one can try to decompose the graph into bipartite graphs for which the notion of "one every two points" is clear. 
  This is the way followed in \cite{NO,ND}, using either {\it coloring-based downsampling}, or {\it maximum spanning tree}.  
  In \cite{SFV}, the partitioning of $\XX$ into two subsets,
   is based on the sign of the eigenfunction of $- \LL$ associated to the maximal eigenvalue. Finally, the authors of \cite{TB} use a community 
   detection algorithm maximizing the modularity, to partition the original graph into many connected subgraphs of small size.  The vertices of the
   downsampled  graph are the elements of the partition, and are not properly speaking selected points of the original graph. 
 
Turning to the weighting procedure, bipartite graphs designed methods put an edge between two selected nodes, if they share at least a  
neighbor in the  original graph, and the weight  may be proportional to the number of shared neighbors. 
Starting from a bipartite graph, this leads immediately to a non bipartite subgraph, thus the need to "decompose" a graph by 
bipartite ones. The authors of \cite{SFV} rely on the so-called {\it Kron reduction}, which is the same as computing the Schur complement.  In \cite{TB}, the weight between two communities is the sum of the weight of edges linking these communities. 

Finally, as far as the filtering procedure is concerned, various filter banks have been proposed, 
leading sometimes to orthonormal basis, or just frames. The {\it graph wavelet filtering bank} of \cite{NO2} 
is more specifically designed for bipartite graphs, and exploit the specific features of bipartite graph Laplacian.
It produces an orthonormal basis.  The {\it diffusion wavelets} of \cite{CM} are obtained by constructing 
orthonormal basis of the spaces $(V_{k-1} \ominus V_k, k = 1, \cdots, K)$, where $V_k = \Im((-\LL)^{2^k})$, 
and are thus designed to form an orthonormal basis. The {\it spectral graph wavelets} of \cite{HAM} are 
 functions of type $g(-t \LL) \delta_n$ where $g$ is a function localized around 0 for the low-pass filters, or away from  $0$ for the high-pass ones. Different values of $t$ are used to select different frequencies, thus leading to a frame. Since the computation of   $g(-t \LL)$ requires the knowledge of the spectral decomposition of 
 $\LL$, polynomial approximations are performed. In \cite{TB}, the filters used are the eigenfunctions of the 
 Laplacians restricted to each small community, naturally leading to an orthonormal basis. In \cite{GBC}, assuming that we are given a  
 subsampling procedure encoding the geometry of the original graph, the authors construct a Haar basis adapted to this subsampling.
 
 Compared to these works, our downsampling approach through Wilson's algorithm, is a partitioning method
 in the spirit of \cite{TB}, a community corresponding to a tree of the forest.  The mean number of selected points can be adjusted through the parameter $q$.   
 
   The filters we use are a special case of spectral graph wavelets of \cite{HAM}, the function $g$ being equal to 
 $1/(1+x)$ for the scaling function, and $x/(1+x)$ for the wavelet, and $t$ corresponding to $1/q'$ (note that 
 $K_{q'} = q'(q' \Id - \LL)^{-1}= (\Id - \LL/q')^{-1}$). This choice is very natural in the signal processing 
 variational approach, since $K_{q'}f$ minimizes the function $g \mapsto \bra{g,-\LL g} + q' \nor{g-f}^2$.   Since 
 we use only one value of $t$ at each step, the filter bank we obtain is a basis, which is not orthogonal, 
 but tends to be so when $q' \gg 1$ (in this case $K_{q'}(\bx, \cdot) \simeq \delta_{\bx}$, while 
 $(K_{q'}- \Id)(\brx, \cdot) \simeq \LL(\brx, \cdot)/q'$). 
  
  The weighting procedure  through Kron reduction has also already be used in \cite{SFV}. But in our approach, 
  weighting and filtering are linked together through the intertwining relation. 

  To sum up, the advantages of our approach are: \begin{itemize}
 \item to provide a new partitioning method, allowing to tune the mean number of kept vertices;
 \item to link the subgraph structure and the choice of the filters through the intertwining relation;
 \item to produce a filter bank leading to a  basis;
 \item to mimic the steps of a classical wavelet analysis algorithm;
 \item to allow the computation of various error bounds;
 \item to suggest a systematic method to choose the parameters $q$ and $q'$, and ensure numerical stability;
 \item to have a computational complexity similar to already existing methods. Actually, apart from the 
 sampling of the random forest which is of order $\alpha \va{\XX}/q$, our approach only requires the computation of $K_{q'}$
  and of the Schur complement, already present in \cite{HAM, SFV}. Starting from a sparse matrix $\LL$, $K_{q'}$ can be efficiently approximated
  by polynomials of small order. Due to the good repartition property of our random $\bX$, such a polynomial approximation can  also be implemented  for the  inversion involved in the computation
   of $\bLL$. This results in a sparse $\bLL$, however the level of sparsity  obtained is not enough to  go on with the algorithm for large graphs. Hence, as in \cite{SFV},  a sparsification step can be added after the computation of  $\bLL$. 
   Our proposition for this sparsification will be guided by the intertwining error bounds that provide a Jackson-like inequality (see Section \ref{resume}). 
 \end{itemize}
 Finally, we would like to mention that our multiresolution scheme construct a basis of the space of signals, which can be used to analyze, compress, etc, any signal on the graph. When one wants to handle just one specific signal,   adaptative multiresolution schemes as in 
 \cite{ED} may be more appropriate.
  
\subsection{Organization of the paper}
We begin in section \ref{notation.sec} with notations used throughout the paper. Section \ref{foret.sec} is devoted to the description and the properties of the random forest, and to the downsampling procedure. 
The weighting and the filtering procedures, $\bX$ being any subset of $\XX$, are  discussed in section
\ref{weight-filter.sec}. We prove in this section the bounds on operator norms. We go on with the iterative scheme and Jackson's inequality in section \ref{jackson.sec}. The discussion on the choice of $q$ and $q'$ and the estimates on $\ba$, $\beta$ and $\gamma$, are given in section \ref{choixq.sec}. We 
summarize  the pyramidal algorithm and discuss computational issues in section \ref{resume}, and we finally end with numerical experiments  illustrating the method in section \ref{numeric.sec}.

   
\section{Notations and preliminary results.}
\label{notation.sec}
In this section, we give the notations used throughout the paper, and we state some useful results. 

 \subsection{Sets of functions and measures.}
A function $f$ on $\XX$ will be seen as a column vector, whereas a signed measure on $\XX$ will be seen as a 
row vector. For $p \geq 1$,  $\ell_p(\XX,\mu)$ is the space of functions on $\XX$ 
endowed with the norm 
\[ \nor{f}_p = \pare{\sum_{x \in \XX} \va{f(x)}^p  \mu(x)}^{1/p}\, .
\]
The scalar product of two functions $f$ and $g$ in $\ell_2(\XX,\mu)$ is 
\[ \bra{f,g} = \sum_{x \in \XX} f(x) g(x) \mu(x) \,  
\] 
When $f$ is a function on $\XX$, $f^{\star}$ will denote the signed measure whose density w.r.t $\mu$ is $f$:
$f^{\star}(A)= \sum_{x \in A} \mu(x) f(x)$ for all subset $A$ of $\XX$. Similarly, when $\nu$ is a signed measure
on $\XX$, $\nu^{\star}$ is the density of $\nu$ w.r.t $\mu$: $\nu^{\star}(x)= \nu(x)/\mu(x)$ for any $x \in \XX$.

\subsection{Schur complement}
Let $M$ be a  matrix of size $n=p+r$ and let 
\[ M=  \begin{pmatrix} A & B \\ C & D \end{pmatrix} \]
be its block decomposition, $A$ being a square matrix of size $p$ and $D$ a square matrix of size $r$. If $D$ is
invertible, the Schur complement of $D$ in $M$ is the square matrix of size $p$ defined by
\[ S_M(D) := A - B D^{-1} C  \, .
\] 

We remind the reader the following standard results concerning the Schur complement (see for instance \cite{Zhang}):
\begin{prop}
\label{schur.prop}
Assume that $D$ is invertible.
\begin{enumerate} 
\item $M=\begin{pmatrix} \Id & B \\ 0 & D \end{pmatrix}
				\begin{pmatrix} S_M(D) & 0 \\  D^{-1}C & \Id \end{pmatrix}$.
\item $\det(M) = \det(D) \det(S_M(D))$.
\item $M$ is invertible if and only if $S_M(D)$ is invertible. In that case, 
\[ M^{-1} = \begin{pmatrix} S_M(D)^{-1} & - S_M(D)^{-1} B D^{-1}
					\\ - D^{-1} C S_M(D)^{-1}  & D^{-1} + D^{-1} C  S_M(D)^{-1} B D^{-1}
					 \end{pmatrix} \, .
\]
\item Assume that $M$ defines a positive symmetric operator in $\ell^2(\XX,\mu)$. 
 Let $\lambda_{max}(M)$ (respectively  $\lambda_{min}(M)$)  be the largest (respectively the smallest) eigenvalue of $M$. Then, $S_M(D)$ is also positive symmetric and 
\[\lambda_{min}(M) \leq  \lambda_{min}(S_M(D)) \leq   \lambda_{max}(S_M(D)) \leq \lambda_{max}(M) \, .
\]
\end{enumerate}
\end{prop}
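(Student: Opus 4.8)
The plan is to dispatch the first three items by direct block computation, and to reserve the real work for the spectral interlacing in item (4).

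For (1) I would simply multiply out the two factors on the right-hand side: the $(1,1)$ block gives $S_M(D)+BD^{-1}C = A$, the off-diagonal blocks give $B$ and $C$, and the $(2,2)$ block gives $D$, so the product is exactly $M$. For (2) I would take determinants in the identity of (1); the first factor is block upper-triangular with diagonal blocks $\Id_p$ and $D$, and the second is block lower-triangular with diagonal blocks $S_M(D)$ and $\Id_r$, whence $\det M = \det(D)\det(S_M(D))$. For (3) the equivalence of invertibility is immediate from (2) together with $\det D \neq 0$; to obtain the explicit inverse I would invert the two triangular factors of (1) separately (each inverse being read off by the usual triangular formula) and multiply them in the reverse order, or equivalently verify the stated formula by multiplying it against $M$.

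The substantive statement is (4). First I would remove the weight $\mu$: writing $D_\mu$ for the diagonal matrix with entries $\mu(x)$ and setting $\widetilde M = D_\mu^{1/2} M D_\mu^{-1/2}$, self-adjointness of $M$ in $\ell_2(\XX,\mu)$ becomes ordinary symmetry of $\widetilde M$, and the two matrices share the same spectrum. Since $D_\mu$ is block-diagonal for the partition at hand, a short computation shows that the Schur complement of the lower block of $\widetilde M$ is $D_\mu^{1/2}$-conjugate to $S_M(D)$ on the upper block, hence has the same spectrum as $S_M(D)$. This reduces everything to the case of a genuinely symmetric positive $M$ with $C = B^{\mathsf T}$.

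The engine for (4) is the variational description of the Schur complement. For fixed $x \in \R^p$ the map $y \mapsto (x,y)^{\mathsf T} M (x,y)$ is a quadratic form in $y$ with positive-definite leading term $D$, minimized at $y^\ast = -D^{-1}Cx$ with minimal value $x^{\mathsf T}(A - BD^{-1}C)x = x^{\mathsf T} S_M(D)x$; that is, $x^{\mathsf T} S_M(D) x = \min_y (x,y)^{\mathsf T} M (x,y)$. Symmetry of $S_M(D)$ is visible from $C = B^{\mathsf T}$, and positivity follows since the right-hand side is $\geq 0$. For the upper eigenvalue bound I would evaluate the minimand at $y = 0$, giving $x^{\mathsf T} S_M(D)x \leq x^{\mathsf T} A x \leq \lambda_{\max}(M)\|x\|^2$, hence $\lambda_{\max}(S_M(D)) \leq \lambda_{\max}(M)$. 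For the lower bound I would use the minimizer, $x^{\mathsf T} S_M(D)x = (x,y^\ast)^{\mathsf T} M (x,y^\ast) \geq \lambda_{\min}(M)(\|x\|^2 + \|y^\ast\|^2) \geq \lambda_{\min}(M)\|x\|^2$, where the last step uses $\lambda_{\min}(M) \geq 0$; this gives $\lambda_{\min}(S_M(D)) \geq \lambda_{\min}(M)$. The middle inequality $\lambda_{\min}(S_M(D)) \leq \lambda_{\max}(S_M(D))$ is automatic. The only delicate point, and the one I would check with care, is the bookkeeping in the reduction step: verifying that conjugation by $D_\mu^{1/2}$ respects the block partition and intertwines the two Schur complements, so that the clean symmetric argument transfers back to the $\mu$-weighted setting.
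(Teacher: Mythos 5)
The paper gives no proof of this proposition: it is stated as a standard fact with a pointer to \cite{Zhang}, so there is no internal argument to compare yours against. Your proof is correct and is the classical one: the block factorization in (1) yields (2) and (3) directly, and for (4) the conjugation by the block-diagonal weight $D_\mu^{1/2}$ reduces to a genuinely symmetric positive matrix, after which the variational identity $x^{\mathsf T} S_M(D)\, x = \min_y (x,y)^{\mathsf T} M (x,y)$ gives positivity and both eigenvalue bounds (taking $y=0$ for the upper one, the minimizer $y^\ast=-D^{-1}Cx$ together with $\lambda_{\min}(M)\ge 0$ for the lower one). The point you flag as delicate does check out: since $D_\mu$ is block diagonal, conjugation commutes with forming the Schur complement, and the lower block $D$ is positive definite (hence the minimum over $y$ is attained) because $M\succeq 0$ and $D$ is assumed invertible.
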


\subsection{Markov process.} 
\label{Markov.sec}
Consider  an  irreducible  continuous time Markov process
$(X(t), t \ge 0)$ on  $\XX$,  with generator 
 $\LL$ given by \eqref{Generateur.def} satisfying \eqref{symmetry} and \eqref{irred.hyp}. 
 We recall the definition \eqref{MaxTaux.eq} of $\alpha > 0$ as the maximal rate, and that 
 of the matrix $P:= \LL/\alpha+\Id$. $P$ is an irreducible stochastic matrix, 
and we denote by $(\hat{X}_k, k \in \N)$ a discrete time Markov
chain with transition matrix $P$. The process $(X(t), t \ge 0)$ can be constructed 
from $(\hat{X}_k, k \in \N)$ and an 
independent Poisson process $(\tau_i, i \geq 0)$ on $\R^+$ with rate $\alpha$. At each event of the Poisson process,  $X$ moves according to the trajectory of $\hat{X}$:
\[ X(t) = \sum_{i=0}^{+\infty} \hat{X}_i \ind_{\tau_i \leq t < \tau_{i+1}} \, . 
\]

By \eqref{symmetry} and \eqref{irred.hyp},  $\mu$ is strictly positive. 
In addition, $- \LL$ defines a positive symmetric operator on 
$\ell_2(\XX,\mu)$, and we denote by $(\lambda_i; i=0, \cdots, n-1)$
the real eigenvalues of $-\LL$ in increasing order. It follows from  the fact that $P$ is  irreducible
 that 
\begin{equation}
\label{vp.eq}
 0 = \la_0 < \la_1 \leq \la_2 \cdots \leq \la_{n-1} \leq  2 \alpha \, .
\end{equation}
The right eigenvector of $-\LL$ associated to $\la_i$ is denoted by $e_i$:
\[ -  \LL e_i = \la_i e_i \, . 
\]
The $(e_i, i=0, \cdots , n-1)$ are normalized to have an $\ell_2$- norm equal to 1, and form an orthonormal basis of $\ell_2(\XX,\mu)$. By construction, $e_0=\ind$. 
 For any subset $A$ of $\XX$, $H_A$ is the hitting time of $A$ by the process $X$:
\[
H_A := \inf \acc{t \geq 0, X(t) \in A } \, ,  
\]
while $H^+_A$ is the return time to $A$ by the process $X$:
\begin{equation}
\label{return.def}
H^+_A := \inf \acc{t \geq \tau_1, X(t) \in A } \, . 
 \end{equation}


\section{Random spanning forests, Wilson's algorithm and downsampling procedure.}
\label{foret.sec}
We spend now some time on  the description and the properties of the spanning random forest $\Phi$ used
in the downsampling procedure .   Let us call $\EE$ the set of unoriented edges 
of $G$, that is the set of pairs $\acc{x,y}$ such that $w(x,y) > 0$ (and $w(y,x)>0$).  A spanning {\it unoriented} forest is a graph without cycles, with $\XX$ as set of vertices, and  a subset of $\EE$ as edge set. An unoriented tree is a connected component of such a  forest. By choosing in each tree one specific vertex, which we call root,  we define a rooted spanning forest.  Note that the number of roots is the same as the number of trees. Orienting each edge of a tree toward its root,  we obtain an spanning oriented forest (s.o.f) 
$\phi$. The set of roots of a spanning oriented forest $\phi$ is denoted by $\rho(\phi)$. 
If $e=(x,y)$ is an oriented edge, we will use $w(e)$ for $w(x,y)$, and say that $e \in \phi$ if $e$ is an edge of a tree
of the forest $\phi$.

\subsection{A probability measure on forests.}

  We introduce now a real parameter $q >0$, and associate to each oriented forest a weight
 \begin{equation}
 \label{poidsforet.def}
 w_q(\phi) := q^{|\rho(\phi)|} \prod_{e \in \phi} w(e) \, ,
 \end{equation}
where $|\rho(\phi)|$ is the cardinality of $\rho(\phi)$, i.e. the number of trees in the forest 
$\phi$. $w_1(\phi)$ will be denoted by $w(\phi)$ so that 
\[  w_q(\phi) := q^{|\rho(\phi)|} w(\phi) \, . 
\]
These weights can be renormalized to define a probability measure on the set of spanning oriented forest:
 \begin{equation}
 \label{probaforet.def}
 \pi_q(\phi) := \frac{ w_q(\phi) }{Z(q)} \, ,
 \end{equation}
 where the partition function $Z(q)$ is given by
 \begin{equation}
 \label{partition.def}
Z(q):= \sum_{\phi \mbox{ s.o.f.}}  w_q(\phi) \, .
\end{equation}

\subsection{Wilson's algorithm.}
A way to sample a random s.o.f. $\Phi$ from $\pi_q$ is given by the following iterative algorithm. 
Let $\Phi_c$ be the current state of the forest being constructed, and let $\VV_c$ be the set of vertices of
$\Phi_c$. 
At the beginning, $\VV_c $ is equal to $\emptyset$. \\
While $\XX \setminus \VV_c \neq \emptyset$, perform the following steps:
\begin{itemize}
\item Choose a point  $x$  at random in $\XX \setminus \VV_c$. 
\item  Let evolve the Markov process $(X(t), t \geq 0)$ from $x$, and stop it either when it 
reaches $\VV_c$, or after an independent exponential time  of parameter $q$. 
\item  Erase the loops of the trajectory drawn by $X$. We obtain a self-avoiding path $C$
starting from $x$ and oriented towards its end-point.
\item Add $C$ to $\Phi_c$.
\end{itemize}
Each  iteration of the "while loop" stopped by the exponential time, gives birth to another tree. 
Wilson's algorithm is not only a way to sample $\pi_q$. It provides also a powerful tool to analyse it,
the reason being that it does not depend on the way we "choose a point" in the first step of the "while" loop. 
In addition there exists a coupling of the probability measures $\pi_q$ for different values of $q$. This 
means that we can construct the random forests for different values of $q$ from the same set of random
variables. This coupling is explained in \cite{LA}.

 \subsection{Properties of the random forest.}
 \label{rsf.sec}
   Using Wilson's algorithm, and the explicit knowledge of the law of a loop-erased Markov process, the following 
   statement,  as all the results  stated in this section,   is proven in \cite{LA}: 
\begin{prop} Partition function. \\
The partition function $Z(q)$ is the characteristic polynomial of $\LL$:
\begin{equation}
\label{part.exp}
Z(q) = \det (q \,  \Id - \LL) \, .
\end{equation}
\end{prop}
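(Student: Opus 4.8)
The plan is to recognize the identity \eqref{part.exp} as an instance of the weighted matrix-tree theorem, and to reduce it to a classical arborescence count by adjoining a single absorbing vertex. First I would rewrite $q\,\Id - \LL$ as $q\,\Id + (-\LL)$, and note that $-\LL$ is exactly the graph Laplacian in the usual convention: its diagonal entry at $x$ is $w(x) = \sum_{y \ne x} w(x,y)$ and its off-diagonal entry at $(x,y)$ is $-w(x,y)$. Thus the diagonal entry of $q\,\Id - \LL$ equals $q + w(x)$, which I would read as the total out-rate of $x$ in an enlarged graph obtained by attaching to each vertex an edge of weight $q$ toward a new \emph{cemetery} vertex $\Delta$.

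Concretely, I would build the augmented graph $G^{\star}$ on $\XX \cup \acc{\Delta}$, keeping all original edges with their weights $w(x,y)$ and adding, for every $x \in \XX$, an oriented edge $x \to \Delta$ of weight $q$, with no edge leaving $\Delta$. The key combinatorial step is a weight-preserving bijection between spanning arborescences of $G^{\star}$ oriented toward $\Delta$ and rooted spanning oriented forests of $G$: in such an arborescence each $x \in \XX$ has a unique outgoing edge, and erasing $\Delta$ turns the vertices whose edge pointed to $\Delta$ into roots, while the remaining edges form trees oriented toward those roots. Under this bijection the product of edge weights factorizes as $q^{\va{\rho(\phi)}} \prod_{e \in \phi} w(e) = w_q(\phi)$, so summing over all $\Delta$-rooted arborescences reproduces exactly $\sum_{\phi} w_q(\phi) = Z(q)$.

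It then remains to identify this sum with the determinant. By the directed weighted matrix-tree theorem, the total weight of spanning arborescences of $G^{\star}$ rooted at $\Delta$ equals the principal minor of the (out-degree) Laplacian of $G^{\star}$ obtained by deleting the row and column indexed by $\Delta$. That minor is precisely the matrix on $\XX \times \XX$ with diagonal $q + w(x)$ and off-diagonal $-w(x,y)$, i.e. $q\,\Id - \LL$, which gives $Z(q) = \det(q\,\Id - \LL)$ and hence \eqref{part.exp}.

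The main obstacle is the proof of the matrix-tree theorem itself, which I would carry out (or cite) via the signed permutation expansion of the determinant: after expanding each diagonal term $q + w(x)$ into a sum over the possible outgoing choices, the expansion becomes a signed sum over functional subgraphs in which every vertex has out-degree one, and one must show that all contributions carrying at least one oriented cycle cancel in sign-reversing pairs, leaving only the acyclic arborescence terms with positive sign and the asserted weight. Alternatively, and closer to the spirit of the paper, one can bypass the determinant identity entirely by running Wilson's algorithm on $G^{\star}$: the loop-erasure structure yields $\pi_q(\phi) = w_q(\phi)/Z(q)$ together with an independent evaluation of $Z(q)$ through the Green kernel of the killed walk, which is the route taken in \cite{LA}.
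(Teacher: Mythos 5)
Your argument is correct, but it follows a genuinely different route from the one the paper relies on. The paper gives no proof of this proposition at all: it states that this identity, like all results of that section, is proven in \cite{LA} ``using Wilson's algorithm and the explicit knowledge of the law of a loop-erased Markov process.'' That probabilistic route runs the loop-erasure construction on the graph augmented by an absorbing state, expresses the probability of producing a given forest $\phi$ as $w_q(\phi)$ divided by a product of Green-kernel normalisations whose total contribution is $\det(q\,\Id-\LL)$, and reads off \eqref{part.exp} from the fact that the probabilities sum to one; as a by-product it yields the sampling algorithm, the law of the number of roots and the determinantal structure of $\rho(\Phi)$. Your route is purely combinatorial: the cemetery-vertex augmentation correctly converts s.o.f.'s of $G$ into $\Delta$-rooted arborescences of $G^{\star}$ with matching weights $q^{\va{\rho(\phi)}}\prod_{e\in\phi}w(e)$, and the principal minor of the out-degree Laplacian of $G^{\star}$ at $\Delta$ is exactly $q\,\Id-\LL$ (self-loops being irrelevant since they cancel in the Laplacian), so Tutte's directed matrix-tree theorem gives \eqref{part.exp} directly. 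This is self-contained modulo the matrix-tree theorem, whose sign-reversing-involution proof you correctly identify as the only remaining work; it establishes the determinant identity with no probabilistic input, but, unlike the paper's route, it does not by itself deliver the correctness of Wilson's algorithm or the other distributional facts of Section 3 that the paper obtains in one stroke from the same computation.
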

Some other important features of the random s.o.f. $\Phi$ are listed below.
 \begin{prop} Number of roots. \\
\label{nombre.prop}
   For all $k \in \acc{0, \cdots, n}$, 
 \[ \P_q\cro{\abs{\rho(\Phi)}=k} =
  \sum_{\begin{matrix}\scriptstyle{ I \subset \acc{0, \cdots, n-1}}\\[-2pt]  \scriptstyle{  \abs{I}=k}  \end{matrix} }
\,  \prod_{i \in I} \frac{q}{q+\lambda_i} \,\,   \prod_{i \notin I} \frac{\lambda_i}{q+\lambda_i} \, . 
 \]
Otherwise stated, the number of roots has the same law as $\sum_{i=0}^{n-1} B_i$ where 
 $B_0, \cdots, B_{n-1}$ are independent, $B_i$ having Bernoulli distribution with parameter $\frac{q}{q+\lambda_i}$.
  \end{prop}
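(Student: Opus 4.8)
The plan is to play off the two expressions we have for the partition function $Z(q)$ against each other: on one side it is a weighted sum over forests graded by their number of trees, and on the other side it is the characteristic polynomial $\det(q\,\Id-\LL)$.

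First I would unfold the definition of $\pi_q$. Since $w_q(\phi)=q^{\va{\rho(\phi)}}w(\phi)$, grouping forests according to their number of trees gives
\[
\P_q\cro{\va{\rho(\Phi)}=k}=\frac{1}{Z(q)}\sum_{\phi:\,\va{\rho(\phi)}=k}q^k\,w(\phi)=\frac{a_k\,q^k}{Z(q)},\qquad a_k:=\sum_{\phi:\,\va{\rho(\phi)}=k}w(\phi),
\]
where crucially the coefficients $a_k$ do \emph{not} depend on $q$. Hence $Z(q)=\sum_{k=0}^{n}a_k q^k$ is an honest polynomial in $q$ whose coefficients are these weighted forest counts, with $a_0=0$ since any spanning forest of a nonempty graph has at least one tree.

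Next I would invoke the previous proposition, $Z(q)=\det(q\,\Id-\LL)$. As $-\LL$ is positive symmetric with eigenvalues $\la_0,\dots,\la_{n-1}$, the matrix $\LL$ has eigenvalues $-\la_i$, so
\[
Z(q)=\det(q\,\Id-\LL)=\prod_{i=0}^{n-1}(q+\la_i)=\sum_{k=0}^{n}q^k\sum_{\substack{J\subset\acc{0,\dots,n-1}\\ \va{J}=n-k}}\prod_{i\in J}\la_i.
\]
Matching the coefficient of $q^k$ in the two expansions of $Z(q)$ identifies $a_k$ with the degree-$(n-k)$ elementary symmetric function of the eigenvalues, namely $a_k=\sum_{\va{J}=n-k}\prod_{i\in J}\la_i$.

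Finally I would substitute this back and re-index by the complement $I=\acc{0,\dots,n-1}\setminus J$, which has cardinality $k$. Distributing the denominator $\prod_i(q+\la_i)$ factor by factor — the $q^k$ attaching to the $k$ indices in $I$ and each $\la_i$ attaching to its own denominator — yields
\[
\P_q\cro{\va{\rho(\Phi)}=k}=\sum_{\substack{I\subset\acc{0,\dots,n-1}\\ \va{I}=k}}\prod_{i\in I}\frac{q}{q+\la_i}\prod_{i\notin I}\frac{\la_i}{q+\la_i},
\]
which is the claimed identity. The Bernoulli reformulation is then immediate, since $\la_i/(q+\la_i)=1-q/(q+\la_i)$, so the right-hand side is exactly $\P[\sum_i B_i=k]$ for independent $B_i\sim\mathrm{Bernoulli}(q/(q+\la_i))$. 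There is no hard analysis in this argument: it rests entirely on the observation that $Z(q)$ is simultaneously a generating function for forests by number of trees and the characteristic polynomial of $\LL$. The only step demanding care is the symmetric-function bookkeeping — reading off $a_k$ as the right elementary symmetric polynomial and re-indexing subsets by their complements so the power $q^k$ lands on precisely the indices of $I$; everything else is formal.
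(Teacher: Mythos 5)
Your argument is correct. Note that the paper itself gives no proof of this proposition: it defers entirely to the reference \cite{LA}, stating only that the result follows from Wilson's algorithm and the law of the loop-erased walk. Your derivation --- reading $Z(q)=\sum_k a_k q^k$ as the generating polynomial of forests graded by number of roots, identifying it with $\prod_{i}(q+\lambda_i)$ via the partition-function proposition, and matching coefficients to get $a_k = e_{n-k}(\lambda_0,\dots,\lambda_{n-1})$ --- is a complete and valid proof given that preceding proposition, and is in fact the standard route taken in \cite{LA}; the only inputs you use (polynomial identity in $q$, complement re-indexing of subsets, and the Poisson-binomial form of $\P\bigl[\sum_i B_i = k\bigr]$) are all handled correctly.
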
 
 Note that since $\lambda_0=0$, $B_0 = 1$ a.s. and we recover the fact that $ \abs{\rho(\Phi)} \ge 1$ a.s.. 
 
 \def \det {\mbox{det}} 
 \begin{prop} 
 \label{racines.prop}
 Set of roots. \\
 $\rho(\Phi)$ is a determinantal process on $\XX$ with kernel $K_q$ (see definition in \eqref{Kq.def}), i.e.
  for any subset  $A$ of $\XX$,
  \[ \P_q( A \subset \rho(\Phi)) = \det_{A}(K_q) \, , 
  \]
  where $\det_A$ is the minor defined by the rows and columns corresponding to $A$.
 \end{prop}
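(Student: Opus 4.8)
The plan is to reduce the statement to the classical matrix-tree theorem by adding a cemetery vertex, and then to a single linear-algebra identity (Jacobi's complementary-minor formula). First I would augment $\XX$ with an extra vertex $\Delta$, putting an edge of weight $q$ between $\Delta$ and every $x \in \XX$ while keeping all original edges. Orienting each tree of a spanning oriented forest toward its root and joining every root to $\Delta$ sets up a weight-preserving bijection between spanning oriented forests $\phi$ of $G$ and spanning trees of the augmented graph $\hat{G}$ rooted at $\Delta$: a tree with $|\rho(\phi)|$ edges to $\Delta$ picks up the factor $q^{|\rho(\phi)|}$, so its weight is exactly $w_q(\phi)$. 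In particular $\pi_q$ becomes the weighted spanning-tree measure on $\hat{G}$ rooted at $\Delta$, and the event $\acc{A \subset \rho(\Phi)}$ becomes the event that all edges $(a,\Delta)$, $a \in A$, belong to the tree.

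Next I would compute the weighted number $N_A := \sum_{\phi : A \subset \rho(\phi)} w_q(\phi)$ of such forests. Since the forced edges $(a,\Delta)$, $a \in A$, form a star and hence contain no cycle, the weighted count of spanning trees of $\hat{G}$ containing all of them equals $q^{|A|}$ times the weighted count of spanning trees of the contracted graph $\hat{G}/A$ obtained by identifying all vertices of $A$ with $\Delta$. The matrix-tree theorem identifies the latter with the determinant of the reduced Laplacian of $\hat{G}/A$, i.e.\ the Laplacian with the (merged) $\Delta$-row and column deleted. A direct inspection of degrees and off-diagonal entries shows that this reduced Laplacian is exactly the submatrix $(q\Id - \LL)_{BB}$ with $B := \XX \setminus A$: contracting $A$ into $\Delta$ leaves the degree $w(x)+q$ of each $x\in B$ unchanged and only redirects the already-deleted $\Delta$-edges. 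Hence $N_A = q^{|A|}\det\big((q\Id-\LL)_{BB}\big)$, while the same computation with $A=\emptyset$ recovers the partition-function formula $Z(q)=\det(q\Id-\LL)$.

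Putting these together gives
\[ \P_q(A \subset \rho(\Phi)) = \frac{N_A}{Z(q)} = q^{|A|}\,\frac{\det\big((q\Id-\LL)_{BB}\big)}{\det(q\Id-\LL)}, \qquad B = \XX\setminus A. \]
The last step is purely algebraic: Jacobi's identity for principal minors of an inverse states that, for any invertible $M$ and complementary index sets $A$ and $B=\XX\setminus A$, one has $\det\big((M^{-1})_{AA}\big) = \det(M_{BB})/\det M$. Applying this to $M = q\Id - \LL$ turns the right-hand side into $q^{|A|}\det\big((q\Id-\LL)^{-1}_{AA}\big) = \det\big((K_q)_{AA}\big) = \det_A(K_q)$, which is the claim.

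The routine parts are the bijection bookkeeping and the degree computation after contraction; the step I expect to need the most care is justifying the forced-edge/contraction identity in the weighted, $\Delta$-rooted setting, making sure the matrix-tree theorem is applied to the correct reduced Laplacian and that the orientations (all edges toward the root $\Delta$) stay consistent throughout. Once that is clean, Jacobi's complementary-minor identity does the rest.
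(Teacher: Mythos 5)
Your argument is correct, but note that the paper does not prove this proposition itself: it is quoted from \cite{LA}, and the route indicated there is probabilistic, resting on Wilson's algorithm and the explicit law of the loop-erased trajectory of the Markov process. Your proof is instead purely combinatorial and linear-algebraic: the cemetery-vertex bijection turns $\pi_q$ into a weighted arborescence measure on the augmented graph, the forced-edge/contraction form of the (directed, weighted) matrix-tree theorem gives $N_A = q^{|A|}\det\bigl((q\Id-\LL)_{BB}\bigr)$ with $B=\XX\setminus A$, and Jacobi's complementary-minor identity converts the ratio $N_A/Z(q)$ into $\det_A(K_q)$; the case $A=\emptyset$ recovers $Z(q)=\det(q\Id-\LL)$ along the way. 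The two checks you single out are indeed the only delicate points and both go through: since $w$ need not be symmetric one must use the in-arborescence version of the matrix-tree theorem with the row Laplacian $D-W$ and delete the root's row and column, and after contracting the star $\{(a,\Delta):a\in A\}$ the out-weight $w(x)+q$ of each $x\in B$ is unchanged while all edges into the merged root disappear from the reduced matrix, which is therefore exactly $(q\Id-\LL)_{BB}$. What your approach buys is a short, self-contained algebraic proof that needs neither Wilson's algorithm nor any loop-erasure computation; what the paper's cited approach buys is that the same machinery simultaneously yields the algorithmic sampling procedure, the coupling in $q$, and the hitting-time identities of Proposition \ref{hitting.prop}, which your determinantal computation does not provide.
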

  The following statement involves two independent sources of randomness, the Markov 
 process $X$ starting from $x$ defined on $(\Omega, \AA, P_x)$, and the random forest  $\Phi$ defined on 
 $(\Omega_f,\AA_f,\P_q)$. Integration on the product space  $(\Omega \times \Omega_f, \AA \otimes \AA_f)$
  w.r.t. the product measure $\P_{x,q}:=P_x \otimes \P_q$ is denoted by $\E_{x,q}$.
    \begin{prop} Hitting time of $\rho(\Phi)$.
  \label{hitting.prop}
  \begin{enumerate} 
  \item  For any $x \in \XX$, 
  \begin{equation}
  \E_{x,q}\cro{H_{\rho(\Phi)}} = \frac{\P_q\cro{\abs{\rho(\Phi)}>1}}{q} \, .
  \end{equation}
 \item For any $x \in \XX$, and $m \in \acc{1, \cdots, n}$,
   \begin{equation}
  \E_{x,q}\cro{H_{\rho(\Phi)}|_{ \abs{\rho(\Phi)}=m}} 
  = \frac{\P_q\cro{\abs{\rho(\Phi)}=m+1}}{q \P_q\cro{\abs{\rho(\Phi)}=m}} \, .
  \end{equation}
  \end{enumerate}
  \end{prop}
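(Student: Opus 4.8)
\emph{Proof plan.} First I would reduce the two statements to a single unconditional identity. Granting part (2), I would write
\[
\E_{x,q}\cro{H_{\rho(\Phi)}} = \sum_{m=1}^{n} \P_q\cro{\abs{\rho(\Phi)}=m}\,\E_{x,q}\cro{H_{\rho(\Phi)} \,\big|\, \abs{\rho(\Phi)}=m},
\]
and substitute (2); the factors $\P_q\cro{\abs{\rho(\Phi)}=m}$ cancel and, using $\abs{\rho(\Phi)}\le n$, the sum telescopes to $\tfrac1q\sum_{k\ge 2}\P_q\cro{\abs{\rho(\Phi)}=k}=\tfrac1q\P_q\cro{\abs{\rho(\Phi)}>1}$, which is part (1). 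Thus both parts follow from the unconditional identity
\[
q\,\E_{x,q}\cro{H_{\rho(\Phi)}\,\ind_{\abs{\rho(\Phi)}=m}} = \P_q\cro{\abs{\rho(\Phi)}=m+1}.
\]

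To prove this last identity I would condition on the full root set, $\rho(\Phi)=R$ with $\abs R=m$, and assemble two classical facts. On one hand, Wilson's algorithm (Propositions \ref{nombre.prop} and \ref{racines.prop} together with \eqref{part.exp}) gives $\P_q\cro{\rho(\Phi)=R}=q^{\abs R}\det(-\LL_{R^c R^c})/Z(q)$ with $R^c:=\XX\setminus R$ and $Z(q)=\det(q\Id-\LL)$, where by the all-minors matrix-tree theorem $\det(-\LL_{R^c R^c})$ equals the total weight $\sum_{\phi:\rho(\phi)=R}w(\phi)$ of the spanning forests rooted exactly at $R$. On the other hand, since $X$ is independent of $\Phi$, the inner expectation given $\rho(\Phi)=R$ is the plain hitting time $E_x\cro{H_R}$, which solves $\LL_{R^c R^c}h=-\ind$ on $R^c$ with $h=0$ on $R$, whence $E_x\cro{H_R}=\sum_{y\in R^c}(-\LL_{R^c R^c})^{-1}(x,y)$, the row-sum of the Green function of $X$ killed on $R$.

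I would then combine these through the matrix-forest theorem: $\det(-\LL_{R^c R^c})\,(-\LL_{R^c R^c})^{-1}(x,y)$ is the relevant signed cofactor, which equals the total weight of the spanning forests with root set $R\cup\acc{y}$ in which $x$ is routed to the extra root $y$. Summing over $y\in R^c$, multiplying by $q^{\abs R+1}/Z(q)$, and summing over all $R$ with $\abs R=m$, I expect every forest $\phi$ with $\abs{\rho(\phi)}=m+1$ to be counted exactly once — the extra root being identified as the root of the tree containing $x$, and $R$ as $\rho(\phi)$ minus that root. This collapses the left-hand side to $\sum_{\phi:\abs{\rho(\phi)}=m+1}w_q(\phi)/Z(q)=\P_q\cro{\abs{\rho(\Phi)}=m+1}$, and makes transparent why the result does not depend on $x$: the right-hand side is manifestly $x$-free, even though each summand $E_x\cro{H_R}$ is not. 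As a sanity check I would confirm the two-point graph, where the identity reduces to $E_{x,q}\cro{H_{\rho(\Phi)}}=1/(q+\lambda_1)$.

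The main obstacle is the bijective bookkeeping in the last step: correctly matching the cofactors of the sub-Laplacian $-\LL_{R^c R^c}$ to spanning forests carrying one additional root, and verifying that the map $\phi\mapsto(R,y)=(\rho(\phi)\setminus\acc{r_x},r_x)$, with $r_x$ the root of $x$'s tree, is a weight-preserving bijection onto the pairs appearing on the left. This requires care with edge orientations and signs in the all-minors matrix-tree and matrix-forest identities; the probabilistic content (the Green function formula for $E_x\cro{H_R}$ and the strong Markov property) and Wilson's law of $\rho(\Phi)$ are comparatively routine. A purely probabilistic alternative would realise both sides on one space via the rate-$q$ killed loop-erased construction of the roots, the factor $q$ arising as the killing rate integrated over the pre-hitting interval; but making that coupling rigorous runs into the distinction between hitting the root set and hitting the already-grown forest, which is exactly what the algebraic route sidesteps.
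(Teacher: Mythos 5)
Your argument is correct, and it is in substance the argument the authors rely on: the paper does not prove Proposition \ref{hitting.prop} itself (it is quoted from \cite{LA}), but the two ingredients you assemble are exactly the ones the paper uses in its own forest-surgery proofs, namely $\P_q\cro{\rho(\Phi)=R}=q^{\va{R}}Z_R(0)/Z(q)$ (equation \eqref{XbarR.eq}) and the forest representation of the Green function (equation \eqref{greenR.eq}, i.e.\ Lemma~3.1 of \cite{LA}), which already packages the matrix-forest cofactor identity you were worried about --- so you can bypass the determinantal sign bookkeeping entirely and write $Z_R(0)\,(-\LL_{R^cR^c})^{-1}(x,y)=\sum_{\phi:\,\rho(\phi)=R\cup\acc{y},\ x\leadsto_\phi y}w(\phi)$ directly. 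With that, your unconditional identity $q\,\E_{x,q}\cro{H_{\rho(\Phi)}\ind_{\va{\rho(\Phi)}=m}}=\P_q\cro{\va{\rho(\Phi)}=m+1}$ is precisely the chain of equalities displayed, for each fixed $x$, inside the paper's proof of Proposition \ref{gamma.prop}: one sums $E_x\cro{H_R}=\sum_{y\notin R}G_R(x,y)$ against $\P_q\cro{\rho(\Phi)=R}$ over $\va{R}=m$ and recognizes each forest with $m+1$ roots exactly once via $y=r_x(\phi)$ and $R=\rho(\phi)\setminus\acc{y}$. Your bijection is sound, including the boundary case $y=x$: forests in which $x$ is itself a root are picked up by the diagonal Green term $G_R(x,x)$, which is why no forest with $m+1$ roots is missed and why the answer is independent of $x$. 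The reduction of part (1) to part (2) by summing over $m$ (using $\va{\rho(\Phi)}\ge 1$ almost surely) is immediate, though note it is a plain summation rather than a telescoping.
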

 Note that these expressions do not depend on the starting point $x$, saying that in some sense, the
 roots of the random forest are "well spread" on $\XX$. 
 
\subsection{The downsampling procedure.}
In view of  the results of section \ref{rsf.sec}, $\bX:=\rho(\Phi)$  is a natural candidate as a 
downsampling of $\XX$. We give now estimates on the mean of $\va{\rho(\Phi)}$ which unlike Proposition \ref{nombre.prop},
do not depend on the knowledge of the eigenvalues $\la_i$. To this purpose, we introduce the mean value of the eigenvalues of 
$-\LL/\alpha$:
\[
m(\LL) := \frac{1}{\va{\XX}} \sum_{i=0}^{\va{\XX} - 1} \frac{\lambda_i}{\alpha} = \frac{\Tr(-\LL)}{\alpha \va{\XX}}\ \, . 
\]
Note that by \eqref{vp.eq}, $m(\LL) \in ]0,2]$.
We get then 
 \begin{prop} 
 \label{card.prop}
 Let $\bX= \rho(\Phi)$ and $\brX = \XX \setminus \bX$. Then
 \begin{enumerate}  
 \item $\E_q\cro{\va{\bX }} \geq \va{\XX} \frac{q}{q+\alpha} $ . 
  \item $\E_q \cro{\va{\brX}}  \geq  \va{\XX} \frac{ \alpha}{q + 2 \alpha} m(\LL)$. 
\item  For any $r \in [0,1]$, set $\RR_r := \acc{x \in \XX \mbox{ s.t. } w(x) \geq r \alpha}$. 
\[ \E_q\cro{\va{\brX \cap \RR_r}} \geq \va{\RR_r} \frac{r \alpha}{q + 2\alpha} \, . 
\]
 \end{enumerate} 
 \end{prop}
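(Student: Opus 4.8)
The plan is to reduce all three estimates to pointwise bounds on the diagonal of the Green kernel $K_q$ and then sum. Since $\rho(\Phi)$ is a determinantal process with kernel $K_q$ (Proposition \ref{racines.prop}), I would first apply the determinantal formula to singletons $A=\acc{x}$, which gives $\P_q\pare{x\in\rho(\Phi)}=K_q(x,x)$. Writing $\va{\bX}=\sum_{x\in\XX}\ind_{x\in\rho(\Phi)}$ and using linearity of expectation then yields, for any $A\subseteq\XX$,
\[ \E_q\cro{\va{\bX\cap A}}=\sum_{x\in A}K_q(x,x),\qquad \E_q\cro{\va{\brX\cap A}}=\sum_{x\in A}\pare{1-K_q(x,x)}. \]
Thus everything comes down to controlling $K_q(x,x)$ from below (for (1)) and $1-K_q(x,x)$ from below (for (2) and (3)).

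For (1) I would use the probabilistic expression $K_q(x,x)=P_x\cro{X(T_q)=x}$ from \eqref{Kq.def}. The event that $X$ makes no jump before the independent exponential clock $T_q$ rings forces $X(T_q)=x$, and this event is a race between an $\mathrm{Exp}(q)$ clock and the $\mathrm{Exp}(w(x))$ holding time at $x$, so it has probability $q/(q+w(x))$. Hence $K_q(x,x)\ge q/(q+w(x))\ge q/(q+\alpha)$, the last step using $w(x)\le\alpha$; summing over $x\in\XX$ gives (1) immediately.

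For (2) and (3) I need a lower bound on $1-K_q(x,x)$ that sees the rate $w(x)$. The key algebraic input is the identity $\Id-K_q=\tfrac1q K_q(-\LL)$, which follows directly from $K_q=q(q\Id-\LL)^{-1}$. Reading off the $(x,x)$ entry and separating the diagonal term $w(x)$ of $-\LL$ from the off-diagonal ones gives
\[ 1-K_q(x,x)=\frac1q\cro{w(x)K_q(x,x)-\sum_{y\ne x}K_q(x,y)w(y,x)}. \]
Bounding $w(y,x)\le w(y)\le\alpha$ and using $\sum_{y\ne x}K_q(x,y)=1-K_q(x,x)$, the subtracted sum is at most $\alpha\pare{1-K_q(x,x)}$. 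Setting $u:=1-K_q(x,x)$, so that $K_q(x,x)=1-u$, this becomes the scalar inequality $qu\ge (1-u)w(x)-\alpha u$, which rearranges to $u\ge w(x)/(q+w(x)+\alpha)\ge w(x)/(q+2\alpha)$, again using $w(x)\le\alpha$. Summing over all $x\in\XX$ and recalling $\sum_x w(x)=\Tr(-\LL)=\alpha\va{\XX}\,m(\LL)$ gives (2); restricting the sum to $\RR_r$ and using $w(x)\ge r\alpha$ there gives (3).

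The routine parts are the determinantal reduction and the two summations. The step that requires care, and which I expect to be the main obstacle, is the lower bound on $1-K_q(x,x)$: one must pass through the matrix identity $\Id-K_q=\tfrac1q K_q(-\LL)$, correctly bound the off-diagonal contribution $\sum_{y\ne x}K_q(x,y)w(y,x)$ (it matters that $w(y,x)\le\alpha$ and that the $K_q(x,y)$ sum to $1-K_q(x,x)$ rather than to $1$), and then solve the resulting self-referential inequality for $u$. Obtaining the clean denominator $q+2\alpha$ hinges on applying $w(x)\le\alpha$ only at the very end.
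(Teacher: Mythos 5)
Your proof is correct, and it reaches the two key pointwise bounds $K_q(x,x)\ge q/(q+w(x))$ and $1-K_q(x,x)\ge w(x)/(q+2\alpha)$ by a genuinely different route than the paper. The paper isolates these bounds in Lemma \ref{Kq.lem} and proves them spectrally: it expands $K_q(x,x)=\sum_i \frac{q}{q+\lambda_i}e_i(x)^2\mu(x)$, observes that $(e_i(x)^2\mu(x))_i$ is a probability distribution, and applies Jensen's inequality to the convex map $\lambda\mapsto q/(q+\lambda)$ for the first bound, then bounds $\sum_i\frac{\lambda_i}{q+\lambda_i}e_i(x)^2\mu(x)$ below by $\frac{1}{q+\lambda_{n-1}}\sum_i\lambda_i e_i(x)^2\mu(x)=\frac{w(x)}{q+\lambda_{n-1}}$ for the second. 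You avoid the spectral decomposition entirely: the first bound comes from the exponential-clock race $P_x\cro{T_q<\mbox{holding time}}=q/(q+w(x))$, which is a clean probabilistic argument, and the second from the resolvent identity $\Id-K_q=\frac1q K_q(-\LL)$ combined with stochasticity of $K_q$ and the crude bound $w(y,x)\le\alpha$, yielding $1-K_q(x,x)\ge w(x)/(q+\alpha+w(x))$. Your intermediate constants ($q+\alpha+w(x)$ versus the paper's $q+\lambda_{n-1}$) are incomparable in general, but both collapse to $q+2\alpha$. The paper's spectral route buys the sharper statement in terms of $\lambda_{n-1}$ and lets part (2) be read off directly from Proposition \ref{nombre.prop} as $\sum_i\frac{\lambda_i}{q+\lambda_i}$; your route is more elementary, requires no knowledge of the eigenvalues, and the determinantal reduction and final summations are identical to the paper's.
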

\noindent
{\it Remark:} Hence, in a loose sense, $\bX$ contains a given proportion of points in $\XX$, and  $\brX$  contains a given proportion of the "rapid" points in $\XX$, i.e. points for which the rate of escape is high.  
Since they do not depend on the spectral decomposition of $\LL$, these estimates can be helpful
concerning the choice of $q$. Taking for instance $q \in [2\alpha/3;\alpha]$ ensures that the mean proportion of
sampled points is greater than $2/5$, and that  the mean proportion of decimated points is greater than
$m(\LL)/3$.  

\medskip

The proof of Proposition \ref{card.prop} relies on the following lemma.
\begin{lem}
\label{Kq.lem}
For any $x \in \XX$, and any $q>0$,
\begin{equation}
\label{min-Kq.eq}
K_q(x,x) \geq \frac{q}{q+w(x)} \, .
\end{equation}
\begin{equation}
\label{maj-Kq.eq}
1-K_q(x,x) \geq  \frac{w(x)}{q+\lambda_{n-1}} \geq \frac{w(x)}{q+2\alpha} \, .
\end{equation}
\end{lem}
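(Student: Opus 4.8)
The plan is to reduce both inequalities to the spectral representation of $K_q$ together with elementary convexity and monotonicity estimates. Since $-\LL$ is symmetric positive on $\ell_2(\XX,\mu)$ with orthonormal eigenbasis $(e_i)_{i=0}^{n-1}$ and eigenvalues $(\lambda_i)$, and since $q\Id - \LL = q\Id + (-\LL)$, the operator $K_q = q(q\Id-\LL)^{-1}$ has the same eigenvectors with eigenvalues $q/(q+\lambda_i)$. Taking its kernel against the reference measure, the diagonal entry is
\[ K_q(x,x) = \sum_{i=0}^{n-1} \frac{q}{q+\lambda_i}\, e_i(x)^2 \mu(x). \]
First I would record two facts about the weights $p_i(x) := e_i(x)^2 \mu(x)$. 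By the completeness relation $\sum_i e_i(x) e_i(y)\mu(y) = \delta_{xy}$ read on the diagonal, $\sum_i p_i(x) = 1$, so $(p_i(x))_i$ is a probability distribution on the eigenvalues. Reading the diagonal of the spectral expansion of $-\LL$ gives $\sum_i \lambda_i p_i(x) = (-\LL)(x,x) = w(x)$; that is, $w(x)$ is exactly the $p(x)$-mean of the $\lambda_i$.

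For \eqref{min-Kq.eq}, the map $\lambda \mapsto q/(q+\lambda)$ is convex on $[0,\infty)$, so Jensen's inequality applied to the distribution $(p_i(x))_i$ yields
\[ K_q(x,x) = \sum_i p_i(x)\frac{q}{q+\lambda_i} \geq \frac{q}{q+\sum_i p_i(x)\lambda_i} = \frac{q}{q+w(x)}. \]
Equivalently, and perhaps more transparently, one may argue probabilistically: $K_q(x,x) = P_x[X(T_q)=x]$ is at least the probability that the independent clock $T_q \sim \mathrm{Exp}(q)$ rings before the first jump of $X$ out of $x$, which happens at total rate $w(x)$, and that probability is $q/(q+w(x))$.

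For \eqref{maj-Kq.eq}, I would use $\sum_i p_i(x) = 1$ to write
\[ 1 - K_q(x,x) = \sum_i \Big(1 - \frac{q}{q+\lambda_i}\Big) p_i(x) = \sum_i \frac{\lambda_i}{q+\lambda_i}\, p_i(x). \]
Since $\lambda \mapsto \lambda/(q+\lambda)$ is increasing and $\lambda_i \leq \lambda_{n-1}$, each factor satisfies $\lambda_i/(q+\lambda_i) \geq \lambda_i/(q+\lambda_{n-1})$; pulling out the constant and using $\sum_i \lambda_i p_i(x) = w(x)$ gives $1 - K_q(x,x) \geq w(x)/(q+\lambda_{n-1})$. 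The last bound then follows from $\lambda_{n-1} \leq 2\alpha$, which is \eqref{vp.eq}. The only real bookkeeping to be careful about is the measure-weighted form of the spectral kernel, namely getting the factor $e_i(x)^2\mu(x)$ rather than $e_i(x)^2$, and the identification of the diagonal entry $(-\LL)(x,x)$ with $w(x)$. Once these two facts are in place both inequalities are immediate, and I expect no serious obstacle beyond this setup.
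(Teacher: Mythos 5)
Your proof is correct and follows essentially the same route as the paper: the spectral expansion of $K_q(x,x)$ against the probability weights $e_i(x)^2\mu(x)$, Jensen's inequality with the convex map $\lambda \mapsto q/(q+\lambda)$ for \eqref{min-Kq.eq}, and the bound $\lambda_i/(q+\lambda_i) \geq \lambda_i/(q+\lambda_{n-1})$ together with $\sum_i \lambda_i e_i(x)^2\mu(x) = w(x)$ for \eqref{maj-Kq.eq}. Your probabilistic aside for the first inequality (the exponential clock $T_q$ ringing before the first jump out of $x$) is a valid and pleasant alternative, but the core argument matches the paper's.
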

\begin{proof}
Remind that $(e_i,i=0, \cdots n-1)$ is an orthonormal basis
of eigenfunctions of  $-\LL$ in $\ell_2(\XX,\mu)$, and let $(\ind_x, x \in \XX)$ be the canonical basis of 
$\R^{\XX}$ ($\ind_x(y) = 0$ for $y \ne x$ and $\ind_x(x)=1$). Note that 
\[
w(x)=- \LL(x,x) = - \LL \ind_x (x)=\sum_{i=0}^{n-1} \lambda_i \bra {e_i,\ind_x} e_i(x)
= \sum_{i=0}^{n-1} \lambda_i e_i(x)^2 \mu(x) \, .
\]
In the same way, 
\[ K_q(x,x) = \sum_{i=0}^{n-1} \frac{q}{q+\lambda_i} e_i(x)^2 \mu(x)
\, . 
\]
Note that for any $x \in \XX$, 
\[ \sum_{i=0}^{n-1} e_i(x)^2 \mu(x) =  \sum_{i=0}^{n-1} \bra{\ind_x,e_i}^2 \frac{1}{\mu(x)}
= \frac{\nor{\ind_x}^2}{\mu(x)} = 1 
\, ;
\]
Hence, for any $x \in \XX$, $(e_i^2(x) \mu(x); i=0, \cdots, n-1)$ is a probability  distribution
on $\acc{0,1, \cdots, n-1}$. The function $\lambda \in \R^+ \mapsto q/(q+\lambda)$ 
being convex, it follows 
from Jensen's inequality, that for any $x \in \XX$, 
\[ K_q(x,x) \geq \frac{q}{q + \sum_{i=0}^{n-1} \lambda_i e_i(x)^2 \mu(x)} = \frac{q}{q+w(x)}
\, . 
\]
Moreover,  
\begin{align*}
1-K_q(x,x) & = \sum_{i=0}^{n-1} \frac{\lambda_i}{q+\lambda_i} e_i(x)^2 \mu(x)
\, ,
\\
& \geq  \frac{1}{q+\lambda_{n-1}} \sum_{i=0}^{n-1} \lambda_i e_i(x)^2 \mu(x)
\, ,
\\
& = \frac{w(x)}{q+\lambda_{n-1}}
\, .
\end{align*}
This ends the proof of Lemma \ref{Kq.lem}, since $\lambda_{n-1}\leq 2 \alpha$.

\end{proof}
We return now to the proof of Proposition \ref{card.prop}. 
\begin{proof}
\begin{enumerate}
\item It follows from Proposition \ref{racines.prop} that 
$ \E_q\cro{\va{\bX}} = \sum_{x \in \XX} \P_q\cro{x \in \rho(\Phi)}
= \sum_{x \in \XX} K_q(x,x)$. Using \eqref{min-Kq.eq} and the definition of $\alpha$, we obtain 
that $ \E_q\cro{\va{\bX}} \geq \frac{q}{q+ \alpha} \va{\XX}$.
\item By Proposition \ref{nombre.prop}, 
$ \E_q\cro{\va{\brX}}  = \sum_{i=0}^{n-1} \frac{\la_i}{q+\la_i} \geq \frac{1}{q+\la_{n-1}} \Tr(-\LL)$ .
\item By \eqref{maj-Kq.eq}, $\E_q\cro{\va{\brX \cap \RR_r}} = \sum_{x \in \RR_r} \P_q\cro{x \notin \rho(\Phi)} = \sum_{x \in \RR_r} (1-K_q(x,x))
\geq \sum_{x \in \RR_r} \frac{w(x)}{q+ \la_{n-1}}   \geq \va{\RR_r} \frac{r \alpha}{q + \la_{n-1}}$. 
\end{enumerate}
\end{proof}


\section{The weighting and filtering  procedures. }
\label{weight-filter.sec}
 This section aims to describe the weighting and the filtering procedures we use in the multiresolution scheme,
 once $\bX$ has been chosen. Hence, we will assume 
 throughout this section that $\bX$ is any  proper subset of $\XX$. 

\subsection{The weighting procedure.}
\label{Lbar.sec}
   Let us  define for any  $\bx,\by \in \bX$, 
 \begin{equation}
 \label{Pbar.def}
 \tilde{P}(\bx,\by):=P_{\bx}\cro{X(H_{\bX}^+)=\by} \, . 
 \end{equation}
 $\tilde{P}$ is a stochastic matrix on $\bX$, and one can associate to $\tilde{P}$ a weight function 
 $\bar{w}$ and a Markov generator
 $\bLL$ on $\bX$, which are unique up to multiplicative constants. The next 
 lemma explains how to compute $\bLL$ from $\LL$. 
  
 \begin{lem} 
 \label{LbarSchur.lem} Let $\bX$ be any  proper subset of $\XX$, and set $\brX=\XX \setminus \bX$. 
 Let  $\tilde{P}$ be the stochastic matrix on $\bX$  defined by \eqref{Pbar.def}. 
Then,  $\LL_{\brX \brX}$ is invertible. Let $\bar{\LL}$ be the Schur complement of $\LL_{\brX \brX}$ in $\LL$. $\bar{\LL}$ is an irreducible Markov generator on $\bX$, 
 $\tilde{P} = \Id + \bLL/\alpha$, and for any $\bx, \by \in \bX$, 
 $\mu(\bx) \bLL(\bx,\by) = \mu(\by)  \bLL(\by,\bx) $. 
 \end{lem}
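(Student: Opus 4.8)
The plan is to establish the four claims in sequence, using the embedded chain $\hat X$ with transition matrix $P = \Id + \LL/\alpha$ as the bridge between the continuous-time quantities and the block algebra.

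\emph{Invertibility of $\LL_{\brX \brX}$.} I would show that $-\LL_{\brX \brX}$ is positive definite. Recall that $-\LL$ is positive symmetric on $\ell_2(\XX,\mu)$ with $\ker(-\LL) = \R \ind$, since $\la_1 > 0$ by \eqref{vp.eq}. Given $u \in \R^{\brX}$ with $\LL_{\brX \brX} u = 0$, extend it by $0$ on $\bX$ to a function $\tilde u$ on $\XX$. Because $\tilde u$ vanishes on $\bX$, a direct check gives $\bra{-\LL \tilde u, \tilde u} = \bra{-\LL_{\brX \brX} u, u} = 0$, so $\tilde u \in \ker(-\LL) = \R \ind$; as $\tilde u = 0$ on the nonempty set $\bX$, we get $\tilde u \equiv 0$, hence $u = 0$. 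Thus $\LL_{\brX \brX}$ is invertible; equivalently $\Id_{\brX} - P_{\brX \brX}$ is invertible and $\sum_{k \geq 0} P_{\brX \brX}^k = (\Id_{\brX} - P_{\brX \brX})^{-1}$.

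\emph{The identity $\tilde P = \Id + \bLL/\alpha$.} First I would rewrite $\tilde P$ through the embedded chain. Since $X$ is constant between Poisson jumps and jumps according to $\hat X$, starting from $\bx \in \bX$ one has $X(H_{\bX}^+) = \hat X_N$ with $N = \inf\acc{k \geq 1 : \hat X_k \in \bX}$; decomposing on the first step and the subsequent excursion inside $\brX$ yields
\[ \tilde P = P_{\bX \bX} + P_{\bX \brX}(\Id_{\brX} - P_{\brX \brX})^{-1} P_{\brX \bX}, \]
the series converging by the previous step. Substituting $\LL = \alpha(P - \Id)$ block by block into the Schur complement $\bLL = \LL_{\bX \bX} - \LL_{\bX \brX}(\LL_{\brX \brX})^{-1}\LL_{\brX \bX}$, the factors of $\alpha$ combine to give $\bLL = \alpha\pare{P_{\bX \bX} - \Id_{\bX}} + \alpha P_{\bX \brX}(\Id_{\brX} - P_{\brX \brX})^{-1}P_{\brX \bX} = \alpha(\tilde P - \Id_{\bX})$, that is $\tilde P = \Id_{\bX} + \bLL/\alpha$.

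\emph{Generator and irreducibility.} Since $\tilde P$ is stochastic, $\bLL = \alpha(\tilde P - \Id_{\bX})$ has nonnegative off-diagonal entries and vanishing row sums, hence is a Markov generator. For irreducibility, given $\bx, \by \in \bX$, irreducibility of $\hat X$ provides a positive-probability path from $\bx$ to $\by$ in $\XX$; recording its successive visits to $\bX$ produces a positive-probability path for $\tilde P$, so $\tilde P$, and therefore $\bLL$, is irreducible.

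\emph{Reversibility.} Let $D = \mathrm{diag}(\mu)$ and consider the symmetrization $M := D^{1/2}(-\LL)D^{-1/2}$, which by \eqref{symmetry} is symmetric and positive semidefinite. Its $\brX \brX$ block $D_{\brX}^{1/2}(-\LL_{\brX \brX})D_{\brX}^{-1/2}$ is invertible by the first step, so Proposition \ref{schur.prop} applies and the Schur complement of this block in $M$ is symmetric. Since conjugation by the block-diagonal matrix $\mathrm{diag}(D_{\bX}^{1/2}, D_{\brX}^{1/2})$ commutes with the Schur-complement operation (the $D_{\brX}^{1/2}$ factors cancel in $\LL_{\bX \brX}(\LL_{\brX \brX})^{-1}\LL_{\brX \bX}$), this Schur complement equals $D_{\bX}^{1/2}(-\bLL)D_{\bX}^{-1/2}$, whose symmetry is precisely $\mu(\bx)\bLL(\bx,\by) = \mu(\by)\bLL(\by,\bx)$.

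\emph{Main obstacle.} The delicate point is the identification of the continuous-time quantity $\tilde P(\bx,\by) = P_{\bx}\cro{X(H_{\bX}^+) = \by}$ with the discrete trace-chain formula above, together with the convergence of the Neumann series; once this and the invertibility of $\LL_{\brX \brX}$ are secured, the remaining steps reduce to formal manipulations of the block decomposition.
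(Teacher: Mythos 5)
Your proof is correct and follows essentially the same route as the paper's: kernel-triviality of $\LL_{\brX\brX}$ via extension by zero and the quadratic form, a first-step (Markov) decomposition identifying $\tilde P$ with the Schur complement, the generator property from stochasticity of $\tilde P$, irreducibility by tracing paths through $\bX$, and reversibility from the symmetry of the Schur complement. The only differences are cosmetic — you invert $\Id_{\brX}-P_{\brX\brX}$ via the Neumann series (whose convergence, for a nonnegative substochastic block, follows from the finiteness of the expected number of visits before hitting $\bX$, or from Perron--Frobenius together with your positive-definiteness step) where the paper simply solves the linear system, and you spell out the reversibility computation that the paper states as immediate.
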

 
 \begin{proof}
 Let $Q(x, \bx) := P_x \cro{X(H^+_{\bX}) = \bx}$, so that $\tilde{P}=Q_{\bX \bX}$. Using Markov property at time $\tau_1$
 leads to
 \[ Q(x, \bx) = \sum_{z \in \XX} P(x,z) P_z \cro{X(H_{\bX}) = \bx}
 = P(x, \bx) + \sum_{z \in \brX} P(x,z) Q(z,\bx). \] 
 Hence, since $P= \Id + \LL/\alpha$,
\begin{equation}
\label{LbarSchur.eq}
\left\{
 \begin{array}{lcl} 
  Q_{\bX \bX} & = & P_{\bX \bX} +  P_{\bX \brX} Q_{\brX \bX} 
 \\
 Q_{\brX \bX} & = & 
 P_{\brX \bX} +  P_{\brX \brX} Q_{\brX \bX} 
 \end{array}
 \right.
 \Longleftrightarrow  
 \left\{
 \begin{array}{rcl} 
\tilde{P} & = & P_{\bX \bX} +  P_{\bX \brX} Q_{\brX \bX} 
 \\
-  \LL_{\brX \brX} Q_{\brX \bX} & = & \LL_{\brX \bX} 
\end{array}
\right. 
\, .
\end{equation}
We are now going to prove that 
$- \LL_{\brX \brX}$ is invertible. Let $\phi_{\brX}$ 
be a vector of $\mbox{Ker}(- \LL_{\brX \brX})$, and let $\psi:= (0_{\bX}, \phi_{\brX})$. 
We get on one hand $\bra{\psi, -\LL \psi}=0$. On the other hand,  
$\bra{\psi, -\LL \psi} = \sum_{i=0}^{n-1} \lambda_i \bra{\psi,e_i}^2$. 
Therefore, for any $i \in  \acc{0, \cdots,n-1}$, $\lambda_i \bra{\psi,e_i}^2$. 
$\lambda_0=0$ being simple, $ \bra{\psi,e_i} = 0$ for all $i \geq 1$. Hence, 
$\psi$ is colinear to $e_0 =\ind$. If $\brX \neq \emptyset$ and $\bX \neq \emptyset$, this implies that 
$\phi_{\brX} \equiv 0$. Hence $- \LL_{\brX \brX}$ is invertible. 
Going back to \eqref{LbarSchur.eq}, we obtain that 
 \begin{equation}
 \label{Q.eq}
 Q_{\brX \bX} =   (-  \LL_{\brX \brX})^{-1} \LL_{\brX \bX} \, , 
 \end{equation}
and  \begin{equation}
  \tilde{P} =  \Id_{\bX} + \frac{1}{\alpha} \LL_{\bX \bX} - \frac{1}{\alpha} \LL_{\bX \brX}  
  			( \LL_{\brX \brX})^{-1} \LL_{\brX \bX} = \Id_{\bX} + \frac{\bLL}{\alpha} \, .
\end{equation}
Therefore, $\forall \bx, \by \in \bX$, $\bx \neq \by$, $\bLL(\bx,\by) = \alpha \tilde{P}(\bx, \by) \geq 0$, 
and $\bLL \ind_{\bX}= 0$. $\bLL$ is thus a Markov generator on $\bX$. 
The fact that $\bLL$ is reversible w.r.t. $\mu$ is a direct consequence of the fact that 
$\bLL$ is a Schur complement, and of the reversibility of $\LL$ w.r.t. $\mu$. Concerning the irreducibility statement, if $\bx$ and $\by$ are two distinct elements of $\bX$, the irreducibility of $\LL$ implies that there exists a path in $\XX$ going from 
$\bx$ to $\by$ with positive probability for  $P$. Sampling this path on the passage times on $\bX$, we obtain a path in $\bX$ going from $\bx$ to $\by$ with positive probability for $\tilde{P}$.  
\end{proof}
 
 Once $\bLL$ is defined, $\bw$ and $\ba$ are defined in the same way that $w$ and $\alpha$ were defined from $\LL$, i.e.
 \begin{equation}
 \label{bar-alpha.eq}
  \forall \bx \in \bX \, , \, \, \bw(\bx) := - \bLL(\bx,\bx) \, ; \, \, 
 \ba := \max_{\bx \in \bX}  \bw(\bx) \, . 
 \end{equation}
 
 \subsection{The filtering procedure.}
 \label{filtering.sec}
We have now   to define the "approximation" and the "detail"
of a given function $f $ on $\XX$ , and to say how to reconstruct $f$ from the approximation and the detail. This is the aim of this section.

\subsubsection{The analysis operator.}
Let $q' > 0$ be fixed and define the collection $(\nu_{\bx}, \bx \in \bX)$ of probability measures
on $\XX$ by 
\[ \nu_{\bx}(x) = K_{q'}(\bx,x) \, . 
\]
  Let $(\rho_{\brx}, \brx \in \brX)$ be the collection  of signed measures on $\XX$ defined by 
\begin{equation}
\label{rho.def}
\rho_{\brx}(x) = (K_{q'}-Id)(\brx, x) \, . 
\end{equation}
We associate to the collection of measures $(\nu_{\bx}, \rho_{\brx})_{\bx \in \bX, \brx \in \brX}$ the corresponding
collection of functions $(\phi_{\bx}, \psi_{\brx})_{\bx \in \bX, \brx \in \brX}$:
\begin{equation}
\label{ondelettes.def}
\phi_{\bx}(x) := \nu_{\bx}^{\star}(x)=\frac{\nu_{\bx}(x)}{\mu(x)}  \, \, ;
 \psi_{\brx}(x) := \rho^{\star}_{\brx}(x)  = \frac{\rho_{\brx}(x)}{ \mu(x)} \, . 
\end{equation}
The functions $(\psi_{\brx}, \brx \in \brX)$ play the role of "wavelets" in our proposition of a multiresolution analysis.  Note that the functions $\phi_{\bx}$ are naturally normalized in $\ell_1(\XX,\mu)$, but this is not the case for the functions  $\psi_{\brx}$ ($\nor{ \psi_{\brx}}_1 = 2 (1-K_{q'}(\brx,\brx))$). Note also that
for any  $\brx \in \brX$,  $ \bra{\psi_{\brx},\ind} = 0$.
In addition, we get the following result:
\begin{lem}
The family $(\phi_{\bx}, \psi_{\brx})_{\bx \in \bX, \brx \in \brX}$ is a basis of $\ell_2(\XX,\mu)$.
\end{lem}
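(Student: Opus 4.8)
The plan is to use the fact that $\ell_2(\XX,\mu)$ has dimension $n=\va{\bX}+\va{\brX}$ while the proposed family contains exactly $n$ functions, so that being a basis is equivalent to linear independence. First I would pass from the functions to the measures: since multiplication by $\mu$ is a bijection of $\R^{\XX}$ (because $\mu(x)>0$ for all $x$ by irreducibility), a relation $\sum_{\bx} a_{\bx}\phi_{\bx}+\sum_{\brx} b_{\brx}\psi_{\brx}=0$ holds in $\ell_2(\XX,\mu)$ if and only if the corresponding combination of the measures $\nu_{\bx}$ and $\rho_{\brx}$ vanishes. Viewing these measures as the rows of an $n\times n$ matrix $M$ indexed by $\XX\times\XX$, linear independence is therefore equivalent to $\det M\neq 0$.

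Next I would identify $M$ explicitly. By \eqref{rho.def} and the definitions of $\nu_{\bx}$ and $\rho_{\brx}$, the rows of $M$ indexed by $\bX$ are the rows of $K_{q'}$, while those indexed by $\brX$ are the rows of $K_{q'}-\Id$; hence $M=K_{q'}-E$, where $E$ is the block matrix equal to $\Id_{\brX}$ on $\brX\times\brX$ and $0$ elsewhere. To compute $\det M$ I would write $K_{q'}=q'A^{-1}$ with $A:=q'\Id-\LL$, and factor $M=A^{-1}\pare{q'\Id-AE}$. A direct block computation (ordering $\XX=\bX\cup\brX$) shows that $q'\Id-AE$ is block upper triangular, with diagonal blocks $q'\Id_{\bX}$ and $\LL_{\brX\brX}$; thus $\det\pare{q'\Id-AE}=(q')^{\va{\bX}}\det\pare{\LL_{\brX\brX}}$ and
\[ \det M = \frac{(q')^{\va{\bX}}\,\det\pare{\LL_{\brX\brX}}}{\det A}. \]
It then remains to check that both the numerator and the denominator are nonzero. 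The denominator is $\det A=Z(q')=\prod_{i}(q'+\la_i)>0$ by \eqref{part.exp} and \eqref{vp.eq}, since $q'>0$ and each $\la_i\geq 0$; and $\det\pare{\LL_{\brX\brX}}\neq 0$ because $\LL_{\brX\brX}$ is invertible by Lemma \ref{LbarSchur.lem}. Hence $\det M\neq 0$ and the family is a basis.

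The proof is genuinely short once set up, so there is no single hard obstacle; the one idea that does the work is the factorization $M=A^{-1}\pare{q'\Id-AE}$, which replaces the opaque matrix $M$ by a product with a transparently triangular factor, and the only external input is the invertibility of $\LL_{\brX\brX}$, which is exactly the content of Lemma \ref{LbarSchur.lem}. An alternative, equally quick route would be to invoke the reconstruction formula (Proposition \ref{reconstruction.prop}): it exhibits $(\bR,\brR)$ as a left inverse of the analysis operator $U:f\mapsto(\baf,\brf)=\pare{\bra{\phi_{\bx},f},\bra{\psi_{\brx},f}}$, so $U$ is injective on the $n$-dimensional space $\ell_2(\XX,\mu)$, forcing the $n$ representing vectors $(\phi_{\bx},\psi_{\brx})$ to form a basis. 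I would nonetheless prefer the determinant argument, as it is self-contained and does not depend on the placement of the reconstruction formula in the text.
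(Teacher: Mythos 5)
Your proof is correct, but it takes a genuinely different route from the paper's. The paper keeps the block form of $M$ and argues via Schur complements: it first shows $K_{q'}|_{\brX,\brX}-\Id_{\brX}$ is invertible by a spectral argument (the eigenvalue $0$ of $K_{q'}-\Id$ is simple with eigenfunction $\ind$, which cannot be supported in $\brX$), then identifies $S_M(K_{q'}|_{\brX,\brX}-\Id_{\brX})$ as $\Id_{\bX}+S_{K_{q'}-\mbox{\scriptsize{Id}}}(\cdots)$ and uses the interlacing property of Proposition \ref{schur.prop}(4) to see that its eigenvalues exceed $1+\lambda_{min}(K_{q'}-\Id)>0$. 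Your factorization $M=A^{-1}\pare{q'\Id-AE}$ with $A=q'\Id-\LL$ bypasses all of this: the block computation is exactly the triangular matrix that appears later in the paper's proof of the reconstruction formula (Proposition \ref{reconstruction.prop}), and it yields the explicit value $\det M=(q')^{\va{\bX}}\det\pare{\LL_{\brX\brX}}/Z(q')$, which is more informative than the paper's mere sign statement. You correctly outsource the only nontrivial input, the invertibility of $\LL_{\brX\brX}$, to Lemma \ref{LbarSchur.lem}, which precedes this lemma in the text, so there is no circularity; note that under the hood this is the same spectral fact (simplicity of $\lambda_0=0$) that powers the paper's argument, just invoked once instead of twice. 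Your determinant route is the more economical of the two; what the paper's approach buys in exchange is quantitative spectral control on the blocks of $M$ (lower bounds on eigenvalues of the Schur complement), which a bare nonvanishing determinant does not provide.
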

\begin{proof}
Since $\mu(x) > 0$ for all $x \in \XX$, it is equivalent to prove
that the matrix $M$ whose row vectors are given by $(\nu_{\bx}, \rho_{\brx})_{\bx \in \bX, \brx \in \brX}$ is invertible.
$M$ can be rewritten in terms of its block matrices according to the rows and columns indexed by $\bX$ and 
$\brX$:
\[ M = \begin{pmatrix} K_{q'}|_{\bX,\bX} & K_{q'}|_{\bX, \brX} \\
						K_{q'}|_{\brX, \bX} & K_{q'}|_{\brX, \brX} - \Id_{\brX} \end{pmatrix} \, .
\]
Note that the eigenvalues of $K_{q'}-\Id$ are the non positive real numbers $(-\frac{\lambda_j}{q'+ \lambda_j}; j=0, \cdots 
n-1)$. Therefore $\lambda_{max}(K_{q'}|_{\brX, \brX} -\Id_{\brX}) \leq \lambda_{max}(K_{q'} -\Id)= 0$. Equality holds
iff there exists an eigenfunction of $K_{q'} -\Id$ associated to the eigenvalue 0, which is contained in 
$\Span(\ind_{\acc{\brx}}; \brx \in \brX)$. But, 0 is a simple eigenvalue of $K_{q'} -\Id$  associated to the function $\ind$. 
Since $\brX \neq \XX$, we get that $\lambda_{max}(K_{q'}|_{\brX, \brX} -\Id_{\brX}) < 0$. 
Thus, $K_{q'}|_{\brX, \brX} -\Id_{\brX}$  is invertible and by Proposition \ref{schur.prop}, 
\begin{equation}
\label{detM.eq}
 \det(M) = \det(K_{q'}|_{\brX, \brX} - \Id_{\brX}) \, \det(S_{M}(K_{q'}|_{\brX, \brX} - \Id_{\brX})) \, .
\end{equation}

Concerning the Schur complement  in \eqref{detM.eq}, note that 
\begin{align*}
S_{M}(K_{q'}|_{\brX, \brX} - \Id_{\brX}) 
& = K_{q'}|_{\bX,\bX} - K_{q'}|_{\bX, \brX} (K_{q'}|_{\brX, \brX} - \Id_{\brX})^{-1} K_{q'}|_{\brX, \bX} 
\\ & = \Id_{\bX} + S_{K_{q'}-\mbox{\scriptsize{Id}}}(K_{q'}|_{\brX, \brX} - \Id_{\brX}) \, .
\end{align*}
$K_{q'}-\Id$ is a symmetric negative operator in $\ell^2(\XX, \mu)$ such that  $\lambda_{min}(K_{q'}-\Id)= - \frac{\lambda_{n-1}}{q'+\lambda_{n-1}} > -1$. Using Proposition \ref{schur.prop}, we deduce that 
\[ \lambda_{min}(S_{K_{q'}-\mbox{\scriptsize{Id}}}(K_{q'}|_{\brX, \brX} - \Id_{\brX}))
	\geq \lambda_{min}(K_{q'}-\Id) > - 1 \, .
\]
Hence $\det(S_{M}(K_{q'}|_{\brX, \brX} - \Id_{\brX})) > 0$. We have thus proven that 
$\det(M) \neq 0$.
\end{proof}

 Once we have defined a basis $(\xi_x)_{x \in \XX}:= (\phi_{\bx},\psi_{\brx})_{\bx \in \bX, \brx \in \brX}$ of $\ell_2(\XX,\mu)$, 
 we are able to define the analysis operator 
 \[ U: \begin{array}[t]{lcl} \ell_2(\XX,\mu) & \rightarrow & \R^{\XX} 
 	\\ f & \mapsto & (\bra{\xi_x,f})_{x\in \XX}=(\bra{\phi_{\bx},f}, \bra{\psi_{\brx},f})_{\bx \in \bX, \brx \in \brX}
	\end{array} \, . 
	\]
The hope is that when $f$ is "regular" (a notion still to be defined), the coefficients $(\bra{f,\psi_{\brx}})_{\brx \in \brX}$
are "small". Actually, when $f$ is a constant function, these coefficients are equal to zero. 	We stress the fact that these coefficients are not normalized. 
$f$ can be reconstructed from the coefficients $U(f)$. Let $(\tilde{\xi}_x)_{x \in \XX}$ be the dual basis of 
$(\xi_x)_{x \in \XX}$ in $\ell_2(\XX,\mu)$, defined by $\langle \tilde{\xi}_x,\xi_y \rangle = \delta_{xy}$. Then we get the obvious reconstruction formula
\[ 
f = \sum_{x \in \XX} \bra{\xi_x,f} \tilde{\xi}_x \, . 
\]
The approximation of $f$ is then $\sum_{\by \in \bX} \bra{\phi_{\by},f} \tilde{\xi}_{\by}$, while
the detail of $f$ is $\sum_{\bry \in \brX} \bra{ \psi_{\bry},f}  \tilde{\xi}_{\bry}$. 
The reconstruction of $f$ from its coefficients $U(f)$ can here be made explicit. This 
is the content of the following proposition.

\subsubsection{The reconstruction formula}
\begin{prop}[Reconstruction formula]
\label{reconstruction.prop}
For any $f \in \ell_2(\XX,\mu)$, let $\baf \in \ell_2(\bX,\mu)$ and $\brf \in \ell_2(\brX,\mu)$ be defined by 
\[
\forall \bx \in \bX \, , \,\, \baf(\bx) = K_{q'}f(\bx)=U(f)(\bx) \, ; 
\]
\[ 
\forall \brx \in \brX \, , \,\, \brf(\brx) =( K_{q'}-\Id)f(\brx)=U(f)(\brx) \, . 
\]
Then, 
\[ f = \begin{pmatrix} \Id_{\bX} - \frac{1}{q'} \bLL &  \LL_{\bX \brX} (-\LL_{\brX \brX})^{-1}
\\[.2cm]
  (-\LL_{\brX \brX})^{-1} \LL_{\brX \bX} & q'  \LL_{\brX \brX}^{-1} - \Id_{\brX}
\end{pmatrix}
\begin{pmatrix} \baf \\ \brf
\end{pmatrix} 
= \bR \baf + \brR \brf  \, ,  
\]
where
\begin{equation}
 \bR = \begin{pmatrix} \Id_{\bX} - \frac{1}{q'} \bLL 
 \\[.2cm]
  (- \LL_{\brX \brX})^{-1} \LL_{\brX \bX}
\end{pmatrix}  
\, , 
\,\, 
\mbox{ and }
\brR 
=\begin{pmatrix}  \LL_{\bX \brX} (-\LL_{\brX \brX})^{-1} \\  - \Id_{\brX} - q' (- \LL_{\brX \brX})^{-1} 
\end{pmatrix}
 \, . 
\end{equation}
\end{prop}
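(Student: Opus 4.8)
The plan is to avoid inverting directly the $n \times n$ matrix $M$ (whose rows are the measures $\nu_{\bx}, \rho_{\brx}$, and which was shown to be invertible in the preceding lemma), and instead to recover $f$ by a block elimination. The key observation is that $K_{q'} = q'(q' \Id - \LL)^{-1}$ has a trivial inverse, $K_{q'}^{-1} = \Id - \LL/q'$, whose block decomposition along $\bX, \brX$ is immediately expressed in terms of the blocks of $\LL$. So rather than work with the complicated blocks of $K_{q'}$ itself, I would introduce the auxiliary function $g := K_{q'} f$ and reconstruct $f$ from $g$ through $f = (\Id - \LL/q')\,g$.

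First I would read off the two defining relations in terms of $g$: on $\bX$ one has $\baf = g_{\bX}$, while on $\brX$ one has $\brf = (K_{q'} f - f)_{\brX} = g_{\brX} - f_{\brX}$. Thus $g_{\bX}$ is already known (it equals $\baf$), and it remains to determine $g_{\brX}$. Writing the $\brX$-block of $f = (\Id - \LL/q')g$ gives $f_{\brX} = g_{\brX} - \frac{1}{q'}(\LL_{\brX \bX} g_{\bX} + \LL_{\brX \brX} g_{\brX})$; comparing this with $f_{\brX} = g_{\brX} - \brf$ and cancelling $g_{\brX}$ yields $\LL_{\brX \brX} g_{\brX} = q' \brf - \LL_{\brX \bX} \baf$. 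Since $\LL_{\brX \brX}$ is invertible (Lemma \ref{LbarSchur.lem}), this solves for $g_{\brX} = q' \LL_{\brX \brX}^{-1} \brf - \LL_{\brX \brX}^{-1} \LL_{\brX \bX} \baf$.

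With both blocks of $g$ in hand, I would substitute back. The $\brX$-block is immediate: $f_{\brX} = g_{\brX} - \brf = (-\LL_{\brX \brX})^{-1} \LL_{\brX \bX} \baf + (q' \LL_{\brX \brX}^{-1} - \Id_{\brX}) \brf$, which is exactly the second block rows of $\bR$ and $\brR$. For the $\bX$-block I would expand $f_{\bX} = g_{\bX} - \frac{1}{q'}(\LL_{\bX \bX} g_{\bX} + \LL_{\bX \brX} g_{\brX})$, insert $g_{\bX} = \baf$ and the expression for $g_{\brX}$, and collect the $\baf$-terms. The coefficient of $\baf$ becomes $\Id_{\bX} - \frac{1}{q'}(\LL_{\bX \bX} - \LL_{\bX \brX} \LL_{\brX \brX}^{-1} \LL_{\brX \bX})$, and here lies the one point requiring recognition: the parenthesis is precisely the Schur complement of $\LL_{\brX \brX}$ in $\LL$, i.e. $\bLL$ (Lemma \ref{LbarSchur.lem}), so the coefficient is $\Id_{\bX} - \bLL/q'$; the coefficient of $\brf$ is $-\LL_{\bX \brX} \LL_{\brX \brX}^{-1} = \LL_{\bX \brX}(-\LL_{\brX \brX})^{-1}$.

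This matches the first block rows of $\bR$ and $\brR$, completing the identification $f = \bR \baf + \brR \brf$. I do not expect any genuine obstacle: the argument is a finite-dimensional block elimination, and every step is justified once $\LL_{\brX \brX}$ is known to be invertible. The only things to keep straight are the sign conventions (e.g. writing $-\LL_{\brX \brX}^{-1} = (-\LL_{\brX \brX})^{-1}$) and the recognition of the Schur complement $\bLL$; no analytic estimate or nontrivial identity beyond those already established in Lemma \ref{LbarSchur.lem} is needed.
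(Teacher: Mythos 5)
Your proposal is correct and is essentially the paper's own argument in a different arrangement: the paper applies $K_{q'}^{-1}=\Id-\LL/q'$ to the vector $(\baf,\brf)$ and inverts the resulting block upper-triangular matrix, whereas you introduce $g=K_{q'}f$ and perform the equivalent back-substitution, in both cases relying on the invertibility of $\LL_{\brX\brX}$ and the recognition of the Schur complement as $\bLL$. No gap; the two computations are the same block elimination.
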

\begin{proof}
Note that 
$\begin{pmatrix} \baf \\ \brf \end{pmatrix}
= \cro{K_{q'} - \begin{pmatrix} 0 & 0 \\ 0 & \Id_{\brX} \end{pmatrix}} f$ . Since $K_{q'}=(\Id
- \frac{1}{q'} \LL)^{-1}$, we get
\[
 \pare{\Id - \frac{1}{q'} \LL}  \begin{pmatrix} \baf \\ \brf \end{pmatrix}
  = \cro{\Id - \pare{\Id - \frac{1}{q'} \LL} \begin{pmatrix} 0 & 0 \\ 0 & \Id_{\brX} \end{pmatrix}} f
  =   \begin{pmatrix} \Id_{\bX} & \LL_{\bX \brX}/q'
 \\ 0   &    \LL_{\brX \brX}/q' \end{pmatrix} f \, . 
 \]
 The inverse of $ \begin{pmatrix} \Id_{\bX} & \LL_{\bX \brX}/q'
 \\ 0   &  \LL_{\brX \brX}/q' \end{pmatrix}$ is 
 $  \begin{pmatrix} \Id_{\bX} &   \LL_{\bX \brX} (-\LL_{\brX \brX})^{-1}
 \\[.5pt]  0   &   q' (\LL_{\brX \brX})^{-1} \end{pmatrix}$. 
 Hence, 
\begin{align*}
f & = \begin{pmatrix} \Id_{\bX} &   \LL_{\bX \brX} (-\LL_{\brX \brX})^{-1}
 \\[.5pt]  0   &   q' (\LL_{\brX \brX})^{-1} \end{pmatrix} 
 \begin{pmatrix}  \Id_{\bX} -  \frac{1}{q'}  \LL_{\bX \bX} &  -  \frac{1}{q'}  \LL_{\bX \brX}
 \\[.2cm]
  -  \frac{1}{q'}  \LL_{\brX \bX} & \Id_{\brX} -  \frac{1}{q'}  \LL_{\brX \brX}
 \end{pmatrix}
 \begin{pmatrix} \baf \\ \brf \end{pmatrix}  \, , 
 \\
 & = 
  \begin{pmatrix} \Id_{\bX} -  \frac{1}{q'}  \bLL &   \LL_{\bX \brX}  (-\LL_{\brX \brX})^{-1}
  \\[.2cm] 
  (-  \LL_{\brX \brX})^{-1} \LL_{\brX \bX}  & q' (\LL_{\brX \brX})^{-1} - \Id_{\brX}
  \end{pmatrix} 
  \begin{pmatrix} \baf \\ \brf \end{pmatrix}  \, . 
 \end{align*}
  \end{proof}
  In the sequel, we will talk about $\bR$ as the approximation operator, and about 
  $\brR$ as the detail operator. In order to get stable numerical results in the analysis and the  reconstruction of the signal $f$, it is important to control the operator norm of these operators. This is easy for the analysis operator $U$ since for any $f \in \ell_p(\XX,\mu)$, 
 \[\nor{U(f)}_{p,\XX} = \nor{K_{q'}(f) - f \ind_{\brX}}_{p,\XX} \leq \nor{K_{q'}(f)}_{p,\XX}  + \nor{f}_{p,\XX} \, . 
 \]
 $K_{q'}$ being a probability kernel, symmetric w.r.t. $\mu$, we get by Jensen's inequality, 
\begin{align}
\nor{K_{q'}f}_{p,\XX}^p 
& = \sum_{x \in \XX} \mu(x) \va{K_{q'} f (x)}^p
\\
& \leq \sum_{x \in \XX} \mu(x) K_{q'}( \va{f}^p)(x)
= \bra{\ind, K_{q'}(\va{f}^p)} = \bra{K_{q'}(\ind) , \va{f}^p} 
=  \nor{f}_{p,\XX}^p \,  .
\label{Kq-norm.eq}
\end{align}
Hence for any $f \in \ell_p(\XX,\mu)$, 
\begin{equation} 
\nor{U(f)}_{p, \XX} \leq 2 \nor{f}_{p,\XX} \, . 
\label{U-norm.eq}
\end{equation}

 The aim of the two following sections is to provide bounds on the norm of the reconstruction operator.

  \subsection{Detail operator norm.}
  We give in this section a control on the norm of the detail operator $\brR$, in terms of 
  $\beta$ and $\gamma$ defined in  \eqref{beta-gamma-def}. Using Markov property at time $\tau_1$, 
  one can express $1/\beta$ as
  \[  \frac{1}{\beta} = \max_{\bx \in \bX} E_{\bx}\cro{H^+_{\bX} - \tau_1}  \, .
  \]
  In the sequel, when $A$ is a subset of $\XX$, and $f$ is a function on $A$, 
  $\nor{f}_{p,A}$ is the norm of $f$ in $\ell_p(A, \mu_A)$, where $\mu_A$ is the conditional 
  probability $\mu$ on $A$: for any $B \subset A$, $\mu_A(B)= \mu(B)/\mu(A)$.

  \begin{prop}[Detail operator norm] 
  \label{brR-norm.prop}
  Let $\bX$ be any proper subset of $\XX$, $\brX = \XX \setminus \bX$, and let 
  $\brR$ be the operator defined in \eqref{bR.def}. 
    For any $p \geq 1$,  for any $f \in \ell_{p}(\brX, \mu_{\brX})$, 
  \begin{equation}
    \label{brR-norm.eq}
 \nor{\brR f}_{p,\XX}
   \leq  
\cro{\pare{\frac{\alpha}{\beta}}^{p/p^*}
      + \pare{1+ \frac{q'}{\gamma}}^p}^{1/p} \,  
\mu(\brX)^{1/p}  \nor{ f}_{p,\brX} 
  \,  , 
  \end{equation}
  where  $p^*$ is the conjugate exponent of $p$. 
   \end{prop}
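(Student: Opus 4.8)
The plan is to exploit the block structure of $\brR$ given in \eqref{bR.def} to reduce \eqref{brR-norm.eq} to two separate row-sum estimates, one per block, and then to convert these row sums into an $\ell_p$ bound by combining Jensen's inequality with the reversibility \eqref{symmetry} of $\LL$. Writing $Tf$ for the restriction of $\brR f$ to $\bX$, with $T:=\LL_{\bX\brX}(-\LL_{\brX\brX})^{-1}$, and $Bf$ for its restriction to $\brX$, with $B:=-\Id_{\brX}-q'(-\LL_{\brX\brX})^{-1}$, I have $\nor{\brR f}_{p,\XX}^p=\sum_{\bx\in\bX}\mu(\bx)\va{Tf(\bx)}^p+\sum_{\brx\in\brX}\mu(\brx)\va{Bf(\brx)}^p$, so it suffices to bound the two sums and add them.

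First I would record the needed preliminary facts. Since $\LL_{\brX\brX}$ generates the process killed on hitting $\bX$, the matrix $(-\LL_{\brX\brX})^{-1}$ is its Green matrix: it has nonnegative entries and row sums $\sum_{\bry}(-\LL_{\brX\brX})^{-1}(\brx,\bry)=E_{\brx}\cro{H_{\bX}}$. Hence the absolute row sums of $B$ are $1+q'E_{\brx}\cro{H_{\bX}}\leq 1+q'/\gamma$. The entries of $T$ are also nonnegative, and its row sum is $\sum_{\brx}w(\bx,\brx)E_{\brx}\cro{H_{\bX}}$; using $P=\Id+\LL/\alpha$ together with $E_z\cro{H_{\bX}}=0$ for $z\in\bX$, this equals $\alpha\sum_{z}P(\bx,z)E_z\cro{H_{\bX}}\leq\alpha/\beta$.

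The core step applies, for each block, the discrete Jensen bound $\va{Af(x)}^p\leq r(x)^{p-1}\sum_{y}\va{A(x,y)}\va{f(y)}^p$, where $r(x)$ is the absolute row sum, and then bounds $r(x)^{p-1}$ by the uniform bounds above before exchanging the order of summation. To exchange against the weight $\mu$ I use that $(-\LL_{\brX\brX})^{-1}$ is self-adjoint with respect to $\mu$, a direct consequence of \eqref{symmetry}. For $B$ this yields the column estimate $\sum_{\brx}\mu(\brx)\va{B(\brx,\bry)}\leq(1+q'/\gamma)\mu(\bry)$. For $T$ it yields the clean identity $\sum_{\bx}\mu(\bx)T(\bx,\bry)=\mu(\bry)$: reversibility turns $\mu(\bx)T(\bx,\bry)$ into $\mu(\bry)Q(\bry,\bx)$, and $Q(\bry,\cdot)$ is a probability law on $\bX$; equivalently, since $\LL\ind=0$ gives $\LL_{\brX\bX}\ind_{\bX}=-\LL_{\brX\brX}\ind_{\brX}$, one has $(-\LL_{\brX\brX})^{-1}\LL_{\brX\bX}\ind_{\bX}=\ind_{\brX}$. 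Plugging these in, and recalling $p-1=p/p^*$, gives $\sum_{\bx}\mu(\bx)\va{Tf(\bx)}^p\leq(\alpha/\beta)^{p/p^*}\mu(\brX)\nor{f}_{p,\brX}^p$ and $\sum_{\brx}\mu(\brx)\va{Bf(\brx)}^p\leq(1+q'/\gamma)^p\mu(\brX)\nor{f}_{p,\brX}^p$; summing and taking $p$-th roots produces exactly \eqref{brR-norm.eq}.

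I expect the main obstacle to be the bookkeeping of the weights in the summation exchange. The rectangular block $T$ is not symmetric, so the row-sum bound only controls $\sum_{\bry}T(\bx,\bry)$, whereas the $\ell_p(\mu)$ estimate requires the column sums $\sum_{\bx}\mu(\bx)T(\bx,\bry)$; reconciling the two needs both the reversibility identity and the generator property $\LL\ind=0$, and obtaining precisely $\mu(\bry)$ (rather than a loose bound) is what makes the factor $\alpha/\beta$ appear with the correct exponent $p/p^*$.
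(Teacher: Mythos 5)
Your proof is correct and follows essentially the same route as the paper: the paper also splits $\brR$ into its two blocks, identifies $(-\LL_{\brX\brX})^{-1}$ with the Green kernel (so that its row sums are $E_{\brx}\cro{H_{\bX}}$ and the row sums of $\LL_{\bX\brX}(-\LL_{\brX\brX})^{-1}$ are $\alpha\sum_z P(\bx,z)E_z\cro{H_{\bX}}\leq\alpha/\beta$), and combines H\"older/Jensen with $\mu$-symmetry to exchange the summations (Lemmas \ref{green.lem} and \ref{Opernorm.lem}). The only cosmetic difference is that for the off-diagonal block the paper first bounds the adjoint operator $(-\LL_{\brX\brX})^{-1}\LL_{\brX\bX}$ (the hitting-distribution kernel, which is stochastic, so Jensen applies with unit row sums and the factor $\alpha/\beta$ comes from the weighted column sums) and then dualizes, whereas you work directly with $\LL_{\bX\brX}(-\LL_{\brX\brX})^{-1}$ using its row-sum bound $\alpha/\beta$ and the exact identity $\sum_{\bx}\mu(\bx)T(\bx,\bry)=\mu(\bry)$ --- the transposed version of the same computation, yielding the same exponent $p/p^*$.
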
 
   The proof is a consequence of Lemmas \ref{green.lem}-\ref{Opernorm.lem} given below.

\begin{lem}
\label{green.lem}
For any $x \in \XX$, and $\brx \in \brX$, we define 
\[ G_{\bX}(x,\brx)= E_x\cro{\int_0^{H^+_{\bX}} \ind_{X(s)=\brx} \, ds} \, . 
\]
Then,
\begin{equation}
\label{green1.eq}
 (-\LL_{\brX \brX})^{-1} =( G_{\bX})_{\brX \brX} \, \mbox{ , and }
\LL_{\bX \brX} (-\LL_{\brX \brX})^{-1} = \alpha  (G_{\bX})_{\bX \brX} \, . 
\end{equation}
The adjoint operator of $$(-\LL_{\brX \brX})^{-1}: \ell_p(\brX,\mu) \mapsto \ell_q(\brX,\mu)$$ is 
the operator $$(-\LL_{\brX \brX})^{-1}: \ell_{q^*}(\brX,\mu) \mapsto \ell_{p^*}(\brX,\mu) \, .$$ 
The adjoint operator of $$\LL_{\bX \brX} (-\LL_{\brX \brX})^{-1}:   \ell_p(\brX,\mu)  \mapsto \ell_q(\bX,\mu)$$
is the operator $$(-\LL_{\brX \brX})^{-1} \LL_{\brX \bX}:   \ell_{q^*}(\bX,\mu) \mapsto \ell_{p^*}(\brX,\mu) \, ,$$ with
\begin{equation}
\label{green.eq}(-\LL_{\brX \brX})^{-1} \LL_{\brX \bX}(\brx,\bx) = 
P_{\brx}\cro{X(H_{\bX})=\bx} \, , \brx \in \brX \, , \,\, \bx \in \bX \, .
\end{equation}
\end{lem}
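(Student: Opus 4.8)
The plan is to identify $(-\LL_{\brX \brX})^{-1}$ with the Green function of the process $X$ killed upon hitting $\bX$, and then to read off all the assertions of the lemma from this probabilistic representation together with the reversibility \eqref{symmetry} of $\LL$ and the hitting-probability formula \eqref{Q.eq} already established in Lemma \ref{LbarSchur.lem}.

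First I would prove the occupation-time identity, i.e. the first equality of \eqref{green1.eq}. The block $\LL_{\brX\brX}$ is the sub-Markovian generator of $X$ killed at $H_{\bX}$, so the killed semigroup satisfies $e^{t\LL_{\brX\brX}}(x,\brx)=P_x\cro{X(t)=\brx,\, H_{\bX}>t}$ for $x,\brx\in\brX$. Since $\bX\neq\emptyset$ and $\LL$ is irreducible, the killed process is transient; hence $-\LL_{\brX\brX}$ is invertible (as shown in Lemma \ref{LbarSchur.lem}) and $(-\LL_{\brX\brX})^{-1}=\int_0^\infty e^{t\LL_{\brX\brX}}\,dt$. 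Integrating the semigroup identity over $t$, and noting that $H_{\bX}=H^+_{\bX}$ whenever the starting point lies in $\brX$, gives $(-\LL_{\brX\brX})^{-1}(x,\brx)=G_{\bX}(x,\brx)$, which is $(G_{\bX})_{\brX\brX}$.

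For the $\bX\brX$ block I would use a first-jump decomposition. Starting from $\bx\in\bX$, the process stays at $\bx$ on $[0,\tau_1)$, contributing no occupation of any $\brx\in\brX$; at $\tau_1$ it jumps to $z$ with probability $P(\bx,z)$, and the return time $H^+_{\bX}$ equals $\tau_1$ if $z\in\bX$, whereas if $z\in\brX$ the remaining occupation of $\brx$ is that accumulated before $H_{\bX}$ afresh from $z$. The Markov property thus yields $G_{\bX}(\bx,\brx)=\sum_{z\in\brX}P(\bx,z)\,G_{\bX}(z,\brx)$, that is $(G_{\bX})_{\bX\brX}=P_{\bX\brX}(G_{\bX})_{\brX\brX}$. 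Since $P=\Id+\LL/\alpha$ has off-diagonal block $P_{\bX\brX}=\LL_{\bX\brX}/\alpha$, combining with the previous step gives $\alpha(G_{\bX})_{\bX\brX}=\LL_{\bX\brX}(-\LL_{\brX\brX})^{-1}$, the second identity in \eqref{green1.eq}. The formula \eqref{green.eq} is then immediate from \eqref{Q.eq}: there $Q_{\brX\bX}=(-\LL_{\brX\brX})^{-1}\LL_{\brX\bX}$ with $Q(x,\bx)=P_x\cro{X(H^+_{\bX})=\bx}$, and for $x=\brx\in\brX$ one has $H^+_{\bX}=H_{\bX}$.

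There remain the adjoint statements. Here ``adjoint'' means the Banach-space transpose for the duality pairing $\bra{f,g}=\sum_x \mu(x)f(x)g(x)$, which sends a bounded operator $\ell_p\to\ell_q$ to an operator $\ell_{q^*}\to\ell_{p^*}$ whose matrix is obtained by transposing and reweighting by $\mu$-masses, namely $T^*(v,u)=\frac{\mu(u)}{\mu(v)}T(u,v)$. Reversibility \eqref{symmetry} makes $-\LL_{\brX\brX}$ symmetric with respect to $\mu$, hence so is its inverse, and inserting $\mu(\brx)T(\brx,\bry)=\mu(\bry)T(\bry,\brx)$ into the transpose formula shows the adjoint of $(-\LL_{\brX\brX})^{-1}$ is itself; the same computation, using $\mu(\by)\LL_{\brX\bX}(\by,\bx)=\mu(\bx)\LL_{\bX\brX}(\bx,\by)$, identifies the transpose of $\LL_{\bX\brX}(-\LL_{\brX\brX})^{-1}$ as $(-\LL_{\brX\brX})^{-1}\LL_{\brX\bX}$. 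I expect the main obstacle to be the careful justification of the Green-function representation---verifying that $e^{t\LL_{\brX\brX}}$ is the killed semigroup, that the integral converges by transience, and that the $H_{\bX}$ versus $H^+_{\bX}$ distinction is handled correctly in the first-jump step---whereas the adjoint bookkeeping is routine once the $\mu$-weighted pairing and the transpose formula are fixed.
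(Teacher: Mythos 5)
Your proof is correct, and it reaches the same three conclusions by the same overall strategy as the paper (a probabilistic identification of $(-\LL_{\brX\brX})^{-1}$ as the Green kernel, the formula \eqref{Q.eq} from Lemma \ref{LbarSchur.lem} for \eqref{green.eq}, and reversibility of $\LL$ for the adjoint statements, which in both treatments reduce to the same $\mu$-weighted transposition computation). The one genuine difference is how the identity $-\LL_{\brX\brX}(G_{\bX})_{\brX\brX}=\Id_{\brX}$ is obtained: the paper applies the Markov property at the first jump time $\tau_1$ to \emph{every} starting point $x\in\XX$ at once, getting the linear system $(G_{\bX})_{\brX\brX}=\frac{1}{\alpha}\Id_{\brX}+P_{\brX\brX}(G_{\bX})_{\brX\brX}$ and $(G_{\bX})_{\bX\brX}=P_{\bX\brX}(G_{\bX})_{\brX\brX}$ in one stroke, whereas you derive the $\brX\brX$ block by integrating the killed semigroup, $(-\LL_{\brX\brX})^{-1}=\int_0^\infty e^{t\LL_{\brX\brX}}\,dt$, and then fall back on the first-jump decomposition only for the $\bX\brX$ block. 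Both are sound: the semigroup route is arguably more conceptual but requires you to justify that $e^{t\LL_{\brX\brX}}$ is the killed transition kernel and that the integral converges (which follows here since $-\LL_{\brX\brX}$ is symmetric positive definite on $\ell_2(\brX,\mu)$, a fact available from Lemma \ref{LbarSchur.lem} and Proposition \ref{schur.prop}); the paper's single renewal equation avoids any convergence discussion and yields both blocks of \eqref{green1.eq} simultaneously. Your handling of the $H_{\bX}$ versus $H^+_{\bX}$ distinction and of the degenerate jump $z=\bx$ (where $H^+_{\bX}=\tau_1$) is correct.
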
  
\begin{proof}
\begin{align*}
G_{\bX}(x,\brx) & = E_x\cro{\int_0^{\tau_1} \ind_{X(s)=\brx} \, ds 
 + \int_{\tau_1}^{H^+_{\bX}} \ind_{X(s)=\brx} \, ds} 
 \, , 
 \\
 & = \delta_x(\brx) E_x(\tau_1)
 + \sum_{z \in \XX} P(x,z) E_z\cro{\int_0^{H_{\bX}} \ind_{X(s)=\brx} \, ds} 
\mbox{ by Markov property at time } \tau_1 \, , 
\\
& = \frac{1}{\alpha} \delta_x(\brx) + \sum_{z \in \brX} P(x,z) 
 E_z\cro{\int_0^{H^+_{\bX}} \ind_{X(s)=\brx} \, ds}  \, . 
\end{align*}
This can be rewritten as 
\begin{align*}
\left\{
\begin{array}{l}
(G_{\bX})_{\bX \brX} = P_{\bX \brX} (G_{\bX})_{\brX \brX}
\\
(G_{\bX})_{\brX \brX} = \frac{1}{\alpha} \Id_{\brX} +  P_{\brX \brX} (G_{\bX})_{\brX \brX}
\end{array}
\right .
&
\Longleftrightarrow
\left\{
\begin{array}{l} 
-\LL_{\brX \brX} (G_{\bX})_{\brX \brX} = \Id_{\brX}
\\
(G_{\bX})_{\bX \brX} = \frac{1}{\alpha}  \LL_{\bX \brX} (G_{\bX})_{\brX \brX}
\end{array}
\right. 
\\
& \Longleftrightarrow
\left\{
\begin{array}{l} 
(-\LL_{\brX \brX})^{-1} = (G_{\bX})_{\brX \brX} 
\\
\LL_{\bX \brX} (-\LL_{\brX \brX})^{-1} = \alpha (G_{\bX})_{\bX \brX} 
\end{array}
\right. 
\,. 
\end{align*}
Using the symmetry of $\LL$ with respect to $\mu$, for any $g \in  \ell_{q^{\star}}(\brX,\mu)$ and any
$f \in \ell_p(\brX,\mu)$, 
\begin{align*}
 \bra{(-\LL_{\brX \brX})^{-1}f,g }_{\brX}
& =  \sum_{\brx,\bry \in \brX} \mu(\brx) 
(-\LL_{\brX \brX})^{-1}(\brx,\bry) f(\bry) g(\brx) 
 \\
 & = \sum_{\brx,\bry \in \brX} \mu(\bry) 
(-\LL_{\brX \brX})^{-1}(\bry,\brx) f(\bry) g(\brx)
 = \bra{f,(-\LL_{\brX \brX})^{-1}g }_{\brX}
\, .
\end{align*}
In the same way,   for all $g \in  \ell_{q^{\star}}(\bX,\mu)$ and all
$f \in \ell_p(\brX,\mu)$, 
\begin{align*}
 \bra{\LL_{\bX \brX}(-\LL_{\brX \brX})^{-1}f,g }_{\bX}
 &  =  \sum_{\bx \in \bX;\bry, \brz\in \brX} \mu(\bx)  
\LL(\bx,\bry)(-\LL_{\brX \brX})^{-1}(\bry,\brz) f(\brz) g(\bx) 
 \\
 &= \sum_{\bx \in \bX;\bry ,\brz\in \brX} \mu(\brz)
\LL(\bry,\bx)(-\LL_{\brX \brX})^{-1}(\brz,\bry) f(\brz) g(\bx) 
\\
 & = \bra{f,(-\LL_{\brX \brX})^{-1}\LL_{\brX \bX}g }_{\brX}
\, . 
\end{align*}
\eqref{green.eq} was already stated in the proof of Lemma \ref{LbarSchur.lem} (see \eqref{Q.eq}). 
\end{proof}

We are now able to give bounds on the norms of the operators involved in the definition of 
$\bR$ and $\brR$. 
\begin{lem} 
\label{Opernorm.lem}
 Let $\bX$ be any proper subset of $\XX$, $\brX = \XX \setminus \bX$. 
  $\beta$ and $\gamma$ being defined by 
 \eqref{beta-gamma-def}, we get
\begin{equation}
\label{Opernorm1.eq}
\nor{(-\LL_{\brX \brX})^{-1} \LL_{\brX \bX} f}_{p,\brX} 
\leq \pare{\frac{\alpha}{\beta}  \frac{\mu(\bX)}{\mu(\brX)} }^{1/p} \nor{f}_{p,\bX} \, . 
\end{equation}
\begin{equation}
\label{Opernorm2.eq}
\nor{ \LL_{\bX \brX} (-\LL_{\brX \brX})^{-1} f}_{p,\bX} 
\leq\pare{ \frac{ \alpha}{\beta} }^{1/p^*} \pare{\frac{\mu(\brX)}{\mu(\bX)}}^{1/p} \nor{f}_{p,\brX} \, . 
\end{equation}
\begin{equation}
\label{Opernorm3.eq}
\nor{  (-\LL_{\brX \brX})^{-1} f}_{p,\brX} 
\leq  \frac{1}{\gamma} \,  \nor{f}_{p,\brX} \, . 
\end{equation}
\end{lem}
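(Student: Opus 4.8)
The plan is to treat all three bounds uniformly as Schur-test estimates, exploiting that each operator has a nonnegative kernel whose row sums are controlled by $\beta$ or $\gamma$, and whose column sums are then controlled by reversibility. The two probabilistic identities I would extract first from Lemma \ref{green.lem} are the row-sum formulas for the Green kernel $G_\bX$: summing $G_\bX(\bx,\cdot)$ over $\brX$ gives $E_\bx[H^+_\bX-\tau_1]$ for $\bx\in\bX$ (hence $\le 1/\beta$ by the expression for $1/\beta$ recalled just before the proposition), while summing $G_\bX(\brx,\cdot)$ over $\brX$ gives $E_\brx[H_\bX]$ for $\brx\in\brX$ (hence $\le 1/\gamma$). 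I would also record the two reversibility relations $\mu(\brx)(-\LL_{\brX\brX})^{-1}(\brx,\bry)=\mu(\bry)(-\LL_{\brX\brX})^{-1}(\bry,\brx)$ and, from the adjointness in Lemma \ref{green.lem}, $\mu(\brx)\,[(-\LL_{\brX\brX})^{-1}\LL_{\brX\bX}](\brx,\bx)=\mu(\bx)\,[\LL_{\bX\brX}(-\LL_{\brX\brX})^{-1}](\bx,\brx)$.

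The elementary analytic tool is the following consequence of Jensen's inequality: if $M\ge 0$ has row sums $m(u)=\sum_v M(u,v)$, then $|Mf(u)|^p\le m(u)^{p-1}\sum_v M(u,v)|f(v)|^p$ (writing $M(u,\cdot)/m(u)$ as a probability measure and using convexity of $t\mapsto t^p$). For \eqref{Opernorm3.eq} I would apply this to $M=(-\LL_{\brX\brX})^{-1}=(G_\bX)_{\brX\brX}$, bound $m(\brx)^{p-1}\le(1/\gamma)^{p-1}$, sum against $\mu$ over $\brX$, and use reversibility to rewrite the resulting column sum as $\sum_\brx\mu(\brx)G_\bX(\brx,\bry)=\mu(\bry)\,E_\bry[H_\bX]\le\mu(\bry)/\gamma$; the two factors combine to $(1/\gamma)^p$, giving \eqref{Opernorm3.eq} after a $p$-th root, the measure $\mu_\brX$ cancelling. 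For \eqref{Opernorm1.eq} the operator $B=(-\LL_{\brX\brX})^{-1}\LL_{\brX\bX}$ has kernel $P_\brx[X(H_\bX)=\bx]$, which is \emph{stochastic} (row sums $1$), so Jensen gives directly $|Bf(\brx)|^p\le\sum_\bx B(\brx,\bx)|f(\bx)|^p$; summing against $\mu$ over $\brX$ and using the reversibility identity above, the column sum becomes $\sum_\brx\mu(\brx)B(\brx,\bx)=\mu(\bx)\,\alpha\,E_\bx[H^+_\bX-\tau_1]\le\mu(\bx)\,\alpha/\beta$, and converting between the $\mu_\bX$ and $\mu_\brX$ normalizations produces the factor $(\tfrac{\alpha}{\beta}\tfrac{\mu(\bX)}{\mu(\brX)})^{1/p}$.

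The inequality \eqref{Opernorm2.eq} is the adjoint of \eqref{Opernorm1.eq} and is where the bookkeeping is most delicate: here $C=\LL_{\bX\brX}(-\LL_{\brX\brX})^{-1}=\alpha(G_\bX)_{\bX\brX}$ is nonnegative but only sub-stochastic, with row sums $c(\bx)=\alpha\,E_\bx[H^+_\bX-\tau_1]\le\alpha/\beta$. The Jensen estimate then yields $|Cf(\bx)|^p\le c(\bx)^{p-1}\sum_\brx C(\bx,\brx)|f(\brx)|^p$, and the exponent $p-1$ must be recognised as $p/p^*$, which is exactly where the conjugate exponent enters the stated bound. Summing against $\mu$ over $\bX$ and using the same reversibility identity, the column sum collapses to $\sum_\bx\mu(\bx)C(\bx,\brx)=\mu(\brx)\sum_\bx B(\brx,\bx)=\mu(\brx)$, since $B$ is stochastic, leaving only the ratio $\mu(\brX)/\mu(\bX)$ to the power $1/p$. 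The main obstacle is thus not any single estimate but keeping the exponent bookkeeping consistent across the two asymmetric roles played by the row-sum bound (which, after the $p$-th root, yields the $(\alpha/\beta)^{1/p^*}$ factor) and the stochasticity of the dual kernel (which yields the clean $\mu(\brx)$ column sum); once the two Green-kernel row-sum identities and the single reversibility identity are in place, each of the three bounds follows from one application of the Jensen estimate and one reversibility-driven column-sum computation.
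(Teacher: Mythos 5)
Your argument is correct, and for \eqref{Opernorm1.eq} and \eqref{Opernorm3.eq} it is essentially the paper's own proof: the paper also applies Jensen/H\"older to the stochastic kernel $P_{\brx}\cro{X(H_{\bX})=\cdot}$ (resp.\ to the nonnegative kernel $(-\LL_{\brX\brX})^{-1}$ with row sums $E_{\brx}\cro{H_{\bX}}\leq 1/\gamma$), and then uses reversibility to convert the resulting column sums into $\alpha E_{\by}\cro{H^+_{\bX}-\tau_1}\leq\alpha/\beta$ (resp.\ $E_{\bry}\cro{H_{\bX}}\leq 1/\gamma$). The one place you genuinely diverge is \eqref{Opernorm2.eq}: the paper obtains it in one line by observing that $\LL_{\bX\brX}(-\LL_{\brX\brX})^{-1}$ is the adjoint (for the unnormalized $\mu$-pairing) of $(-\LL_{\brX\brX})^{-1}\LL_{\brX\bX}$ acting on the conjugate-exponent spaces, so \eqref{Opernorm2.eq} is just \eqref{Opernorm1.eq} at exponent $p^*$ read backwards, after converting between the $\mu_{\bX}$ and $\mu_{\brX}$ normalizations. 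You instead redo the Schur test directly on $\alpha(G_{\bX})_{\bX\brX}$, with sub-stochastic row sums bounded by $\alpha/\beta$ contributing the factor $(\alpha/\beta)^{p/p^*}$ and the stochasticity of the dual kernel giving the clean column sum $\mu(\brx)$; the bookkeeping is correct and recovers exactly the stated constant. The duality route is shorter and automatically consistent; your direct computation costs a few more lines but makes the origin of the asymmetric exponents $1/p^*$ versus $1/p$ completely explicit, which is a useful check. No gap either way.
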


\begin{proof}
\begin{align*}
\mu(\brX) \nor{(-\LL_{\brX \brX})^{-1} \LL_{\brX \bX} f}^p_{p,\brX}
& = \sum_{\brx \in \brX} \mu(\brx) \va{(-\LL_{\brX \brX})^{-1} \LL_{\brX \bX} f (\brx)}^p
\, , 
\\
& =  \sum_{\brx \in \brX} \mu(\brx) \va{E_{\brx}\cro{f(X(H_{\bX}))}}^p 
\mbox{ by  \eqref{green.eq},} 
\\
& \leq  \sum_{\brx \in \brX} \mu(\brx)  E_{\brx}\cro{\va{f(X(H_{\bX}))}^p} 
\mbox{ by Jensen's inequality ,} 
\\
& =  \sum_{\brx \in \brX, \by \in \bX} \mu(\brx)  (( -\LL_{\brX \brX})^{-1} \LL_{\brX \bX})(\brx,\by)
\va{f}^p(\by)
\mbox{ by  \eqref{green.eq},} 
\\
& =  \sum_{\brx \in \brX, \by \in \bX} \mu(\by) (\LL_{\bX \brX}( -\LL_{\brX \brX})^{-1} )(\by,\brx)
\va{f}^p(\by) 
\mbox{ by symmetry,} \\
& 
= \alpha \sum_{\by \in \bX} \mu(\by) \va{f(\by)}^p  E_{\by}\cro{\int_0^{H^+_{\bX}} 
\ind_{\brX}(X(s)) \, ds }
\mbox{ by  \eqref{green1.eq},} 
\\
& 
= \alpha \sum_{\by \in \bX} \mu(\by) \va{f(\by)}^p  E_{\by}\cro{H^+_{\bX}-\tau_1}
\, , 
\\
& 
\leq \alpha \max_{\bx \in \bX} E_{\bx}\cro{H^+_{\bX}-\tau_1} \mu(\bX) \nor{f}^p_{p,\bX} 
\, . 
\end{align*}
This gives \eqref{Opernorm1.eq}.  \eqref{Opernorm2.eq} follows from  \eqref{Opernorm1.eq}, since the operator $\LL_{\bX \brX} (-\LL_{\brX \brX})^{-1} $ from $\ell_p(\brX,\mu)$ to 
$\ell_p(\bX,\mu)$ is the adjoint operator of the operator $(-\LL_{\brX \brX})^{-1} \LL_{\brX \bX}  $ from $\ell_{p^*}(\bX,\mu)$ to  $\ell_{p^*}(\brX,\mu)$. 
Concerning \eqref{Opernorm3.eq}, note that
by Lemma \ref{green.lem}, for any $\brx, \bry \in \brX$, $(-\LL_{\brX \brX})^{-1}(\brx,\bry) \geq 0$. 
\begin{align*}
\mu(\brX) & \nor{(-\LL_{\brX \brX})^{-1} f}^p_{p,\brX}\\
&= \sum_{\brx \in \brX} \mu(\brx) \va{(-\LL_{\brX \brX})^{-1} f (\brx)}^p
\, , 
\\
& = \sum_{\brx \in \brX} \mu(\brx) 
\va{\sum_{\bry \in \brX} (-\LL_{\brX \brX})^{-1}(\brx,\bry) f (\bry)}^p
\, , 
\\
& \leq  \sum_{\brx \in \brX} \mu(\brx)  
\pare{\sum_{\bry \in \brX} (-\LL_{\brX \brX})^{-1}(\brx,\bry)}^{p/p^*}
\pare{\sum_{\bry \in \brX}  (-\LL_{\brX \brX})^{-1}(\brx,\bry) \va{f(\bry)}^p}
\\
& \hspace*{5cm} \mbox{ by H\"older's inequality ,} 
\\
& \leq \pare{\max_{\brx \in \brX} \sum_{\bry \in \brX} (-\LL_{\brX \brX})^{-1}(\brx,\bry)}^{p/p^*}
\pare{\sum_{\brx \in \brX, \bry \in \brX} \mu(\brx)    (-\LL_{\brX \brX})^{-1}(\brx,\bry) \va{f(\bry)}^p}
\, ,
\\
& \leq \pare{\max_{\brx \in \brX} \sum_{\bry \in \brX} (-\LL_{\brX \brX})^{-1}(\brx,\bry)}^{p/p^*}
\pare{\sum_{\bry \in \brX} \mu(\bry) \va{f(\bry)}^p  \, \sum_{\brx \in \brX}  (-\LL_{\brX \brX})^{-1}(\bry,\brx) }
\\
& 
\hspace*{5cm} \mbox{ by symmetry ,} 
\\
& \leq \pare{\max_{\brx \in \brX} \sum_{\bry \in \brX} (-\LL_{\brX \brX})^{-1}(\brx,\bry)}^{p} 
\mu(\brX) \nor{f}^p_{p,\brX}
\end{align*}
This ends the proof of \eqref{Opernorm3.eq} since by  \eqref{green1.eq}, 
\[ \sum_{\bry \in \brX} (-\LL_{\brX \brX})^{-1}(\brx,\bry) 
= E_{\brx} \cro{\int_0^{H^+_{\bX}} \ind_{\brX}(X(s)) \, ds}
= E_{\brx} \cro{ H^+_{\bX}} 
= E_{\brx} \cro{ H_{\bX}} \, . 
\]

\end{proof}

 \noindent
 {\it Proof of Proposition \ref{brR-norm.prop}.}
For any $f \in \ell_p(\brX,\mu_{\brX})$, 
 \begin{align*}
\nor{\brR f}^p_{p,\XX}
& 
 = \nor{\begin{pmatrix} \LL_{\bX \brX}(-\LL_{\brX \brX})^{-1} 
 	\\ -\Id_{\brX} -q' ( -\LL_{\brX \brX})^{-1} 
	\end{pmatrix} f}^p_{p,\XX}
 \\
& = \mu(\bX) \nor{\LL_{\bX \brX}(-\LL_{\brX \brX})^{-1} f}^p_{p,\bX}
 + \mu(\brX)  \nor {\pare{\Id_{\brX} + q' ( -\LL_{\brX \brX})^{-1}} f}^p_{p,\brX}
 \\
 & \leq  \cro{\pare{\frac{\alpha}{\beta}}^{p/p^*} \mu(\brX)
      + \pare{1+ \frac{q'}{\gamma}}^p \mu(\brX)} 
      \nor{f}^p_{p,\brX} 
 \mbox{ by } \eqref{Opernorm2.eq} \mbox{ and }
      				\eqref{Opernorm3.eq} \, . 
\end{align*}
\qed
 
 \subsection{Approximation operator norm.}
   This section aims to control the operator norm of $\bR$ defined in \eqref{bR.def}. 
   \begin{prop}[Approximation operator norm] 
  \label{bR-norm.prop}
  Let $\bX$ be any proper subset of $\XX$, $\brX = \XX \setminus \bX$, and let 
  $\bR$ be the operator defined in \eqref{bR.def}. 
    For any $p \geq 1$,  for any $f \in \ell_{p}(\bX, \mu_{\bX})$, 
  \begin{equation}
  \label{bR-norm.eq}
 \nor{\bR f}_{p,\XX}
   \leq  
\cro{\pare{1+ 2 \frac{\ba}{q'}}^p + \frac{\alpha}{\beta}}^{1/p} \,  
\mu(\bX)^{1/p}  \nor{ f}_{p,\bX} 
  \,  , 
  \end{equation}
  where $\ba$ is defined by \eqref{bar-alpha.eq}.
  \end{prop}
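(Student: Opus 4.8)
The plan is to split the $\ell_p(\XX,\mu)$ norm of $\bR f$ along the block decomposition \eqref{bR.def} of $\bR$. Since $\mu$ is a probability measure on $\XX$ and $\nor{g}_{p,A}^p = \mu(A)^{-1}\sum_{x \in A}\mu(x)\va{g(x)}^p$ for any $A \subset \XX$, one gets
\[
\nor{\bR f}_{p,\XX}^p = \mu(\bX)\,\nor{\pare{\Id_{\bX} - \tfrac{1}{q'}\bLL} f}_{p,\bX}^p + \mu(\brX)\,\nor{(-\LL_{\brX \brX})^{-1}\LL_{\brX \bX} f}_{p,\brX}^p .
\]
The second summand is already under control: by \eqref{Opernorm1.eq} of Lemma \ref{Opernorm.lem} it is bounded by $\mu(\brX)\cdot\frac{\alpha}{\beta}\frac{\mu(\bX)}{\mu(\brX)}\nor{f}_{p,\bX}^p = \frac{\alpha}{\beta}\mu(\bX)\nor{f}_{p,\bX}^p$, which yields the $\alpha/\beta$ contribution inside the bracket of \eqref{bR-norm.eq}. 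Hence all the work lies in estimating the diagonal block $M := \Id_{\bX} - \frac{1}{q'}\bLL$ as an operator on $\ell_p(\bX,\mu_{\bX})$.

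To treat $M$, I would use that $\bLL$ is a reversible Markov generator on $\bX$ (Lemma \ref{LbarSchur.lem}). Writing $\bLL f(\bx) = \sum_{\by}\bw(\bx,\by)(f(\by)-f(\bx))$, the entries of $M$ are $M(\bx,\bx) = 1 + \bw(\bx)/q'$ and $M(\bx,\by) = -\bw(\bx,\by)/q'$ for $\by \neq \bx$, so each row sums in absolute value to
\[
c_{\bx} := \sum_{\by}\va{M(\bx,\by)} = 1 + \frac{2\bw(\bx)}{q'} \leq 1 + \frac{2\ba}{q'} =: C ,
\]
by the definition \eqref{bar-alpha.eq} of $\ba$. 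The target estimate is then $\nor{Mf}_{p,\bX}\leq C\,\nor{f}_{p,\bX}$ for every $p \geq 1$.

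This operator bound is the heart of the matter, and I expect it to be the main obstacle, since the negative off-diagonal entries prevent $M$ from being a Markov-type contraction and the naive triangle inequality on a single row would lose the interplay between rows. The idea is a Schur-type argument mixing Jensen's inequality with reversibility. From $\va{Mf(\bx)}\leq \sum_{\by}\va{M(\bx,\by)}\va{f(\by)}$ and Jensen's inequality applied to the probability weights $\va{M(\bx,\by)}/c_{\bx}$, one obtains $\va{Mf(\bx)}^p \leq c_{\bx}^{p-1}\sum_{\by}\va{M(\bx,\by)}\va{f(\by)}^p \leq C^{p-1}\sum_{\by}\va{M(\bx,\by)}\va{f(\by)}^p$. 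Summing against $\mu$ and exchanging the order of summation, the reversibility identity $\mu(\bx)\va{M(\bx,\by)} = \mu(\by)\va{M(\by,\bx)}$ (itself a consequence of $\mu(\bx)\bLL(\bx,\by)=\mu(\by)\bLL(\by,\bx)$) turns the resulting column sum back into the row sum $c_{\by}\leq C$, so that $\sum_{\bx}\mu(\bx)\va{Mf(\bx)}^p \leq C^{p}\sum_{\by}\mu(\by)\va{f(\by)}^p$. Dividing by $\mu(\bX)$ gives $\nor{Mf}_{p,\bX}\leq C\,\nor{f}_{p,\bX}$.

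It then remains to assemble the two blocks:
\[
\nor{\bR f}_{p,\XX}^p \leq \mu(\bX)\cro{\pare{1 + \frac{2\ba}{q'}}^p + \frac{\alpha}{\beta}}\nor{f}_{p,\bX}^p ,
\]
and taking $p$-th roots recovers exactly \eqref{bR-norm.eq}. The only delicate step is the symmetrization of the double sum via $\mu$-reversibility, which is what makes the constant come out as $1 + 2\ba/q'$ uniformly in $p$ rather than with a $p$-dependent loss.
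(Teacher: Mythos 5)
Your proof is correct and follows the same overall route as the paper: the identical block decomposition $\nor{\bR f}_{p,\XX}^p = \mu(\bX)\,\nor{(\Id_{\bX}-\tfrac{1}{q'}\bLL)f}_{p,\bX}^p + \mu(\brX)\,\nor{(-\LL_{\brX\brX})^{-1}\LL_{\brX\bX}f}_{p,\brX}^p$, and the identical appeal to \eqref{Opernorm1.eq} for the lower block. The only divergence is in the diagonal block, and there your assessment of the difficulty is off: the paper disposes of it in two lines by the plain operator triangle inequality, $\nor{(\Id_{\bX}-\tfrac{1}{q'}\bLL)f}_{p,\bX}\le \nor{f}_{p,\bX}+\tfrac{1}{q'}\nor{\bLL f}_{p,\bX}$, followed by writing $\bLL=\ba(\tilde P-\Id_{\bX})$ with $\tilde P:=\Id_{\bX}+\bLL/\ba$ a stochastic matrix reversible w.r.t.\ $\mu_{\bX}$, hence an $\ell_p$-contraction, so that $\nor{\bLL f}_{p,\bX}\le 2\ba\nor{f}_{p,\bX}$. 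This yields exactly the constant $1+2\ba/q'$, so nothing is lost by the triangle inequality, contrary to your claim that it ``would lose the interplay between rows.'' Your Schur-test argument (Jensen on the normalized rows of the absolute-value matrix, then $\mu$-reversibility to turn weighted column sums back into row sums) is a valid, slightly heavier substitute; it is essentially the same computation that underlies the contraction property of $\tilde P$ (compare \eqref{Kq-norm.eq}), applied directly to the signed matrix $\Id_{\bX}-\tfrac{1}{q'}\bLL$ rather than to the stochastic matrix $\tilde P$.
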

 \begin{proof}
 For any $f \in \ell_p(\bX,\mu_{\bX})$, 
  \begin{align*}
\nor{\bR f}^p_{p,\XX}
& 
 = \nor{\begin{pmatrix} \Id_{\bX} - \frac{\bLL}{q'} 
  	\\  ( -\LL_{\brX \brX})^{-1}  \LL_{\brX \bX}
	\end{pmatrix} f}^p_{p,\XX}
 \\
& = \mu(\bX) \nor{\pare{\Id_{\bX} - \frac{\bLL}{q'}} f}^p_{p,\bX}
 + \mu(\brX) \nor { ( -\LL_{\brX \brX})^{-1}  \LL_{\brX \bX} f}^p_{p,\brX}
 \\
 & \leq  \mu(\bX) \nor{\pare{\Id_{\bX} - \frac{\bLL}{q'}} f}^p_{p,\bX} 
      +  \mu(\bX) \frac{\alpha}{\beta}  \nor{f}^p_{p,\bX} 
 \mbox{ by } \eqref{Opernorm1.eq} \, . 
\end{align*}
It just  remains to prove that for any  $f \in \ell_{p}(\bX, \mu_{\bX})$,  
\[ \nor{\bLL f}_{p,\bX} \leq 2 \ba \nor{f}_{p,\bX} \, . 
\]
 By definition of $\ba$, the matrix $\tilde{P} = \Id_{\bX} + \frac{\bLL}{\ba}$ is a stochastic
 matrix, which is symmetric w.r.t $\mu_{\bX}$. Therefore, it is contraction operator from
  $\ell_p(\bX,\mu_{\bX})$ to  $\ell_p(\bX,\mu_{\bX})$. Hence, 
 \[ \nor{\bLL f}_{p,\bX}  =  \ba \nor{(\Id_{\bX} - \tilde{P}) f}_{p,\bX}
 \leq 2 \ba \nor{f}_{p,\bX} \, . 
 \] 
  \end{proof}
 
 \subsection{Size of the detail.}
 In this section, we provide  control of the unnormalized detail coefficients $\brf$ in terms of the regularity of the signal $f$. 
    
\begin{prop}
   \label{brf.prop}
   For any $p \geq 1$ and any $f \in \ell_{p}(\XX,\mu)$, 
\[  \nor{\brf}_{p,\brX} =  \nor{(K_{q'}-\Id) f}_{p,\brX}
 \leq \frac{\max_x K_{q'}(x,\brX)^{1/p}}{q' \mu(\brX)^{1/p}} \nor{\LL f}_{p, \XX} \, . 
\]
   \end{prop}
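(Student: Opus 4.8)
The plan is to first rewrite the detail in terms of $K_{q'}$ applied to $\LL f$. Starting from the identity $K_{q'} = (\Id - \LL/q')^{-1}$ noted in the introduction, I would multiply on the right by $\Id - \LL/q'$ to get $K_{q'}(\Id - \LL/q') = \Id$, whence
\[ K_{q'} - \Id = \frac{1}{q'}\, K_{q'} \LL \, . \]
Consequently, for every $\brx \in \brX$, $\brf(\brx) = (K_{q'}-\Id)f(\brx) = \frac{1}{q'} K_{q'}(\LL f)(\brx)$, so it suffices to bound $\nor{K_{q'} g}_{p,\brX}$ in terms of $\nor{g}_{p,\XX}$ with $g := \LL f$, the prefactor $1/q'$ being inserted at the very end.

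Next I would expand the restricted norm using $\mu_{\brX} = \mu/\mu(\brX)$ on $\brX$, namely $\mu(\brX)\nor{K_{q'} g}_{p,\brX}^p = \sum_{\brx \in \brX} \mu(\brx)\,\va{K_{q'} g(\brx)}^p$. For each fixed $\brx$ the kernel $K_{q'}(\brx,\cdot)$ is a probability measure on $\XX$ (it is the law of $X(T_{q'})$ started at $\brx$), so convexity of $t \mapsto |t|^p$ and Jensen's inequality give $\va{K_{q'} g(\brx)}^p \leq K_{q'}(\va{g}^p)(\brx)$. Summing over $\brx \in \brX$ and invoking the reversibility of $K_{q'}$ with respect to $\mu$, i.e. $\mu(\brx) K_{q'}(\brx,y) = \mu(y) K_{q'}(y,\brx)$, I would interchange the order of summation:
\[ \sum_{\brx \in \brX} \mu(\brx)\, K_{q'}(\va{g}^p)(\brx) = \sum_{y \in \XX} \mu(y)\,\va{g(y)}^p \sum_{\brx \in \brX} K_{q'}(y,\brx) = \sum_{y \in \XX} \mu(y)\,\va{g(y)}^p\, K_{q'}(y,\brX) \, . \]
Bounding $K_{q'}(y,\brX) \leq \max_x K_{q'}(x,\brX)$ and recognizing $\sum_{y} \mu(y)\va{g(y)}^p = \nor{g}_{p,\XX}^p$ (as $\mu$ is a probability measure on $\XX$) yields $\mu(\brX)\nor{K_{q'} g}_{p,\brX}^p \leq \max_x K_{q'}(x,\brX)\,\nor{g}_{p,\XX}^p$; taking $p$-th roots and reinstating $g = \LL f$ together with the factor $1/q'$ gives the claimed inequality.

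The computation is essentially routine, and both ingredients used are already present in the preceding lemmas (Jensen's inequality appears in Lemma \ref{Kq.lem} and in the proof of Lemma \ref{Opernorm.lem}, and the reversibility swap is the same device used there). The only genuine idea is this symmetry trick, which converts the restriction $\brx \in \brX$ into the factor $K_{q'}(y,\brX)$ that appears in the statement; without it one would be stuck with $\sum_{\brx \in \brX}\mu(\brx)K_{q'}(\va{g}^p)(\brx)$, which does not directly recover $\nor{\LL f}_{p,\XX}$. The one point worth checking carefully is the direction of the identity $K_{q'} - \Id = q'^{-1} K_{q'}\LL$: deriving it by right-multiplication, as above, avoids having to invoke that $K_{q'}$ and $\LL$ commute (although they do, being functions of one another).
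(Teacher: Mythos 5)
Your proof is correct and follows essentially the same route as the paper: the identity $K_{q'}-\Id = \frac{1}{q'}K_{q'}\LL$, Jensen's inequality applied to the probability kernel $K_{q'}(\brx,\cdot)$, the reversibility swap $\mu(\brx)K_{q'}(\brx,x)=\mu(x)K_{q'}(x,\brx)$, and the final bound by $\max_x K_{q'}(x,\brX)$. Nothing is missing.
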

\begin{proof}
Note that by definition of $K_{q'}$, $K_{q'} - \Id = \frac{1}{q'} K_{q'} \LL $. Hence, 
\begin{align*}
\mu(\brX) \nor{ (K_{q'} - \Id) f}^p _{p,\brX} 
& =  \frac{1}{q'^p} \sum_{\brx\in \brX}\mu(\brx) \va{E_{\brx}\cro{ \LL f (X(T_{q'})}}^p 
\leq  \frac{1}{q'^p} \sum_{\brx\in \brX}\mu(\brx)  E_{\brx}\cro{ \va{ \LL f (X(T_{q'})}^p}
\, , 
\\
& \leq   \frac{1}{q'^p} \sum_{\brx\in \brX, x \in \XX}\mu(\brx) K_{q'}(\brx,x) \va{\LL f(x)}^p
\, , 
\\
& \leq  \frac{1}{q'^p} \sum_{\brx\in \brX, x \in \XX}\mu(x) K_{q'}(x,\brx) \va{\LL f(x)}^p 
\mbox{ by symmetry  } 
\, , 
\\
& 
\leq  \frac{\max_x K_{q'}(x,\brX)}{q'^p} \sum_{ x \in \XX}\mu(x) \va{\LL f(x)}^p
\, . 
\end{align*}
\end{proof}

  \subsection{Link between $(\XX,\LL)$ and $(\bX, \bLL)$.}
  We give bounds on the error in the interwining relation.
  \begin{prop}
  \label{intertwiningp.prop}
  Let $\bX$ be any proper subset of $\XX$, $\brX = \XX \setminus \bX$. Recall the 
definition of   $\beta$  given in \eqref{beta-gamma-def}. 
 For any $p \geq 1$, and any $f \in \ell_p(\XX,\mu)$, 
  \begin{equation}
  \label{intertwiningp.ineq}
  \nor{\pare{\bLL (K_{q'})_{\bX \XX} - (K_{q'})_{\bX \XX} \LL} f}_{p,\bX} 
  \leq 2 q'  \pare{\frac{\alpha }{\beta}}^{1/p^*} \frac{1}{\mu(\bX)^{1/p}} \nor{f}_{p,\XX}
  \, . 
  \end{equation}

 \end{prop}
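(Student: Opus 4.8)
The plan is to rewrite the intertwining error operator $\bLL (K_{q'})_{\bX \XX} - (K_{q'})_{\bX \XX} \LL$ in a form where the probabilistic interpretation from the earlier lemmas can be exploited, and then apply the same Jensen/symmetry machinery used in the proof of Lemma \ref{Opernorm.lem}. The starting observation is that $\LL$ and $K_{q'}$ commute as operators on all of $\XX$: indeed $K_{q'} = (\Id - \LL/q')^{-1}$, so $K_{q'}\LL = \LL K_{q'}$ on $\ell_p(\XX,\mu)$. Therefore the error comes \emph{entirely} from the mismatch between $\bLL$ acting on the $\bX$-block and $\LL$ acting on all of $\XX$. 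Concretely I would write $\bLL (K_{q'})_{\bX \XX} - (K_{q'})_{\bX \XX} \LL = \bLL (K_{q'})_{\bX \XX} - (K_{q'} \LL)_{\bX \XX}$ and then, using $(K_{q'}\LL)_{\bX\XX} = (\LL K_{q'})_{\bX\XX} = \LL_{\bX\bX}(K_{q'})_{\bX\XX} + \LL_{\bX\brX}(K_{q'})_{\brX\XX}$, express everything in terms of the block structure of $\LL$.

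Next I would use the explicit formula $\bLL = \LL_{\bX\bX} - \LL_{\bX\brX}(\LL_{\brX\brX})^{-1}\LL_{\brX\bX}$ (the Schur complement, Lemma \ref{LbarSchur.lem}). Substituting this into the difference, the $\LL_{\bX\bX}(K_{q'})_{\bX\XX}$ terms cancel, leaving
\begin{equation*}
\bLL (K_{q'})_{\bX \XX} - (K_{q'})_{\bX \XX} \LL
= -\LL_{\bX\brX}(-\LL_{\brX\brX})^{-1}\LL_{\brX\bX}(K_{q'})_{\bX\XX} - \LL_{\bX\brX}(K_{q'})_{\brX\XX}.
\end{equation*}
I expect a further simplification: the row-vector identity satisfied by $K_{q'}$ restricted to the two blocks should let me combine $(-\LL_{\brX\brX})^{-1}\LL_{\brX\bX}(K_{q'})_{\bX\XX}$ with $(K_{q'})_{\brX\XX}$. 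Since $K_{q'} = q'(q'\Id - \LL)^{-1}$, the relation $(q'\Id - \LL)K_{q'} = q'\Id$ read on the $\brX$-rows gives $-\LL_{\brX\bX}(K_{q'})_{\bX\XX} + (q'\Id_{\brX} - \LL_{\brX\brX})(K_{q'})_{\brX\XX} = q'(\Id)_{\brX\XX}$, and this should collapse the whole expression to something proportional to $q'\,\LL_{\bX\brX}(-\LL_{\brX\brX})^{-1}$ applied to $f$ restricted appropriately, up to a harmless term. This is the algebraic heart of the argument.

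Once the operator is reduced to (essentially) $q'$ times $\LL_{\bX\brX}(-\LL_{\brX\brX})^{-1}$ acting on $f$, the bound follows immediately from \eqref{Opernorm2.eq} of Lemma \ref{Opernorm.lem}: that lemma gives
\begin{equation*}
\nor{\LL_{\bX\brX}(-\LL_{\brX\brX})^{-1} g}_{p,\bX} \leq \pare{\frac{\alpha}{\beta}}^{1/p^*}\pare{\frac{\mu(\brX)}{\mu(\bX)}}^{1/p}\nor{g}_{p,\brX},
\end{equation*}
and multiplying by $q'$, bounding $\nor{g}_{p,\brX} \le \nor{f}_{p,\XX}\,\mu(\brX)^{-1/p}$ where $g$ is the relevant restriction of $f$ (or of $K_{q'}f$, which is an $\ell_p$-contraction by \eqref{Kq-norm.eq}), and collecting the measure factors yields the factor $2q'(\alpha/\beta)^{1/p^*}\mu(\bX)^{-1/p}$. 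The factor $2$ plausibly absorbs the two terms in the reduced expression.

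\textbf{The main obstacle} I anticipate is the algebraic reduction in the second paragraph: keeping the block bookkeeping straight and verifying that the terms genuinely collapse to a single clean operator (rather than leaving residual pieces that are not controlled by $\LL_{\bX\brX}(-\LL_{\brX\brX})^{-1}$). In particular I will need to be careful about which factor the function $f$ multiplies and whether a $K_{q'}$-contraction step is needed to pass from $K_{q'}f$ back to $f$ while preserving the $\ell_p$-norm. Getting exactly the constant $2$ (and the exponent $1/p^*$ rather than $1/p$ on $\alpha/\beta$, which signals that the adjoint form \eqref{Opernorm2.eq} rather than \eqref{Opernorm1.eq} is the right tool) is where the care lies; everything after the reduction is a direct citation of Lemma \ref{Opernorm.lem}.
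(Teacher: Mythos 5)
Your proposal is correct and follows essentially the same route as the paper: substitute the Schur-complement formula for $\bLL$, cancel the $\LL_{\bX\bX}$ terms, factor out $\LL_{\bX\brX}(-\LL_{\brX\brX})^{-1}$ to obtain $q'\,\LL_{\bX\brX}(-\LL_{\brX\brX})^{-1}(K_{q'}-\Id)_{\brX\XX}$, and conclude with \eqref{Opernorm2.eq} together with the contraction bound \eqref{Kq-norm.eq} --- which is indeed where the factor $2$ and the exponent $1/p^*$ come from. The only slip is a sign in your displayed intermediate identity: the first term should be $+\LL_{\bX\brX}(-\LL_{\brX\brX})^{-1}\LL_{\brX\bX}(K_{q'})_{\bX\XX}$, and with that sign the two terms combine exactly into $\LL_{\bX\brX}(-\LL_{\brX\brX})^{-1}(\LL K_{q'})_{\brX\XX}$ as you anticipated.
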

  \begin{proof} 
  Writing that $\bLL$ is a Schur complement and using that $K_{q'}\LL=\LL K_{q'}$, we get

  \begin{align}
 \bLL (K_{q'})_{\bX \XX} - (K_{q'})_{\bX \XX} \LL
 &
 =   \LL_{\bX \bX} (K_{q'})_{\bX \XX} 
 + \LL_{\bX \brX} \pare{- \LL_{\brX \brX}}^{-1} \LL_{\brX \bX} (K_{q'})_{\bX \XX}
  - ( \LL K_{q'})_{\bX \XX}
\nonumber
\\
&
  = -  \LL_{\bX \brX} (K_{q'})_{\brX \XX} 
   + \LL_{\bX \brX} \pare{- \LL_{\brX \brX}}^{-1} \LL_{\brX \bX} (K_{q'})_{\bX \XX}
\nonumber
\\
&
= \LL_{\bX \brX} \pare{- \LL_{\brX \brX}}^{-1} 
\pare{\LL_{\brX \brX}(K_{q'})_{\brX \XX}  + \LL_{\brX \bX} (K_{q'})_{\bX \XX} }
\nonumber
\\
& 
= \LL_{\bX \brX} \pare{- \LL_{\brX \brX}}^{-1} \pare{\LL K_{q'}}_{\brX \XX}
\,. 
\label{VT.exp}
 \end{align}

It has already been proven in Lemma \ref{Opernorm.lem} (see \eqref{Opernorm2.eq}), that 
for any $f \in \ell_p(\brX,\mu_{\brX})$,
\[
\nor{ \LL_{\bX \brX} \pare{- \LL_{\brX \brX}}^{-1} f}_{p,\bX} 
\leq \pare{\frac{\alpha}{\beta}}^{1/p^*}  \pare{\frac{\mu(\brX)}{\mu(\bX)}}^{1/p}
 \nor{ f}_{p,\brX}
\]
 To get \eqref{intertwiningp.ineq}, it just remains to prove that for any $f \in \ell_p(\XX,\mu)$,
 \begin{equation}
 \nor{\LL K_{q'} f}_{p,\brX}  \leq \frac{2q'}{\mu(\brX)^{1/p} } \nor{f}_{p,\XX} 
 \, .
 \end{equation}
But this is a direct  consequence of the equality $\LL K_{q'} =q' (K_{q'}-\Id)$, and of the inequality \eqref{Kq-norm.eq}.
\end{proof} 

\section{The multiresolution scheme.}

We assume in this section that we have at our disposal a decreasing sequence of proper 
subsets of $\XX$: 
\[ \XX_0 = \XX \supsetneq \XX_1 \supsetneq \cdots \supsetneq \XX_k \supsetneq \emptyset 
\, , 
\]
and a sequence of non negative real parameters $\acc{q'_i; i = 0, \cdots k-1}$. We will discuss later the choice of such sequences.  
To stick to the previous notations, we define for $i \geq 0$, 
\begin{itemize}
\item $\bX_i=\XX_{i+1}$;
\item  $\brX_i = \XX_i \setminus \XX_{i+1}$;
\item $\LL_0 = \LL$, and for $i \geq 0$, $\LL_{i+1} = S_{\LL_i}((\LL_i)_{\brX_i,\brX_i})$. 
$\LL_i$ is a  Markov generator on $\XX_i$ which is symmetric with respect to $\mu$.
\item $\alpha_i = \max_{x_i \in \XX_i} (-\LL_i)(x_i,x_i)$;
\item $1/\beta_i = \max_{\bx_{i} \in \XX_{i+1}} E_{\bx_i}\cro{H^{(i)+}_{\XX_{i+1}}-\tau_1^{(i)}}$, where $H^{(i)+}_A$ is the return time on the set $A$ of a Markov process $X^{(i)}$ with
generator $\LL_i$, and $\tau_1^{(i)}$ is its first jump time (exponential time with parameter $\alpha_i$);
\item $1/\gamma_i = \max_{\brx_i \in \brX_i} E_{\brx_i}\cro{H^{(i)}_{\XX_{i+1}}}$;
\item for any $x_i, y_i \in \XX_i$, $K_i(x_i,y_i) = q'_i (q'_i \Id_{\XX_i} - \LL_i)^{-1}(x_i,y_i)
= \P_{x_i} \cro{X^{(i)}(T_{q'_i}) = y_i}$, where $T_{q'_i}$ is an exponential random variable with parameter
$q'_i$ independent of $X^{(i)}$. 
\end{itemize}
Given a signal $f_i$ defined on $\XX_i$, we can define its approximation coefficients at
scale $i+1$:
\[ \forall x_{i+1} \in \XX_{i+1} \, , \,\, 
f_{i+1}(x_{i+1})= \bar{f}_i (x_{i+1}) = \bK_{i} (f_i) (x_{i+1}) \, ,   \mbox{ where }  \bK_i := (K_i)_{\XX_{i+1}, \XX_i} \, , 
\]
 and its detail coefficients at scale $i+1$: 
 \[ \forall \brx_i \in \brX_{i} \, , \,\, 
 g_{i+1}(\brx_i) = \brf_i(\brx_i) =  \brK_{i}( f_i)(\brx_i)- f_i(\brx_i)  \, \mbox{ where } \brK_i = (K_i)_{\brX_{i}, \XX_i} \, , 
\]
so that 
\[ f_i = \bR_i f_{i+1} + \brR_{i} g_{i+1}  \, ,
\]
with
\[ \bR_i =  \begin{pmatrix} \Id_{\XX_{i+1}} - \frac{1}{q'_i} \LL_{i+1}
 \\ ((- \LL_i)_{\brX_i \brX_i})^{-1} (\LL_i)_{\brX_i \XX_{i+1}}
\end{pmatrix}  
\, , 
\,\, 
\mbox{ and }
\brR_i 
=\begin{pmatrix} ( \LL_i)_{\XX_{i+1} \brX_i} ((- \LL_i)_{\brX_i \brX_i})^{-1} 
\\  q' _i ((\LL_i)_{\brX_i \brX_i})^{-1} - \Id_{\brX_i}
\end{pmatrix}
 \, . 
\]
Iterating the procedure, we get the usual multiresolution scheme: 
\[ \begin{matrix} f_0 = f & \rightarrow & f_1 & \rightarrow & f_2 & \cdots &  \rightarrow & f_k
\\ & \searrow &  & \searrow &   & & \searrow & 
\\[-.1cm]
&                 & g_1 &         &g_2 &              &  & g_k 
\end{matrix}
\]

\subsection{Analysis operator norm}
For $k \geq 1$, let us consider the analysis operator $U_k$:
\[ U_k : \begin{array}{lcl} \ell_p(\XX) & \rightarrow & 
	\ell_p(\XX_k,\mu_{\XX_k}) \times \ell_p(\brX_{k-1}, \mu_{\brX_{k-1}}) \times \cdots \times \ell_p(\brX_{0}, \mu_{\brX_{0}})
		\\
		f & \mapsto & [f_k,g_k,g_{k-1},\cdots, g_1] 
		\end{array}
		\]
The space $\ell_p(\XX_k,\mu_{\XX_k}) \times \ell_p(\brX_{k-1}, \mu_{\brX_{k-1}}) \times \cdots \times \ell_p(\brX_{0}, \mu_{\brX_{0}})$
 is endowed
with the norm 
\[ \nor{[f_k,g_k,g_{k-1},\cdots, g_1]}_p := 
\pare{ \mu(\XX_k) \nor{f_k}_{p,\XX_k}^p +  \sum_{i=1}^k \mu(\brX_{i-1}) \nor{g_i}_{p,\brX_{i-1}}^p}^{1/p}
\, . 
\]
\begin{prop}[Analysis operator norm]
For any $f \in \ell_p(\XX,\mu)$,
\begin{equation}
\label{analyse.eq}
\nor{U_k(f)}_p \leq 2^{1/p^*} (1+k)^{1/p} \nor{f}_p 
\, . 
\end{equation}
\end{prop}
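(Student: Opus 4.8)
The plan is to pass to the unnormalized $\ell_p$ masses
\[ F_i := \mu(\XX_i)\,\nor{f_i}_{p,\XX_i}^p = \sum_{x \in \XX_i}\mu(x)\,\va{f_i(x)}^p \, , \qquad G_i := \mu(\brX_{i-1})\,\nor{g_i}_{p,\brX_{i-1}}^p \, , \]
so that by definition of the product norm $\nor{U_k(f)}_p^p = F_k + \sum_{i=1}^k G_i$, while $F_0 = \nor{f}_p^p$ since $\mu(\XX)=1$. Because $p/p^* = p-1$, raising \eqref{analyse.eq} to the power $p$ shows that the whole statement is equivalent to the single scalar inequality $F_k + \sum_{i=1}^k G_i \leq 2^{p-1}(1+k)\,F_0$, and this is what I would establish.

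First I would record the one-step contraction. Each $K_i = q_i'(q_i'\,\Id_{\XX_i} - \LL_i)^{-1}$ is a probability kernel that is symmetric with respect to $\mu$ on $\XX_i$ (since $\LL_i$ is), so the Jensen computation behind \eqref{Kq-norm.eq} applies verbatim at level $i$ and gives $\sum_{x\in\XX_i}\mu(x)\,\va{K_i f_i(x)}^p \leq F_i$. Writing $D_i := \sum_{x\in\brX_i}\mu(x)\,\va{K_i f_i(x)}^p$ and splitting the sum over $\XX_{i+1}$ and $\brX_i$, and recalling $f_{i+1}=(K_if_i)|_{\XX_{i+1}}$, this reads $F_{i+1} + D_i \leq F_i$. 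In particular $(F_i)$ is nonincreasing and, by telescoping, $\sum_{i=0}^{k-1} D_i \leq F_0 - F_k$.

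Next I would estimate the detail masses. Since $g_{i+1} = (K_i f_i - f_i)$ on $\brX_i$, the elementary convexity bound $\va{a-b}^p \leq 2^{p-1}(\va{a}^p+\va{b}^p)$ yields $G_{i+1} \leq 2^{p-1}(D_i + E_i)$, where $E_i := \sum_{x\in\brX_i}\mu(x)\,\va{f_i(x)}^p \leq F_i \leq F_0$ because $\brX_i \subseteq \XX_i$. Summing over $i$, the telescoping bound on $\sum_i D_i$ and the crude bound $\sum_{i=0}^{k-1}E_i \leq kF_0$ give $\sum_{i=1}^k G_i \leq 2^{p-1}\bigl((F_0-F_k) + kF_0\bigr)$. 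Adding $F_k$ and using $F_k(1-2^{p-1})\leq 0$ (valid as $2^{p-1}\geq 1$ for $p\geq1$) collapses this to exactly $F_k + \sum_{i=1}^k G_i \leq 2^{p-1}(1+k)F_0$, as desired.

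The one point requiring care, and the main obstacle to obtaining the sharp factor $(1+k)$ rather than something of order $2^k$ or $2^{p-1}(2+k)$, is the separate treatment of the two pieces of $g_{i+1}=K_if_i-f_i$: the $K_if_i$ contribution must be routed through the contraction \emph{deficits} $D_i$, whose sum telescopes and therefore costs only a single $F_0$ overall, whereas the $f_i$ contribution is charged once per scale and costs $kF_0$. Bounding $g_{i+1}$ directly through a triangle inequality at each level, without isolating these two effects, would fail to telescope and would not reproduce the stated constant.
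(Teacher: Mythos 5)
Your proof is correct and follows essentially the same route as the paper's: both split $g_{i+1}=K_if_i-f_i$ via $\va{a-b}^p\le 2^{p-1}(\va{a}^p+\va{b}^p)$, use the Jensen-plus-symmetry contraction of the kernels $K_i$, and charge the $K_if_i$ part to a telescoping mass-conservation identity (your deficits $D_i$ correspond to the paper's quantity $a_k$, which collapses to $\nor{f}_p^p$) while the $f_i$ part costs one copy of $\nor{f}_p^p$ per level (the paper's $b_k\le k\nor{f}_p^p$). Your level-by-level bookkeeping with $F_i, D_i, E_i$ is just a recursive reformulation of the paper's fully expanded computation, and your final absorption of $F_k(1-2^{p-1})\le 0$ recovers the identical constant $2^{1/p^*}(1+k)^{1/p}$.
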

\noindent

{\it Remark:} To maintain localization properties of our wavelet basis we do not use any orthogonalization procedure. Then, working with a general graph, we do not expect a conditioning of the multilevel analysis or reconstruction operator that is independent from the graph size. However for $p=\infty$ our analysis operator norm is independent of the graph size, while, for $p<\infty$ it depends only on the number of levels, which is typically only logarithmic in the graph size. Making a careful analysis of the following proof, we will see that at least in the case $p=1$ our estimation is optimal for the analysis operator we proposed. As far as the multilevel reconstruction operator is concerned, its norm is strongly related with the choice of the $q'_i$ that we will discuss at the end of Section \ref{choixq.sec}.

\begin{proof}

\begin{align*}
\nor{U_k(f)}_p^p
& = 
\sum_{x_k \in \XX_k} \mu(x_k) \va{f_k(x_k)}^p + \sum_{i=1}^k \sum_{\brx_{i-1} \in \brX_{i-1}} \mu(\brx_{i-1}) 
\va{g_i(\brx_{i-1})}^p  \, , 
\\
& = \sum_{x_k \in \XX_k} \mu(x_k) \va{(\bK_{k-1} ... \cdots \bK_0 )(f)(x_k)}^p 
\\
& \hspace*{1cm} 
+ \sum_{i=1}^k \sum_{\brx_{i-1} \in \brX_{i-1}} \mu(\brx_{i-1})  
\va{((\brK_{i-1}-\Id_{\brX_{i-1} \XX_{i-1}}) \bK_{i-2} \cdots \bK_0)(f)(\brx_{i-1})}^p \, , 
\\
& \leq \sum_{x_k \in \XX_k} \mu(x_k) \va{(\bK_{k-1} ... \cdots \bK_0 )(f)(x_k)}^p
\\
& \hspace*{1cm} 
+ 2^{p-1}
\sum_{i=1}^k \sum_{\brx_{i-1} \in \brX_{i-1}} \mu(\brx_{i-1})  
\left( \va{(\brK_{i-1} \bK_{i-2} \cdots \bK_0)(f)(\brx_{i-1})}^p \right.\\
  &\hspace*{6cm}    + \va{( \bK_{i-2} \cdots \bK_0)(f)(\brx_{i-1})}^p\Bigr)
\, , 
\\
& 
\leq 2^{p-1} (a_k + b_k) 
\, , 
\end{align*}
where
\begin{align*}
a_k & = \sum_{x_k \in \XX_k} \mu(x_k) (\bK_{k-1} \cdots \bK_0 )(\va{f}^p)(x_k)
+ \sum_{i=1}^k \sum_{\brx_{i-1} \in \brX_{i-1}} \mu(\brx_{i-1})   
(\brK_{i-1} \bK_{i-2} \cdots \bK_0)(\va{f}^p)(\brx_{i-1})
\\
b_k & =  \sum_{i=1}^k \sum_{\brx_{i-1} \in \brX_{i-1}} \mu(\brx_{i-1})   
( \bK_{i-2} \cdots \bK_0)(\va{f}^p)(\brx_{i-1}) \, . 
\end{align*}
Note that 
\begin{align*}
& \sum_{x_k \in \XX_k} \mu(x_k) (\bK_{k-1}\cdots \bK_0 )(\va{f}^p)(x_k)
+ \sum_{\brx_{k-1} \in \brX_{k-1}} \mu(\brx_{k-1})   
(\brK_{k-1} \bK_{k-2} \cdots \bK_0)(\va{f}^p)(\brx_{k-1})
\\ 
& = \sum_{x_{k-1} \in \XX_{k-1}} \mu(x_{k-1}) (K_{k-1} \bK_{k-2} \cdots \bK_0 )(\va{f}^p)(x_{k-1})
\\
 & = \sum_{x_{k-1} \in \XX_{k-1}} \mu(x_{k-1}) (\bK_{k-2} \cdots \bK_0 )(\va{f}^p)(x_{k-1})
\end{align*}
by symmetry. Therefore, $a_k=a_{k-1}=a_1=\nor{f}_{p,\XX}^p$. Concerning $b_k$, using symmetry,
it holds 

\begin{align*}
&\sum\limits_{i=1}^k\sum_{\brx_{i-1} \in \brX_{i-1}} \mu(\brx_{i-1})   ( \bK_{i-2} \cdots \bK_0)(\va{f}^p)(\brx_{i-1})\\
&=\sum\limits_{x_0\in\XX_0}\mu(x_0)|f(x_0)|^p
\sum\limits_{i=1}^k\sum\limits_{x_{1}\in\XX_1}\cdots\!\!\!\!\!\sum\limits_{x_{i-2}\in\XX_{i-2}}\sum\limits_{\brx_{i-1}\in\brX_{i-1}}
K_0(x_0,x_1)\cdots K_{i-3}(x_{i-3},x_{i-2})K_{i-2}(x_{i-2},\brx_{i-1})\\
&\leq k\nor{f}_{p,\XX}^p \, , 
\end{align*}
This ends the proof of \eqref{analyse.eq}. 
\end{proof}

We finally give an example for which one can check that in the case $p=1$ our estimation \eqref{analyse.eq} is optimal. Consider the graph of size $n=l+1$ with vertex set $\XX=\{0,1,\dots,l\}$ and transition rates $$w(x,y)=\left\{\begin{array}{ll}
\varepsilon^{2x+1}&\mbox{ if } y=x+1,\\
\varepsilon^{2x}&\mbox{ if } y=x-1,\\
0&\mbox{ otherwise,}
\end{array}\right.$$
together with the signal $$f(x)=\varepsilon^{2x-l},\qquad 0\leq x\leq l,$$ for $\varepsilon\ll 1$. Choosing $q_i=q'_i=\varepsilon^{i+\frac{3}{2}}$ for $0\leq i\leq k$, the random subsets $\XX_0\supset\XX_1\supset \XX_2\supset\cdots $ sampled through Wilson's algorithm are typically given by $$\XX_i=\{i,\dots,l\},\qquad i\leq k, $$ and the inequalities of the previous proof are turned into equalities when $p=1$ and in the limit $\varepsilon\rightarrow 0$. 

\subsection{Approximation error}
\label{jackson.sec}
Given the coefficients $[f_k,g_k,g_{k-1},\cdots, g_1]$,
the reconstruction of $f=f_0$ is 
\begin{align*}
 f=f_0 & = \bR_0 f_1 + \brR_0 g_1
 \\
	  &  = \bR_0 \bR_1 f_2 + \bR_0 \brR_1 g_2 + \brR_0 g_1
\\
& = \bR_0 \bR_1 \cdots \bR_{k-1} f_k
+ \sum_{j=0}^{k-1} (\bR_0 \cdots \bR_{j-1})   \brR_j g_{j+1}
 \, .
\end{align*}
The approximation of $f$ at scale $k$ is thus $\bR_0 \bR_1 \cdots \bR_{k-1} f_k$ and 
we have the following Jackson's type inequality:

\begin{prop}[Jackson's inequality]
\label{jackson.prop}
For $i \in \acc{0, \cdots, k-1}$ let $\va{\bR_i}_p$   ($\va{\brR_i}_p$ respectively) denote the norm operator of 
$\bR_i: \ell_p(\XX_{i+1},\mu_{\XX_{i+1}}) \mapsto \ell_p(\XX_i,\mu_{\XX_i})$ 
($\brR_i: \ell_p(\brX_{i},\mu_{\brX_i}) \mapsto \ell_p(\XX_i,\mu_{\XX_i})$ respectively). We set 
$$  \br_{i,p}  := \pare{\frac{\mu(\XX_i)}{\mu(\XX_{i+1}) }}^{1/p}  \va{\bR_i}_p \, , 
\brr_{i,p}  := \pare{\frac{\mu(\XX_i)}{\mu(\brX_{i}) }}^{1/p}  \va{\brR_i}_p \, ,
\mbox{ and }
e_{i,p}  := \pare{\frac{\mu(\XX_{i+1})}{\mu(\XX_{i}) }}^{1/p}  \va{ \LL_{i+1} \bK_{i} -  \bK_{i}  \LL_{i}}_p  \, . 
$$ 
For any $p \geq 1$ and any $f \in \ell_p(\XX,\mu)$,
\begin{align}
\nonumber
 \nor{f-\bR_0 \bR_1 \cdots \bR_{k-1} f_k}_{p,\XX}
 & \leq  \acc{\sum_{j=0}^{k-1} \br_{0,p}  \cdots \br_{j-1,p}  \brr_{j,p}  \frac{1}{q'_j}} \nor{\LL f}_{p,\XX}
\\
& \hspace*{2cm} 
+ \acc{\sum_{j=0}^{k-1}  \br_{0,p}  \cdots \br_{j-1,p} \brr_{j,p}  \frac{1}{q'_j}
 \sum_{l=0}^{j-1} e_{l,p}  } \nor{f}_{p,\XX}
 \, . 
 \label{jackson.ineq}
\end{align}
\end{prop}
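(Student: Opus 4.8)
The plan is to begin from the reconstruction identity established just before the statement, namely
\[ f - \bR_0 \bR_1 \cdots \bR_{k-1} f_k = \sum_{j=0}^{k-1} (\bR_0 \cdots \bR_{j-1}) \brR_j \, g_{j+1} \, , \]
so that by the triangle inequality and submultiplicativity of operator norms,
\[ \nor{f - \bR_0 \cdots \bR_{k-1} f_k}_{p,\XX} \leq \sum_{j=0}^{k-1} \va{\bR_0}_p \cdots \va{\bR_{j-1}}_p \, \va{\brR_j}_p \, \nor{g_{j+1}}_{p,\brX_j} \, . \]
The whole difficulty is then bookkeeping, since each factor carries its own conditional measure; I would handle this by working with the \emph{unnormalized} quantities $\mu(\XX_i)^{1/p}\nor{\cdot}_{p,\XX_i}$, for which the measure ratios appearing in the definitions of $\br_{i,p}$, $\brr_{j,p}$ and $e_{l,p}$ are exactly designed to telescope. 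The elementary fact I would isolate first is that each restricted kernel $\bK_i = (K_i)_{\XX_{i+1},\XX_i}$ is a contraction for this unnormalized norm, i.e. $\mu(\XX_{i+1})^{1/p}\nor{\bK_i h}_{p,\XX_{i+1}} \leq \mu(\XX_i)^{1/p}\nor{h}_{p,\XX_i}$; this follows exactly as in \eqref{Kq-norm.eq}, using that $K_i$ is a $\mu$-symmetric probability kernel on $\XX_i$ and that the output sum is restricted to $\XX_{i+1}\subset\XX_i$.

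Next I would control the detail $g_{j+1}$. Using the scale-$j$ version of $K_{q'}-\Id = \frac{1}{q'}K_{q'}\LL$, one has $g_{j+1} = \frac{1}{q'_j}(K_j \LL_j f_j)_{\brX_j}$; restricting the output sum of the contraction $K_j$ to $\brX_j$ gives
\[ \nor{g_{j+1}}_{p,\brX_j} \leq \frac{1}{q'_j}\pare{\frac{\mu(\XX_j)}{\mu(\brX_j)}}^{1/p}\nor{\LL_j f_j}_{p,\XX_j} \, , \]
which reduces everything to estimating $\nor{\LL_j f_j}_{p,\XX_j}$.

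The heart of the argument is an intertwining recursion for $\LL_j f_j$. Since $f_j = \bK_{j-1}f_{j-1}$ and $\LL_j \bK_{j-1} = \bK_{j-1}\LL_{j-1} + (\LL_j \bK_{j-1}-\bK_{j-1}\LL_{j-1})$, one gets $\LL_j f_j = \bK_{j-1}(\LL_{j-1}f_{j-1}) + (\LL_j\bK_{j-1}-\bK_{j-1}\LL_{j-1})f_{j-1}$; unrolling this down to scale $0$ yields
\[ \LL_j f_j = \bK_{j-1}\cdots\bK_0\, \LL f + \sum_{m=0}^{j-1}\bK_{j-1}\cdots\bK_{m+1}\pare{\LL_{m+1}\bK_m - \bK_m\LL_m} f_m \, . \]
Taking the unnormalized norm $\mu(\XX_j)^{1/p}\nor{\cdot}_{p,\XX_j}$ of both sides, the contraction property bounds the leading term by $\nor{\LL f}_{p,\XX}$ (recall $\mu(\XX_0)=1$), while each error term is bounded by $e_{m,p}\nor{f}_{p,\XX}$, using the contraction of the $\bK$'s on both sides, the definition of $e_{m,p}$, and $\mu(\XX_m)^{1/p}\nor{f_m}_{p,\XX_m}\leq\nor{f}_{p,\XX}$ (once more by contraction).

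Feeding these estimates back and rewriting $\va{\bR_i}_p = (\mu(\XX_{i+1})/\mu(\XX_i))^{1/p}\br_{i,p}$ and $\va{\brR_j}_p = (\mu(\brX_j)/\mu(\XX_j))^{1/p}\brr_{j,p}$, the products of $\mu(\XX_i)$ factors telescope and the $\mu(\brX_j)$ factor coming from $\va{\brR_j}_p$ cancels the one in the bound on $\nor{g_{j+1}}_{p,\brX_j}$, leaving precisely
\[ \sum_{j=0}^{k-1} \br_{0,p}\cdots\br_{j-1,p}\,\brr_{j,p}\,\frac{1}{q'_j}\cro{\nor{\LL f}_{p,\XX} + \sum_{l=0}^{j-1} e_{l,p}\nor{f}_{p,\XX}} \, , \]
which is exactly \eqref{jackson.ineq}. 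I expect the main obstacle to be purely the disciplined tracking of these normalization constants through the composition; the analytic inputs, namely the contraction of $\mu$-symmetric probability kernels and the one-line intertwining recursion, are straightforward.
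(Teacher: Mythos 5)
Your proposal is correct and follows essentially the same route as the paper: the same decomposition $f-\bR_0\cdots\bR_{k-1}f_k=\sum_j(\bR_0\cdots\bR_{j-1})\brR_j g_{j+1}$, the same intertwining recursion unrolling $\LL_j f_j$ down to scale $0$, and the same contraction property of the $\mu$-symmetric kernels $\bK_l$ from \eqref{Kq-norm.eq} to handle the normalization factors. The only cosmetic difference is that you rederive the bound $\nor{g_{j+1}}_{p,\brX_j}\leq \frac{1}{q'_j}\bigl(\mu(\XX_j)/\mu(\brX_j)\bigr)^{1/p}\nor{\LL_j f_j}_{p,\XX_j}$ inline via $K_j-\Id=\frac{1}{q'_j}K_j\LL_j$, whereas the paper simply invokes Proposition \ref{brf.prop}, which contains exactly that argument.
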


Note that it has been proven in Propositions \ref{bR-norm.prop}, \ref{brR-norm.prop} and  \ref{intertwiningp.prop} that 
\[ \br_{i,p}  \leq \cro{\pare{1 + 2 \frac{\alpha_{i+1}}{q'_i}}^p + \frac{\alpha_i}{\beta_i}}^{1/p}  
\, , 
\, \, 
\brr_{i,p}  \leq \cro{\pare{\frac{\alpha_i}{\beta_i}}^{p/p^*} 
+ \pare{1 + \frac{q'_i}{\gamma_i}}^p}^{1/p}  
\, , 
\mbox{ and } e_{i,p}  \leq 2 q'_{i} \pare{\frac{\alpha_{i}}{\beta_{i}}}^{1/p^*} \, . 
\]
Moreover, we would like to stress the fact that the term involving $\nor{f}_{p,\XX}$ in \eqref{jackson.ineq} is linked to the error in the intertwining relation, and would disappear if 
 this relation was exact.\\

In case  the intertwining relation error is small, we can interpret our inequality (\ref{jackson.ineq}) as a Jackson type inequality.
Indeed in the setting of approximation theory, Jackson type inequalities relate the smoothness of a function with the approximation properties of a multiscale basis. Let us give some more details on these classical results in the setting of wavelet analysis. Assume $\phi$ is a real valued scaling function, in the Schwartz class or compactly supported, i.e a function such that one can find a wavelet $\psi$ such that $\{\phi(.-k),k\in\Z\}\cup\{x\mapsto 2^{j/2}\psi(2^j.-k),j\in\N,k\in\Z\} $ is an orthonormal basis of $L^2(\R)$, a so-called wavelet basis. The construction of such kind of basis has a deep relationship with the discrete scheme described in the introduction \ref{sec:intro} as it is explained in \cite{MAL} (see also \cite{DAUB} for details).

We can define for $j\geq 0$, $V_j:=Span\{\phi_{j,k}=2^{j/2}\phi(2^j.-k),k\in\Z\}$ and $P_j f$ the orthogonal projector on $V_j$.

The remarkable properties of wavelet basis make it possible to analyse functions in other functional spaces than $L^2(\R)$. Suppose the function $\psi$ of our basis has $n$ vanishing moments with $n\in\N^*$ (by construction $\psi$ has always at least one vanishing moment).  Let $1\leq p\leq \infty$ and $W^{n,p}$ denote the Sobolev space of functions $f$ in $L^p$ whose derivative of order $n$ belongs to $L^p$. Let $|f|_{W^{n,p}}=\|\frac{d^n f}{dx^n}\|_{L^p}$.

We have indeed the following result in the classical setting (see \cite{COHEN} for a detailed proof).

\begin{theo}[Classical Jackson's inequality]

There exists a constant $C>0$ such that for all $f\in W^{n,p}$ and $ j\geq 0$,
\begin{equation}\label{class.jackson.ineq}
\|f-P_j f\|_{L^p}\leq C2^{-nj}|f|_{W^{n,p}} \, . 
\end{equation}
\end{theo}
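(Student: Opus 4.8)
The plan is to exploit the fact that the detail (wavelet) spaces orthogonal to $V_j$ capture exactly $f - P_j f$, and to estimate the contribution of each dyadic scale separately using the vanishing moments of $\psi$. Writing $Q_{j'} := P_{j'+1} - P_{j'}$ for the orthogonal projection onto the detail space $W_{j'} = \Span\{\psi_{j',k}, k \in \Z\}$, one has the telescoping decomposition
\[ f - P_j f = \sum_{j' \geq j} Q_{j'} f = \sum_{j' \geq j} \sum_{k \in \Z} \langle f, \psi_{j',k}\rangle \psi_{j',k} \, . \]
Thus by the triangle inequality it suffices to prove a per-scale estimate $\|Q_{j'} f\|_{L^p} \lesssim 2^{-n j'} |f|_{W^{n,p}}$ and then sum the geometric series $\sum_{j' \geq j} 2^{-nj'} \lesssim 2^{-nj}$, which converges since $n \geq 1$.

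The key step is the estimate on the individual wavelet coefficients. Since $\psi$ has $n$ vanishing moments, $\langle T, \psi_{j',k}\rangle = 0$ for every polynomial $T$ of degree $< n$, so that $\langle f, \psi_{j',k}\rangle = \langle f - T, \psi_{j',k}\rangle$. Choosing $T = T_{j',k}$ to be a Taylor polynomial of $f$ of degree $n-1$ centred in the (essential) support $I_{j',k}$ of $\psi_{j',k}$, which is an interval of length $\sim 2^{-j'}$, a local polynomial approximation (Whitney / Deny--Lions) estimate gives $\|f - T_{j',k}\|_{L^p(I_{j',k})} \lesssim 2^{-n j'} \|f^{(n)}\|_{L^p(\tilde I_{j',k})}$ on a slightly enlarged interval $\tilde I_{j',k}$. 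Combining this with Hölder's inequality and the normalisation $\|\psi_{j',k}\|_{L^{p^*}} \sim 2^{j'(1/p - 1/2)}$ yields
\[ |\langle f, \psi_{j',k}\rangle| \lesssim 2^{-nj'}\, 2^{j'(1/p - 1/2)}\, \|f^{(n)}\|_{L^p(\tilde I_{j',k})} \, . \]

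Finally I would assemble the scale estimate. For a compactly supported $\psi$ the functions $(\psi_{j',k})_k$ have uniformly bounded overlap, so $\|\sum_k c_k \psi_{j',k}\|_{L^p}^p \lesssim \sum_k |c_k|^p \|\psi_{j',k}\|_{L^p}^p$ with $\|\psi_{j',k}\|_{L^p} \sim 2^{j'(1/2 - 1/p)}$; inserting the coefficient bound, the two powers of $2^{j'}$ coming from the $L^p$ and $L^{p^*}$ normalisations cancel, and the bounded-overlap property lets me sum $\sum_k \|f^{(n)}\|_{L^p(\tilde I_{j',k})}^p \lesssim \|f^{(n)}\|_{L^p(\R)}^p$, producing $\|Q_{j'} f\|_{L^p} \lesssim 2^{-nj'} |f|_{W^{n,p}}$. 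The geometric summation then closes the argument, the constant $C$ depending only on $\phi$, $\psi$, $n$ and $p$; the case $p = \infty$ is handled identically with suprema in place of $\ell^p$ sums.

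I expect the main obstacle to be the two localization ingredients: establishing the local polynomial approximation estimate with the correct power $2^{-nj'}$ (this is where the order of smoothness $n$ enters and must match the number of vanishing moments), and controlling $\|\sum_k c_k \psi_{j',k}\|_{L^p}$ by the $\ell^p$ norm of the coefficients. The latter is immediate for compactly supported wavelets by finite overlap, but in the Schwartz-class case it requires replacing bounded overlap by the rapid decay of $\psi$ together with an almost-orthogonality / Schur-test argument. An alternative route avoiding wavelet coefficients altogether uses the polynomial-reproducing kernel $K_j(x,y)$ of $P_j$ directly: since $\int K_j(x,y) T(y)\, dy = T(x)$ for polynomials of degree $< n$, one writes $f(x) - P_j f(x) = -\int K_j(x,y)\pare{f(y) - T_x(y)}\, dy$ with $T_x$ the Taylor polynomial of $f$ at $x$, bounds the remainder by its order-$n$ integral form, and concludes with Young's inequality using that the rescaled kernel has integrable decay.
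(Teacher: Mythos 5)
The paper does not actually prove this theorem: it is recalled as a classical result from the wavelet literature, with the proof delegated entirely to the reference \cite{COHEN}. So there is no in-paper argument to compare against; your proposal has to stand on its own. It does. The route you take --- telescoping $f - P_j f = \sum_{j' \geq j} Q_{j'} f$, killing a degree-$(n-1)$ Taylor polynomial with the $n$ vanishing moments of $\psi$, invoking a local polynomial approximation (Whitney/Bramble--Hilbert) estimate $\|f - T_{j',k}\|_{L^p(I_{j',k})} \lesssim 2^{-nj'} \|f^{(n)}\|_{L^p(\tilde I_{j',k})}$, and recombining via bounded overlap so that the $L^p$ and $L^{p^*}$ normalisations of $\psi_{j',k}$ cancel --- is the standard proof, and your scaling bookkeeping is correct: $\|\psi_{j',k}\|_{L^{p^*}} \sim 2^{j'(1/p-1/2)}$ and $\|\psi_{j',k}\|_{L^{p}} \sim 2^{j'(1/2-1/p)}$ do multiply to a constant, leaving exactly $2^{-nj'}$ per scale and a convergent geometric series. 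The only points that need to be made honest, and which you already flag, are: (i) the validity of the telescoping identity in $L^p$ (i.e.\ $P_{j'}f \to f$ in $L^p$, which holds since $\phi$ reproduces constants and is well localized; for $p=\infty$ one uses that $W^{n,\infty}$ functions are uniformly continuous); (ii) for Schwartz-class rather than compactly supported $\psi$, replacing finite overlap by rapid decay plus a Schur-test/discrete-Young argument; and (iii) the Whitney estimate itself, which is where the matching between the order $n$ of smoothness and the number of vanishing moments enters. Your alternative kernel-based route (polynomial reproduction of $K_j$ plus the integral form of the Taylor remainder and Young's inequality) is equally standard and arguably shorter. Either version would constitute a complete proof once those technical lemmas are written out.
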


 As one can see, our Jackson's inequality (\ref{jackson.ineq}) relates also the smoothness of the signal $f$ on the graph (measured by $\nor{\LL f}_{p,\XX}$) provided that the error in the intertwining equation is small. The smallest is $\nor{\LL f}_{p,\XX}$, the closer to $f$ should be the approximation part. As in the classical inequality (\ref{class.jackson.ineq}), the constant 
  in front of the Laplacian term depends on the level of approximation and gets worse when this level gets coarser.

We can now prove Proposition \ref{jackson.prop}.
\begin{proof}

\begin{align*}
\nor{f-\bR_0 \bR_1 \cdots \bR_{k-1} f_k}_{p,\XX_0}
& \leq \sum_{j=0}^{k-1} \nor{(\bR_0 \cdots \bR_{j-1})   \brR_j g_{j+1}}_{p,\XX_0}
\, , 
\\
& \leq \sum_{j=0}^{k-1} \prod_{i=0}^{j-1} \va{\bR_i}_p   \va{\brR_j}_p \nor{g_{j+1}}_{p,\brX_j}
\, , 
\\
& \leq \sum_{j=0}^{k-1} \prod_{i=0}^{j-1} \va{\bR_i}_p 
	  \va{\brR_j}_p  \pare{\frac{\mu(\XX_j)}{\mu(\brX_{j})}}^{1/p} \frac{\nor{\LL_j f_j}_{p,\XX_j}}{q'_j} 
\, , 
\end{align*}
by Proposition \ref{brf.prop}. Since $\mu(\XX_0)=1$, this leads to 
\begin{align*}
\nor{f-\bR_0 \bR_1 \cdots \bR_{k-1} f_k}_{p,\XX_0}
\leq \sum_{j=0}^{k-1} \br_0 \cdots \br_{j-1} \brr_j \frac{\mu(\XX_j)^{1/p}}{q'_j} \nor{\LL_j f_j}_{p,\XX_j} 
\, . 
\end{align*}
Now, 

\begin{align*}
\LL_j f_j & = \LL_j \bK_{j-1} f_{j-1} 
\\
 & = \LL_j \bK_{j-1} \cdots \bK_0 f_0
 \\
 & = \bK_{j-1} \cdots \bK_0 \LL_0 f_0 + 
 \sum_{l=0}^{j-1} \bK_{j-1} \cdots \bK_{l+1} \pare{\LL_{l+1} \bK_{l} - \bK_{l} \LL_{l}} \bK_{l-1} \cdots \bK_0 f_0
 \, .
 \end{align*}
 For any $l \in \acc{0,\cdots, j-1}$, $\bK_l$ is an operator from $\ell_p(\XX_l,\mu_{\XX_l})$ to
 $\ell_p(\XX_{l+1},\mu_{\XX_{l+1}})$, and  \eqref{Kq-norm.eq} states that  
 $\nor{\bK_l}_p \leq \pare{\frac{\mu(\XX_l)}{\mu(\XX_{l+1})}}^{1/p}$. Hence, 
 \begin{align*}
\nor{\LL_j f_j}_{p, \XX_j}
& \leq \pare{\frac{\mu(\XX_0)}{\mu(\XX_{j})}}^{1/p} \nor{\LL_0 f_0}_{p,\XX_0}
+ \sum_{l=0}^{j-1} \pare{\frac{\mu(\XX_{l+1})}{\mu(\XX_{j})}}^{1/p} \va{\LL_{l+1} \bK_{l} - \bK_{l} \LL_{l}}_p 
\pare{\frac{\mu(\XX_0)}{\mu(\XX_{l})}}^{1/p}\nor{f_0}_{p,\XX}
\\
&  = \pare{\frac{\mu(\XX_0)}{\mu(\XX_{j})}}^{1/p} \cro{\nor{\LL_0 f_0}_{p,\XX_0}
+ \sum_{l=0}^{j-1} e_{l}  \nor{f_0}_{p,\XX}}
\, , 
\end{align*}
This gives \eqref{jackson.ineq}.
\end{proof}


\section{About the choice of the parameters $q$, $q'$.}
\label{choixq.sec}
The aim of this section is to give a guideline in the choice of the various parameters
involved  in the multiresolution scheme. The requirements we would like to achieve are the 
following ones:
\begin{enumerate}
\item At each step of the downsampling, we would like to keep a fixed proportion of points
in the current set. This would ensure that the number of steps $k$ in the multiresolution 
scheme is of order $\log(\va{\XX})$. In view of Proposition \ref{card.prop}, this could be 
achieved by taking for all $j \geq 0$, $q_j \in [\theta_1 \alpha_j, \theta_2 \alpha_j]$ for some $\theta_1$,  $\theta_2$. 
With this choice and the choice $\XX_{j+1}=\rho(\Phi^{(j)})$ (where $\Phi^{(j)}$ is 
the random  s.o.f on $\XX_j$ associated to $\LL_j$) , one has for instance, 
\[\forall j \geq 0\, , \, \, 
\va{\XX_j} \frac{q_j}{q_j +\alpha_j} \leq \E_{q_j} \cro{\va{\XX_{j+1}} | {\XX_j}}  
\leq \va{\XX_j} - \va{\RR_{j,r}} \pare{1-\frac{r \alpha_j }{q_j+2 \alpha_j}} \, ,
\]
where $ \RR_{j,r} = \acc{x \in \XX_j; w_j(x) \geq r \alpha_j}$. 
 \item To ensure numerical stability of the reconstruction operator, we would like to 
 control at each step the norms of $\bR_j$ and $\brR_j$. Concerning $\bR_j$ (see
 \eqref{bR-norm.eq}), this requires $\alpha_{j+1}/q'_j$ and $\alpha_j/\beta_j$ to be
 small. As for $\brR_j$ (see \eqref{brR-norm.eq}), one has moreover to ensure 
 that $q'_j/\gamma_j$ is small.  Note that once $\XX_j$ has been fixed, 
 $\alpha_{j+1}$, $\beta_j$ and $\gamma_j$  only depend on the choice of $\XX_{j+1}$, hence of 
 $q_j$. Since 
 $\alpha_{j+1}/q'_j$ and $q'_j/\gamma_j$ should be small, the product 
 $\frac{\alpha_{j+1}}{q'_j} \frac{q'_j}{\gamma_j}= \frac{\alpha_{j+1}}{\gamma_j}$ should
 also be small, and one possible choice for $q_j$ is to minimize 
 $\alpha_{j+1}/\gamma_j$. 
 \item In order to get a good approximation error, one would also like the error in the 
 intertwining problem between $\LL_j$ and $\LL_{j+1}$ to be small. Referring to 
 \eqref{jackson.ineq}, this is achieved if at each step $q'_j (\alpha_j/\beta_j)^{1/p^{*}}$ is 
 small with respect to the natural unit $\alpha_j$.
 \end{enumerate}
 To sum up, $\XX_j$ being chosen, one would like to choose $(q_j,q'_j)$ such that
 \begin{description}
 \item[(C1)] $q_j \in [\theta_1 \alpha_j, \theta_2 \alpha_j]$;
 \item[(C2)] $\alpha_j/\beta_j$ is small;
 \item[(C3)] $\alpha_{j+1}/q'_j$ is small;
 \item[(C4)] $q'_j/\gamma_j$ is small; 
 \item[(C5)] $q'_j \alpha_j/\beta_j $ is small (consider the case $p=+\infty$)  with respect to $\alpha_j$.
 \end{description}
 In this respect, we need some estimates on $\alpha_{j+1}$, $\gamma_j$ and 
 $\beta_j$ in terms of $q_j$. The next sections are devoted to this task. 
 
 \subsection{Estimate on $\bar{\alpha}$.}
 \begin{prop}
 \label{alphabar.prop}
 Assume that $\bX = \rho(\Phi)$ and let $\brX= \XX \setminus \bX$.  Then, for any 
 $m \in  \acc{1, \cdots, \va{\XX}+1}$,  
 \begin{equation}
 \label{baralpham.eq} 
 \E_q\cro{\ba \ind_{\va{\bX}=m}} \geq
  \E_q \cro{\frac{1}{\va{\bX}} \sum_{\bx \in \bX} \bw(\bx) \ind_{\va{\bX}=m} }
  =  q \frac{\va{\XX}-m+1}{m}  \P_q\cro{\va{\bX}=m-1} \, ;
  \end{equation}
 \begin{equation}
 \label{baralpha.eq}
   \E_q\cro{\ba } \geq
  \E_q \cro{\frac{1}{\va{\bX}} \sum_{\bx \in \bX} \bw(\bx) }
  = q \E_q \cro{\frac{\va{\brX}}{\va{\bX} + 1}} \, . 
 \end{equation}
 \end{prop}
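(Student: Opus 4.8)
The plan is to first establish the fixed-cardinality identity \eqref{baralpham.eq} and then sum over $m$ to obtain \eqref{baralpha.eq}. The inequality in \eqref{baralpham.eq} is immediate: since $\ba = \max_{\bx \in \bX} \bw(\bx)$ dominates the average $\frac{1}{\va{\bX}} \sum_{\bx \in \bX} \bw(\bx)$ by \eqref{bar-alpha.eq}, multiplying by $\ind_{\va{\bX}=m}$ and taking $\E_q$ gives the first relation. The whole content is the equality, which I would attack by rewriting both sides as sums of principal minors of the positive semidefinite Laplacian $M := -\LL$. Throughout I write $S^c := \XX \setminus S$ for a complement.

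For the setup, recall from \eqref{poidsforet.def}--\eqref{probaforet.def} that $\P_q\cro{\rho(\Phi)=A} = q^{\va{A}} W(A)/Z(q)$, where $W(A) := \sum_{\phi: \rho(\phi)=A} \prod_{e \in \phi} w(e)$ is the total weight of forests rooted exactly at $A$. The matrix--forest theorem (consistent with $Z(q)=\det(q \Id - \LL)=\sum_A q^{\va A}\det(M_{A^c A^c})$) identifies $W(A) = \det(M_{A^c A^c})$. On the other hand, by Lemma \ref{LbarSchur.lem} together with \eqref{bar-alpha.eq}, for $\bX=A$ the matrix $-\bLL$ equals the Schur complement $S_M(M_{A^c A^c}) = M_{AA} - M_{A A^c}(M_{A^c A^c})^{-1} M_{A^c A}$, so that $\sum_{\bx \in A} \bw(\bx) = \Tr(-\bLL) = \Tr\pare{S_M(M_{A^c A^c})}$. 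Hence, after using $\frac{1}{\va\bX}=\frac1m$ on $\acc{\va\bX=m}$ and clearing the common factor $q^m/(m\,Z(q))$, the equality in \eqref{baralpham.eq} reduces to the purely linear-algebraic statement
\[
\sum_{\va{A}=m}\det(M_{A^c A^c})\,\Tr\pare{S_M(M_{A^c A^c})} = (\va{\XX}-m+1)\sum_{\va{B}=m-1}\det(M_{B^c B^c}) \, .
\]

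The key step is a local determinant identity: applying point (2) of Proposition \ref{schur.prop} to the submatrix of $M$ indexed by $A^c \cup \acc{\bx}$ shows that for each $\bx \in A$,
\[
\det(M_{A^c A^c})\,\cro{S_M(M_{A^c A^c})}(\bx,\bx) = \det\pare{M_{(A^c \cup \acc{\bx})(A^c \cup \acc{\bx})}} \, ,
\]
i.e. absorbing one root $\bx$ into the complement turns a diagonal Schur-complement entry into a slightly larger principal minor. Summing over $\bx \in A$ gives $\det(M_{A^c A^c})\Tr\pare{S_M(M_{A^c A^c})} = \sum_{\bx \in A} W(A \setminus \acc{\bx})$. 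I would then reindex the double sum over pairs $(A,\bx)$ with $\va A = m$, $\bx \in A$, by $B = A \setminus \acc{\bx}$ with $\va B = m-1$: each such $B$ arises from exactly $\va{B^c}=\va\XX - m + 1$ pairs (one for each $\bx \notin B$), which produces the displayed identity and hence \eqref{baralpham.eq}.

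Finally, for \eqref{baralpha.eq} I would sum \eqref{baralpham.eq} over $m$: the inequality is preserved, and writing $j=m-1$ the right-hand side becomes $q\sum_{j \geq 0}\frac{\va\XX-j}{j+1}\P_q\cro{\va\bX=j} = q\,\E_q\cro{\va{\brX}/(\va{\bX}+1)}$, using $\va\XX-j=\va\brX$ and $j+1=\va\bX+1$ on $\acc{\va\bX=j}$. The only genuine obstacle is spotting the reformulation through principal minors and the Schur-complement determinant identity of Proposition \ref{schur.prop}(2); everything after that is bookkeeping. I would also check the boundary cases $m=\va\XX$ (empty $\brX$, with $\det$ of the empty matrix equal to $1$) and $m=\va\XX+1$ (both sides vanish), the latter being precisely why the stated range of $m$ extends to $\va\XX+1$.
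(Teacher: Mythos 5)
Your proof is correct, but it takes a genuinely different route from the paper's. The paper reduces \eqref{baralpham.eq} to the per-vertex identity $\E_q \cro{ \ind_{x \in \bX} \bw(x) \ind_{\va{\bX}=m}} = q \P_q \cro{\va{\bX}=m-1; x \notin \bX}$, expands $\bw(x)$ through the Green function $G_{\bX}$ and its forest representation \eqref{greenR.eq} borrowed from \cite{LA}, and then builds an explicit bijection (hanging the tree rooted at $x$ onto another root, via the added edges $(x,y)$ or $(x,z_1),(z_2,y)$) between the resulting weighted forest sums and the set of forests with $m-1$ roots avoiding $x$. You instead aggregate over $\bx \in \bX$ from the start, convert $\P_q\cro{\bX=A}$ and $\Tr(-\bLL)$ into principal minors of $M=-\LL$, and use the determinant identity of Proposition \ref{schur.prop}(2) to absorb one root into the complement, finishing with a double count. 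Your argument is shorter and purely linear-algebraic, and the Schur-determinant step $\det(M_{A^c A^c})\cro{S_M(M_{A^c A^c})}(\bx,\bx) = \det\pare{M_{(A^c \cup \acc{\bx})(A^c \cup \acc{\bx})}}$ is exactly right; the boundary cases you check also come out consistently. The one ingredient you import that the paper never states is the principal-minors matrix-forest theorem $W(A)=\det(M_{A^c A^c})$ for each \emph{individual} root set $A$: the paper's \eqref{part.exp} only yields $\sum_{\va{A}=k} Z_A(0) = \sum_{\va{A}=k}\det(M_{A^c A^c})$ after matching coefficients of $q^k$, not the termwise identity, so you are relying on the classical all-minors matrix-tree theorem as an external black box. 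That is a correct and standard result, so this is a legitimate shortcut rather than a gap, but it should be cited; in exchange, the paper's bijective proof is self-contained modulo \cite{LA}, yields the finer statement \eqref{barwx.eq} localized at each vertex $x$, and runs in parallel with the proofs of Propositions \ref{beta.prop} and \ref{gamma.prop}.
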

 Note that \eqref{baralpham.eq} is trivially true for $m=\va{\XX}+1$. It is also trivially true
 for $m=1$, since in this case $\bLL$ is a scalar, which is equal to $0$ since 
 $\bLL$ is a Markov generator. 
 
 Assuming that $\ba = \max_{\bx \in \bX} \bw(\bx)$ is not too far from  the mean 
 of the $(w(\bx), \bx \in \bX)$, Proposition \ref{alphabar.prop} can be used to get an 
 idea of the dependence of $\ba$ with respect to $q$ (or $m$). Indeed, the term
 $\frac{\va{\brX}}{\va{\bX} + 1}$ can be numerically estimated as a function of 
 $q$, since there is a coupling allowing
  to sample the random forest for all the values of $q$ at the same time. Getting an idea
  of $\ba$ as a function of $q$ from the simulations, is more  time consuming since
  it requires the computation of a Schur complement for all values of $q$, and this cannot been
  done inductively since $\rho(\Phi)$ is not an increasing set w.r.t. $q$. 
  
  \begin{proof}
 \eqref{baralpha.eq} is a direct consequence of \eqref{baralpham.eq} after 
 summing over $m$. 
\[ 
 \E_q \cro{\frac{1}{\va{\bX}} \sum_{\bx \in \bX} \bw(\bx) \ind_{\va{\bX}=m} } 
 = \frac{1}{m} \sum_{x \in \XX} \E_q \cro{ \ind_{x \in \bX} \bw(x) \ind_{\va{\bX}=m}}
 \, . 
 \]
Hence, the result is a consequence of the following identity: for any $x \in \XX$, and 
any $m \in \acc{1, \cdots, n+1}$,
\begin{equation}
\label{barwx.eq}
\E_q \cro{ \ind_{x \in \bX} \bw(x) \ind_{\va{\bX}=m}} 
= q \P_q \cro{\va{\bX}=m-1; x \notin \bX} \, . 
\end{equation} 
 The rest of the proof is devoted to proving \eqref{barwx.eq}. Remind that 
 \begin{align*}
  \bw(x) & = -\bLL(x,x)= \sum_{y \in \bX, y \neq x} \bLL(x,y)
 \\
 & =  \sum_{y \in \bX, y \neq x} \LL(x,y) + \LL_{\bX \brX} (-\LL_{\brX \brX})^{-1} \LL_{\brX \bX} (x,y)
 \\
 &  = \sum_{y \in \bX, y \neq x} w(x,y) 
      + \sum_{\begin{array}{l} \scriptstyle{y \in \bX, y \neq x} \\ \scriptstyle{z_1, z_2 \notin \bX} 			\end{array}}           
      w(x,z_1) (-\LL_{\brX,\brX})^{-1}(z_1,z_2) w(z_2,y) 
 \end{align*}         
Remind also from Lemma \ref{green.lem} (see \eqref{green1.eq}) that for $z_1, z_2 \in \brX$,
$(-\LL_{\brX,\brX})^{-1}(z_1,z_2)= G_{\bX} (z_1,z_2)$. This gives an explicit expression of 
$\bw(x)$ as a function of $\bX$. Therefore,
\[ 
\E_q \cro{ \ind_{x \in \bX} \bw(x) \ind_{\va{\bX}=m}}  
= \hspace{-.5cm} \sum_{R  \owns x, \va{R}=m} \P_q \cro{\bX=R} 
\acc{ \sum_{y \in R, y \neq x} w(x,y) 
      + \hspace{-.5cm} \sum_{\sous{y \in R, y \neq x}{z_1, z_2 \notin R}}           
      w(x,z_1) G_{R} (z_1,z_2) w(z_2,y) }
\]
Now, 
\begin{equation}
\label{XbarR.eq}
 \P_q \cro{\bX=R}= \frac{1}{Z(q)}  \sum_{\phi  \, s.o.f, \rho(\phi)=R} q^m w(\phi)
= \frac{q^m}{Z(q)} Z_R(0) \, ,
\end{equation}
 where we set 
$Z_R(0):=\sum_{\phi  \, s.o.f, \rho(\phi)=R} w(\phi)$. In addition, it is proven in \cite{LA} (see Lemma 3.1, or Appendix B of the preprint)
 that for any subset $R$ of $\XX$, and any $z_1, z_2 \notin R$,
\begin{equation}
\label{greenR.eq} G_R(z_1,z_2) = E_{z_1}\cro{\int_0^{H_R^+} \ind_{X(s)=z_2} \, ds }
   = \frac{1}{Z_R(0)} \sum_{\sous{\phi \, s.o.f. \rho(\phi) = R \cup \acc{z_2}}{z_1 \leadsto_{\phi} z_2}} 
   w(\phi)
 \, ,
 \end{equation}
 where $z_1 \leadsto_{\phi} z_2$ means that $z_1$ is contained in the tree of $\phi$, whose root
 is $z_2$. 
 Hence,  
 \[ Z(q) q^{-m} \E_q \cro{ \ind_{x \in \bX} \bw(x) \ind_{\va{\bX}=m}}  = T_1 + T_2 \, , 
 \]
 where 
 \begin{itemize}
 \item 
 $ T_1  \begin{array}[t]{l} := \sum_{R \owns x; \va{R}=m} Z_R(0) \sum_{y \in R; y \neq x} w(x,y)
 \\ = \sum_{\sous{\phi, \rho(\phi)\owns x}{\va{\rho(\phi)}=m}} 
  \sum_{y \in \rho(\phi); y \neq x} w(\phi) w(x,y)
  \\
  = \sum_{(\phi,y) \in \SS_1(m,x)} w(\phi) w(x,y)  
 \end{array}
 $
\\
where we set $\SS_1(m,x):= \acc{(\phi,y) \mbox{ such that } 
\rho(\phi)\owns x, \va{\rho(\phi)}=m, y \in \rho(\phi) , y \neq x} $. 
\item
$ T_2  \begin{array}[t]{l} := \sum_{R \owns x; \va{R}=m} 
 \sum_{\sous{y \in R; y \neq x}{z_1, z_2 \notin R}}  
 \sum_{\sous{\phi \, s.o.f. \rho(\phi) = R \cup \acc{z_2}}{z_1 \leadsto_{\phi}  z_2}} w(x,z_1) w(\phi) w(z_2,y)
 \\
 = \sum_{(R,y,z_1,z_2,\phi) \in \SS_2(m,x)} w(x,z_1) w(\phi) w(z_2,y)
 \end{array}
 $
\\
where we set 
\[ \SS_2(m,x):= \acc{(R,y,z_1,z_2,\phi) \mbox{ such that } 
\begin{array}{l} R \owns x, \va{R}=m, y \in R , y \neq x, 
\\ z_1 \notin R, z_2 \notin R, \rho(\phi) = R \cup \acc{z_2} , z_1 \leadsto_{\phi} z_2
\end{array}} \, .
\] 
 \end{itemize}
 
 Let us compute $T_1$. To any $(\phi,y) \in \SS_1(m,x)$, we associate the forest 
 $\phi_1 = \phi \cup \acc{(x,y)}$, so that $w(\phi_1) = w(\phi) w(x,y)$.  $\phi_1$ is the forest 
 obtained from $\phi$ by hanging the tree with root $x$ to the root $y$. Hence 
 $\va{\rho(\phi_1)}= m-1$, and the distance $d(x,\rho(\phi_1))$ from $x$ to the roots of 
 $\phi_1$, is equal to 1. Otherwise stated,  $\SS_1(m,x)$ is
 sent by this operation to 
 \[ \FF_1(m,x) := \acc{\phi_1 \mbox{ such that } \va{\rho(\phi_1)}= m-1 ,
 d(x,\rho(\phi_1))=1} \, .
 \]
 Note that this correspondence is one to one. Starting from $\phi_1 \in  \FF_1(m,x)$, we recover
 $y$ as the root of the tree containing $x$, and $\phi$ is then obtained by cutting the edge $(x,y)$. 
 Thus,
 \[ T_1 = \sum_{\phi_1 \in \FF_1(m,x)} w(\phi_1) \, . 
 \]
 
 Let us now turn our attention to $T_2$. To any $(R,y,z_1,z_2,\phi) \in \SS_2(m,x)$, 
 we associate the forest 
 $\phi_2 = \phi \cup \acc{(x,z_1),(z_2,y)}$, so that $w(\phi_2) = w(x,z_1) w(\phi) w(z_2,y)$.  
 $\phi_2$ is the forest 
 obtained from $\phi$ by first hanging the tree with root $x$ to $z_1$, so that $x$ is now in the tree
  with root $z_2$, and then $z_2$ is attached to $y$. Hence 
 $\va{\rho(\phi_2)}= \va{\rho(\phi)}-2=m-1$, and  $d(x,\rho(\phi_2)) \geq 2$ (it could happen that 
 $z_1=z_2$).   Otherwise stated,  $\SS_2(m,x)$ is
 sent  to 
 \[ \FF_2(m,x) := \acc{\phi_2 \mbox{ such that } \va{\rho(\phi_2)}= m-1 ,
 d(x,\rho(\phi_2)) \geq 2} \, .
 \]
 Note that this correspondence is one to one. Starting from $\phi_2 \in  \FF_2(m,x)$, we recover
 $y$ as the root of the tree containing $x$; $z_1$ is the point following $x$ in the path going from
 $x$ to $y$; $z_2$ is the point preceding $y$ in the path going from $x$ to $y$; $R= \rho(\phi_2) 
 \cup \acc{x}$, and $\phi$ is obtained from $\phi_2$ by cutting the edges $(x,z_1)$, $(z_2,y)$. 
 Thus,
 \[ T_2 = \sum_{\phi_2 \in \FF_2(m,x)} w(\phi_2) \, . 
 \]
It is clear that $\FF_1(m,x)$ and $\FF_2(m,x)$ are disjoint sets and that 
$$ \FF_1(m,x) \cup \FF_2(m,x) = \acc{ \phi \mbox{ such that } \va{\rho(\phi)}= m-1; 
x \notin \rho(\phi)} \, . 
$$ 
This leads to 
 \begin{align*} 
\E_q \cro{ \ind_{x \in \bX} \bw(x) \ind_{\va{\bX}=m}}  
& = \frac{q^m}{Z(q)} 
\sum_{\sous{\phi, \va{\rho(\phi)}=m-1,}{ x \notin \rho(\phi)}} w(\phi)
\\
& = q \P_q \cro{ \va{\rho(\Phi)}=m-1, x \notin \rho(\Phi)}
 \end{align*}
 This ends the proof of \eqref{barwx.eq} and of Proposition \ref{alphabar.prop}.
    \end{proof}
   
 \subsection{Estimate on $\beta$}
 \begin{prop}
 \label{beta.prop}
 Assume that $\bX = \rho(\Phi)$ and let $\brX= \XX \setminus \bX$.  Then, for any 
 $m \in  \acc{1, \cdots, \va{\XX}}$,  
 \begin{equation}
 \label{betam.eq} 
 \E_q\cro{\frac{1}{\beta} \ind_{\va{\bX}=m}} \geq
 \E_q \cro{\frac{1}{\va{\bX}} \sum_{\bx \in \bX} \sum_{z \in \brX} P(\bx, z) E_z(H_{\bX})  \ind_{\va{\bX}=m} }
  =  \frac{\va{\XX}-m}{\alpha m}  \P_q\cro{\va{\bX}=m} \, ;
  \end{equation}
 \begin{equation}
 \label{beta.eq}
   \E_q\cro{\frac{1}{\beta} } \geq
   \E_q \cro{\frac{\va{\brX}}{\alpha \va{\bX}}} \, . 
 \end{equation}
 \end{prop}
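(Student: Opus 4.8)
The plan is to mimic the proof of Proposition~\ref{alphabar.prop}, but with the weight $\bw(x)$ replaced by the quantity $\sum_{z \in \brX} P(x,z) E_z[H_{\bX}]$ appearing in the definition of $1/\beta$. The inequality in \eqref{betam.eq} is immediate: since $1/\beta = \max_{\bx \in \bX} \sum_{z \in \brX} P(\bx,z) E_z[H_{\bX}]$ (using that $E_z[H_{\bX}] = 0$ for $z \in \bX$), a maximum over $\bX$ dominates the corresponding average over $\bX$, which is exactly the bracketed term. So the heart of the matter is the equality. For this I would first linearize: on $\acc{\va{\bX}=m}$ the prefactor $1/\va{\bX}$ equals $1/m$, so it suffices to establish the single-vertex identity
\begin{equation}
\label{betawx.eq}
\E_q\cro{\ind_{x \in \bX} \sum_{z \in \brX} P(x,z) E_z[H_{\bX}]\, \ind_{\va{\bX}=m}} = \frac{1}{\alpha}\, \P_q\cro{\va{\bX}=m;\, x \notin \bX}
\end{equation}
for every $x \in \XX$, and then sum over $x$.

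To prove \eqref{betawx.eq} I would express everything through forests. Using $P(x,z)=w(x,z)/\alpha$ for $x \in \bX$, $z \in \brX$, together with the identity $E_z[H_{\bX}] = \sum_{\bry \in \brX} G_{\bX}(z,\bry)$ (established at the end of the proof of Lemma~\ref{Opernorm.lem}), the left-hand side becomes, after conditioning on $\acc{\bX = R}$,
\[
\frac{1}{\alpha} \sum_{R \owns x,\, \va{R}=m} \P_q\cro{\bX=R} \sum_{z,\bry \notin R} w(x,z)\, G_R(z,\bry)\,.
\]
I would then substitute $\P_q[\bX=R]=q^m Z_R(0)/Z(q)$ from \eqref{XbarR.eq} and the forest representation $Z_R(0)\,G_R(z,\bry)=\sum_{\phi:\,\rho(\phi)=R\cup\acc{\bry},\, z \leadsto_{\phi}\bry} w(\phi)$ from \eqref{greenR.eq}, so that the factors $Z_R(0)$ cancel and the whole expression reduces to $\frac{q^m}{\alpha Z(q)}$ times a sum of $w(x,z) w(\phi)$ over quadruples $(R,z,\bry,\phi)$ satisfying $x \in R$, $\va{R}=m$, $z,\bry \notin R$, $\rho(\phi)=R \cup \acc{\bry}$ and $z \leadsto_{\phi}\bry$.

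The crux, and the step I expect to be the main obstacle, is a forest-surgery bijection analogous to the $T_1,T_2$ correspondences of Proposition~\ref{alphabar.prop}, but involving only a single added edge. To such a quadruple I would associate $\phi' := \phi \cup \acc{(x,z)}$, whose weight is exactly $w(x,z)w(\phi)$; since $x$ is a root of $\phi$ and $z$ lies in the tree of $\phi$ rooted at $\bry$, adding the edge $(x,z)$ removes $x$ from the set of roots and leaves $\va{\rho(\phi')}=m$ with $x \notin \rho(\phi')$. I would then check this map is a bijection onto $\acc{\phi' : \va{\rho(\phi')}=m,\ x \notin \rho(\phi')}$: from $\phi'$ one recovers $\bry$ as the root of the tree containing $x$, the vertex $z$ as the one following $x$ on the path from $x$ to $\bry$ (allowing the degenerate case $z=\bry$), then $R=(\rho(\phi')\setminus\acc{\bry})\cup\acc{x}$, and finally $\phi$ by cutting the edge out of $x$. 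Verifying bijectivity and the weight bookkeeping, in particular handling the degenerate case $z=\bry$ cleanly, is the delicate point.

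Granting the bijection, the sum collapses to $\frac{q^m}{\alpha Z(q)} \sum_{\phi':\,\va{\rho(\phi')}=m,\, x \notin \rho(\phi')} w(\phi') = \frac{1}{\alpha}\P_q[\va{\bX}=m; x \notin \bX]$, which is \eqref{betawx.eq}. Summing \eqref{betawx.eq} over $x \in \XX$ yields $\frac{1}{\alpha}\E_q[\va{\brX}\ind_{\va{\bX}=m}] = \frac{\va{\XX}-m}{\alpha}\P_q[\va{\bX}=m]$; since the bracketed average in \eqref{betam.eq} carries the prefactor $1/\va{\bX}=1/m$, dividing by $m$ gives exactly \eqref{betam.eq}. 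Finally \eqref{beta.eq} follows by summing \eqref{betam.eq} over $m \in \acc{1,\dots,\va{\XX}}$ and recognizing the right-hand side as $\E_q[\va{\brX}/(\alpha\va{\bX})]$.
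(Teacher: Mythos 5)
Your proposal is correct and follows essentially the same route as the paper's proof: the same reduction to the single-vertex identity \eqref{betax.eq}, the same forest representation via \eqref{XbarR.eq} and \eqref{greenR.eq}, and the same one-edge surgery $\phi \mapsto \phi \cup \acc{(x,z)}$ onto $\acc{\phi : \va{\rho(\phi)}=m,\ x \notin \rho(\phi)}$, including the degenerate case $z=\bry$. Nothing is missing.
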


  \begin{proof}
 \eqref{beta.eq} is a direct consequence of \eqref{betam.eq} after 
 summing over $m$. 
\[ 
 \E_q \cro{\frac{1}{\va{\bX}} \sum_{\bx \in \bX} \sum_{z \in \brX} P(\bx, z) E_z(H_{\bX})  \ind_{\va{\bX}=m} } 
 = \frac{1}{m} \sum_{x \in \XX} \E_q \cro{ \ind_{x \in \bX} \sum_{z \in \brX} P(x, z) E_z(H_{\bX}) \ind_{\va{\bX}=m}}
 \, . 
 \]
Hence, the result is a consequence of the following identity: for any $x \in \XX$, and 
any $m \in \acc{1, \cdots, \va{\XX}}$,
\begin{equation}
\label{betax.eq}
\E_q \cro{ \ind_{x \in \bX}  \sum_{z \in \brX} P(x, z) E_z(H_{\bX})  \ind_{\va{\bX}=m}} 
= \frac{1}{\alpha} \P_q \cro{\va{\bX}=m; x \notin \bX} \, . 
\end{equation} 
 The rest of the proof is devoted to state \eqref{betax.eq}. For $x \in \bX$ and $z \in \brX$, 
 $P(x,z) = w(x,z)/\alpha$, and summing over $z_2$ in \eqref{greenR.eq} leads  to:

 \begin{equation} 
 \forall z \notin R \, , \, \, 
 E_z(H_R) = \frac{1}{Z_R(0)} \sum_{y \notin R} 
 \sum_{\sous{\phi \, s.o.f. \rho(\phi) = R \cup \acc{y}}{z \leadsto_{\phi} y}} w(\phi)
 \end{equation}
  Therefore, using \eqref{XbarR.eq},
\[
\begin{array}{l}
\E_q \cro{ \ind_{x \in \bX} \sum_{z \in \brX} P(x, z) E_z(H_{\bX}) \ind_{\va{\bX}=m}}  
\\
\hspace*{1cm} 
 =  \frac{q^m}{\alpha Z(q)} 
 \sum_{R  \owns x, \va{R}=m} 
 \sum_{z \notin R; y \notin R} 
 \sum_{\sous{\phi \, s.o.f. \rho(\phi) = R \cup \acc{y}}{z \leadsto_{\phi} y}} w(x,z) w(\phi)
\\
\hspace*{1cm} 
 =  \frac{q^m}{\alpha Z(q)}  
\sum_{(R,y,z,\phi) \in \SS_3(m,x)} w(x,z)w(\phi)   
\, , 
 \end{array}
\]

where
\[ \SS_3(m,x):= \acc{(R,y,z,\phi) \mbox{ such that } 
R \owns x, \va{R}=m, y \notin R, z \notin R; \rho(\phi)=R \cup \acc{y}, z \leadsto_{\phi} y} 
\, . 
\] 
  To any $(R,y,z,\phi) \in \SS_3(m,x)$,
 we associate the forest 
 $\phi_3 = \phi \cup \acc{(x,z)}$, so that $w(\phi_3) = w(x,z) w(\phi)$.  
 $\phi_3$ is the forest 
 obtained from $\phi$ by  hanging the tree with root $x$ to $z$, so that $x$ is now in the tree
  with root $y$. Hence 
 $\va{\rho(\phi_3)}= \va{\rho(\phi)}-1=m$, and  $d(x,\rho(\phi_3)) \geq 1$ (it could happen that 
 $z=y$).  $\SS_3(m,x)$ is thus
 sent  to 
$ \FF_3(m,x) := \acc{\phi \mbox{ such that } \va{\rho(\phi)}= m ,
 x \notin \rho(\phi)}$.  
 Note that this correspondence is one to one. Starting from $\phi_3 \in  \FF_3(m,x)$, we recover
 $y$ as the root of the tree containing $x$ in $\phi_3$; $z$ is the point following $x$ in the path going from
 $x$ to $y$ in $\phi_3$;  $\phi$ is obtained from $\phi_3$ by cutting the edge $(x,z)$ so that $x$ is a root of $\phi$, and $R= \rho(\phi) \setminus \acc{y}$. 
 Thus,
 \[ \E_q \cro{ \ind_{x \in \bX} \sum_{z \in \brX} P(x, z) E_z(H_{\bX}) \ind_{\va{\bX}=m}}
 = \frac{q^m}{\alpha Z(q)} \sum_{\phi \in \FF_3(m,x)} w(\phi) 
 = \frac{1}{\alpha} \P_q\cro{\va{\rho(\Phi)}=m; x \notin \rho(\Phi)} 
 \, . 
 \]
 This ends the proof of \eqref{betax.eq} and of Proposition \ref{beta.prop}.
    \end{proof}

\subsection{Estimate on $\gamma$}
 \begin{prop}
 \label{gamma.prop}
 Assume that $\bX = \rho(\Phi)$ and let $\brX= \XX \setminus \bX$.  Then, for any 
 $m \in  \acc{0, \cdots, \va{\XX}}$,  
 \begin{equation}
 \label{gammam.eq} 
 \E_q\cro{\frac{1}{\gamma} \ind_{\va{\bX}=m}}  \geq
 \E_q \cro{\frac{1}{\va{\brX}} \sum_{\brx \in \brX}  E_{\brx}(H_{\bX})  \ind_{\va{\bX}=m} }
  =   \frac{\va{\XX}}{\va{\XX}-m} \frac{1}{q} \P_q\cro{\va{\bX}=m+1} \, ;
  \end{equation}
 \begin{equation}
 \label{gamma.eq}
   \E_q\cro{\frac{1}{\gamma} } \geq
  \frac{1}{q}  \E_q \cro{\frac{\va{\XX}}{\va{\brX} + 1}} \, . 
 \end{equation}
 \end{prop}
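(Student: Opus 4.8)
The plan is to follow the template of the proofs of Propositions \ref{alphabar.prop} and \ref{beta.prop}, turning the statement into a forest-counting identity. The first inequality in \eqref{gammam.eq} is immediate from $1/\gamma = \max_{\brx\in\brX}E_{\brx}[H_{\bX}] \geq \frac{1}{\va{\brX}}\sum_{\brx\in\brX}E_{\brx}[H_{\bX}]$ (the maximum dominates the average), after multiplying by $\ind_{\va{\bX}=m}$ and taking $\E_q$. For the equality, I would exploit that $\va{\brX}=\va{\XX}-m$ is constant on $\acc{\va{\bX}=m}$, pull this factor out, and rewrite $\sum_{\brx\in\brX}(\cdots)=\sum_{x\in\XX}\ind_{x\notin\bX}(\cdots)$. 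The whole statement then reduces to the per-point identity, valid for every $x\in\XX$ and (crucially) independent of $x$,
\[ \E_q\cro{\ind_{x\notin\bX}\, E_x[H_{\bX}]\, \ind_{\va{\bX}=m}} = \frac{1}{q}\, \P_q\cro{\va{\bX}=m+1} \, , \qquad m \geq 1 \, , \]
whereupon summing over the $\va{\XX}$ points $x$ and dividing by $\va{\XX}-m$ yields \eqref{gammam.eq}.

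There are two natural ways to establish this per-point identity, and I would use the forest one to stay in line with the surrounding proofs. Conditioning on $\acc{\bX=R}$ with $\va{R}=m\geq1$ and $x\notin R$, I substitute $\P_q[\bX=R]=q^m Z_R(0)/Z(q)$ from \eqref{XbarR.eq} and the version of \eqref{greenR.eq} summed over its second argument (the target root $y$), namely $E_x[H_R]=Z_R(0)^{-1}\sum_{y\notin R}\sum_{\phi:\,\rho(\phi)=R\cup\{y\},\, x\leadsto_\phi y} w(\phi)$; the factors $Z_R(0)$ cancel and leave $\frac{q^m}{Z(q)}\sum_{(R,y,\phi)}w(\phi)$ over triples with $\va{R}=m$, $x,y\notin R$, $\rho(\phi)=R\cup\{y\}$, $x\leadsto_\phi y$. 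The decisive step is to read this as a bijection onto the set of all spanning oriented forests with $m+1$ roots: from such a $\phi$ one recovers $y$ as the root of the tree of $\phi$ containing $x$ and sets $R=\rho(\phi)\setminus\{y\}$. Hence $\sum_{(R,y,\phi)}w(\phi)=\sum_{\phi:\,\va{\rho(\phi)}=m+1}w(\phi)$, and comparing with $\P_q[\va{\bX}=m+1]=q^{m+1}Z(q)^{-1}\sum_{\phi:\,\va{\rho(\phi)}=m+1}w(\phi)$ gives the identity. Alternatively, the same identity drops out at once from Proposition \ref{hitting.prop}(2), since $\E_q[\ind_{x\notin\bX}E_x[H_{\bX}]\ind_{\va{\bX}=m}] = \E_{x,q}[H_{\rho(\Phi)}\ind_{\va{\rho(\Phi)}=m}]$ (the contribution of $x\in\bX$ vanishes), and the right-hand side equals $\P_q[\va{\rho(\Phi)}=m+1]/q$.

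Finally, to obtain \eqref{gamma.eq} I would sum \eqref{gammam.eq} over $m$ and reindex by $k=m+1$: the prefactor $\frac{\va{\XX}}{\va{\XX}-m}$ becomes $\frac{\va{\XX}}{\va{\XX}-k+1}$, which equals $\frac{\va{\XX}}{\va{\brX}+1}$ on $\acc{\va{\bX}=k}$, recombining the sum into $\frac{1}{q}\E_q[\va{\XX}/(\va{\brX}+1)]$.

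I expect the main obstacle to be the combinatorial step: verifying that sending a triple $(R,y,\phi)$ to its forest $\phi$ is a genuine bijection onto all forests with $m+1$ roots. One must check that the constraint $x\notin R$ is automatically satisfied (if $x$ were a root of $\phi$ it would coincide with $y$, hence lie outside $R$) and treat the degenerate case $x=y$; it is exactly this $x$-independence that produces the clean factor $\va{\XX}$. A secondary point requiring care is the lowest index when summing to \eqref{gamma.eq}, where the identity is only available for $\va{\bX}\geq1$.
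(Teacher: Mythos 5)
Your proposal reproduces the paper's proof essentially step for step: the max-versus-average bound, the extraction of the deterministic factor $\va{\brX}=\va{\XX}-m$, the reduction to the $x$-independent identity $\E_q\cro{\ind_{x\notin\bX}E_x(H_{\bX})\ind_{\va{\bX}=m}}=\P_q\cro{\va{\bX}=m+1}/q$, and the same forest computation (condition on $\bX=R$, insert \eqref{XbarR.eq} and \eqref{greenR.eq} summed over the target root, cancel $Z_R(0)$, and observe that the triples $(R,y,\phi)$ are in bijection with the spanning oriented forests having $m+1$ roots, the constraint $x\notin R$ being automatic). Your alternative derivation of the per-point identity from Proposition \ref{hitting.prop}(2) is correct and is in fact the shorter route; the paper re-derives the identity from scratch instead of invoking that proposition.

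The ``secondary point'' you flag at the lowest index is, however, a genuine gap --- and it is the paper's gap as well, so your unease was well placed. The equality in \eqref{gammam.eq} holds only for $m\geq 1$: for $m=0$ the event $\acc{\va{\bX}=0}$ is null (a spanning forest always has at least one root), so the middle term vanishes, while the right-hand side $\P_q\cro{\va{\bX}=1}/q$ is positive; the paper's formal computation at $m=0$ amounts to cancelling $Z_{\emptyset}(0)=0$ against $1/Z_{\emptyset}(0)$. Consequently, summing over the admissible range $m\geq 1$ and reindexing by $k=m+1$ only reaches the events $\acc{\va{\bX}=k}$ with $k\geq 2$, and what one actually obtains is $\E_q\cro{1/\gamma}\geq \frac{1}{q}\E_q\cro{\va{\XX}(\va{\brX}+1)^{-1}\ind_{\va{\bX}\geq 2}}$; the contribution $\frac{1}{q}\P_q\cro{\va{\bX}=1}$ present in \eqref{gamma.eq} is not recovered. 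This is not repairable: since $\va{\brX}+1\leq\va{\XX}$, the right-hand side of \eqref{gamma.eq} is at least $1/q$ and blows up as $q\to 0$, whereas $1/\gamma$ is deterministically bounded by $\max_{x,y}E_x\cro{H_{\acc{y}}}$, so \eqref{gamma.eq} as stated fails for small $q$ (already on the two-point graph with unit weights one finds $\E_q\cro{1/\gamma}=2/(q+2)$ versus a right-hand side of $2(1+q)/(q(q+2))$). Your argument, like the paper's, proves \eqref{gammam.eq} for $1\leq m\leq\va{\XX}$ and the corrected form of \eqref{gamma.eq} carrying the indicator $\ind_{\va{\bX}\geq 2}$.
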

  Note that \eqref{gammam.eq} is trivially true for $m=0$, since in this case $1/\gamma=+\infty$. It is also true
 for $m=\va{\XX}$ with the convention $0/0=0$ in the right hand side, since in this case $1/\gamma=0$. 
 
 \begin{proof}
Here again, \eqref{gamma.eq} is a direct consequence of \eqref{gammam.eq} after 
 summing over $m$. 
\[ 
\begin{array}{ll}
 \E_q \cro{\frac{1}{\va{\brX}} \sum_{\brx \in \brX}  E_{\brx}(H_{\bX})  \ind_{\va{\bX}=m} } 
\\[.2cm]
\hspace*{1cm}
 = \frac{1}{\va{\XX}-m} \sum_{x \in \XX} \E_q \cro{ \ind_{x \notin \bX}  E_x(H_{\bX}) \ind_{\va{\bX}=m}}
 \\[.2cm]
\hspace*{1cm}
= \frac{1}{\va{\XX}-m} \sum_{x \in \XX} \sum_{R; x \notin R, \va{R}=m} \P_q(\bX= R) E_x(H_R)
\\[.2cm]
\hspace*{1cm}
 = \frac{q^m}{(\va{\XX}-m) Z(q)} \sum_{x \in \XX} \sum_{R; x \notin R, \va{R}=m}  \sum_{y \notin R}
 \sum_{\sous{\phi \, s.o.f. \rho(\phi) = R \cup \acc{y}}{x \leadsto_{\phi} y}} w(\phi)
\\[.2cm]
\hspace*{1cm}
= \frac{q^m}{(\va{\XX}-m) Z(q)} \sum_{x \in \XX}  \sum_{\phi \, s.o.f. \va{\rho(\phi)} = m+1} w(\phi)
\\[.2cm]
\hspace*{1cm}
= \frac{1}{q(\va{\XX}-m)} \sum_{x \in \XX}  \sum_{\phi \, s.o.f. \va{\rho(\phi)} = m+1} \P_q(\Phi=\phi)
\\[.2cm]
\hspace*{1cm}
= \frac{\va{\XX}}{q(\va{\XX}-m)}  \P_q( \va{\rho(\Phi)} = m+1) \, . 
\end{array}
 \]
 This ends the proof of \eqref{gammam.eq} and of Proposition \ref{gamma.prop}.
    \end{proof}

 \subsection{Discussion on the choice of $\bX$, $q$, $q'$}
 \label{discussionq.sec}
 As was said in the beginning of the section, at each step of the multiresolution scheme, one would want to choose $(q,q')$ so as to fulfill the constraints (C1-5). If (C3) and (C4) are satisfied, then the quantity
 $\ba/\gamma= (\ba/q')(q'/\gamma)$ has to be small. In the same way, if (C3) and (C5) are satisfied, then 
 $\ba /\beta$ is small. One has therefore to choose $\bX$ at each step in such a way that $\ba/\gamma$
 and/or $\ba/\beta$ are small. A possible choice for $q$ is then to minimize the functions $q \mapsto 
 \E_q(\ba/\gamma)$, or $q \mapsto  \E_q(\ba/\beta)$ in some interval $[\theta_1 \alpha; \theta_2 \alpha]$ (in accordance 
 to (C1)). These functions could be estimated using a Monte Carlo method. However, this can be quite costly, 
 since for instance, the computation of $\ba$ needs the computation of a Schur complement. 
 If one can afford such a computation one time, doing it several times to get a Monte Carlo estimation is time consuming. 
 Instead, in the simulations presented in section \ref{numeric.sec}, we used the estimates \eqref{baralpha.eq} and  \eqref{beta.eq}. Denote
 \[ \tilde{\alpha}(q) := q \E_q \cro{\frac{\va{\brX}}{1+\va{\bX}}} \, ; \, \, 
 \frac{1}{\tilde{\beta}(q)} :=  \frac{1}{\alpha} \E_q \cro{\frac{\va{\brX}}{\va{\bX}}}\, ; \, \, 
 \frac{1}{\tilde{\gamma}(q)} := \frac{1}{q} \E_q \cro{\frac{\va{\XX}}{1 + \va{\brX}}} \, . 
 \] 
 These functions are quite easy to estimate by a Monte Carlo method. Indeed, a 
 coalescence-fragmentation process has been proposed in \cite{LA}, allowing to couple the random forests
 sampled according to $\pi_q$ for various values of $q$. In this process, $t=1/q$ is seen as a time variable. 
 Starting from $t_0$, the process begins to sample a random forest whose law is $\pi_{1/t_0}$, and 
 this forest evolves in a Markovian way with fragmentations or coalescences of trees. The marginal of this 
 forest process $(\Phi(t), t \geq t_0)$ is at each time $t$, equal to $\pi_{1/t}$. We refer the reader to \cite{LA} for further details on the 
 construction. Using this coalescence-fragmentation process, it is easy to sample the random function 
 $q \mapsto \va{\rho(\Phi(1/q))}$,  and to estimate the functions 
 $\tilde{\alpha}(q)$, $\tilde{\beta}(q)$, $\tilde{\gamma}(q)$ by empirical means over a sample. One can then 
 choose $q$ as a minimizer of $q \in  [\theta_1 \alpha; \theta_2 \alpha] \mapsto \tilde{\alpha}(q)/\tilde{\beta}(q)$
 or $q \in  [\theta_1 \alpha; \theta_2 \alpha] \mapsto \tilde{\alpha}(q)/\tilde{\gamma}(q)$. 
 It can however be proven that $\tilde{\alpha}(q)/\tilde{\gamma}(q)$ is a decreasing function of $q$; 
 and the reconstruction formulas show that numerical stability depends, just like our Jackson's inequality, 
 on the product of the $\bigl \| \bR_j \bigr \|$  (rather than $\bigl \| \brR_j \bigr \|$), which is associated with $1/\tilde{\beta}$ (rather than $1/\tilde{\gamma}$).  Therefore we decided in the simulations of section \ref{numeric.sec}, to minimize the function 
 $q \in  [\theta_1 \alpha; \theta_2 \alpha] \mapsto \tilde{\alpha}(q)/ \tilde{\beta}(q)$, with $\theta_1=1/8$ and $\theta_2=1$.  
  
  Turning back to the choice of $q'$, we could choose it so that the two terms of the product $(\ba/q')(q'/\beta)=
  \ba/\beta$ are of the same order. This leads to $q' = \theta'_1 \sqrt{\ba \beta}$. With this choice, one get, using 
  \eqref{bR-norm.eq}, \eqref{baralpha.eq} and \eqref{beta.eq}
  \[ \nor{\bar{R}}_{\infty,\infty}  \leq 1 + \frac{2}{\theta'_1} \sqrt{\frac{\ba}{\beta}}
  \simeq 1 + \frac{2}{\theta'_1} \sqrt{ \frac{q}{\alpha} \frac{\va{\brX}}{1+ \va{\bX}} \frac{\va{\brX}}{ \va{\bX}}}
  \leq  1 + \frac{2 \sqrt{\theta_2}}{\theta'_1}  \frac{\va{\brX}}{ \va{\bX}} \leq \frac{\va{\XX}}{\va{\bX}}
  \]
  if we choose $\theta'_1 = 2 \sqrt{\theta_2}$. This value of $\theta'_1$ ensures that the multiplicative term appearing in 
 Jackson's inequality \eqref{jackson.ineq} for $p=+\infty$ behaves nicely:
 \begin{equation}
 \label{cont_explo.eq}
 \prod_{i=0}^{j-1} \pare{1+2\frac{\alpha_{i+1}}{q'_i}} 
 \leq   \prod_{i=0}^{j-1} \frac{\va{\XX_i}}{\va{\XX_{i+1}}} = \frac{\va{\XX_0}}{\va{\XX_{j}}}
 \, . 
 \end{equation}
Another possible choice (made in the simulations results of section \ref{numeric.sec})  is to ensure \eqref{cont_explo.eq} by setting
\[ 
1+2\frac{\alpha_{i+1}}{q'_i}= \frac{\va{\XX_i}}{\va{\XX_{i+1}}}
\equi 
q'_i = 2 \alpha_{i+1}  \frac{\va{\XX_{i+1}}}{\va{\brX_{i}}}
\, . 
\]
  
 \section{A summary of the procedure, and computational issues.}
 \label{resume}
  \subsection{Summary}
 Fix two real numbers $\theta_1,\theta_2$ ($\theta_1=1/8$ and $\theta_2= 1$ in our simulations).
 Starting from $(\XX, \LL, \alpha,\mu)$,
  one iteration of our multiresolution scheme goes through the following steps:
 \begin{enumerate}
 \item Choice of $q$. Make $N$ i.i.d. draws ($N=1$ in our simulations) of the coalescence-fragmentation forest process 
 $(\Phi^{(i)}_t; 
 t \in [1/(\theta_2 \alpha);1/(\theta_1 \alpha)] , i=1, \cdots, N)$. From this, estimate the functions 
\[
 \tilde{\alpha}(q) \simeq q \frac{1}{N} \sum_{i=1}^N \frac{\va{\XX}-\va{\rho(\Phi^{(i)}_{1/q})}}
 {1+ \va{\rho(\Phi^{(i)}_{1/q})}} \, , 
 \]
 and 
\[ \frac{1}{\tilde{\beta}(q)}  \simeq \frac{1}{N \alpha}   \sum_{i=1}^N \frac{\va{\XX}-\va{\rho(\Phi^{(i)}_{1/q})}}{\va{\rho(\Phi^{(i)}_{1/q})}} \, .
 \]
 Choose $q_{opt}$ as the minimizer  in $[\theta_1 \alpha; \theta_2 \alpha]$ of the estimation of 
 $\frac{\tilde{\alpha}(q)}{\tilde{\beta}(q)}$. 
 \item Once $q_{opt}$ is chosen, draw one forest $\Phi$ from $\pi_{q_{opt}}$. Take 
 $\bX = \rho(\Phi)$, $\brX = \XX \setminus \bX$, $\bar{\mu}= \mu(\cdot | \bX)$.
 \item Compute $\bLL = \LL_{\bX,\bX}- \LL_{\bX,\brX} \pare{\LL_{\brX,\brX}}^{-1} \LL_{\brX,\bX}$, and 
 $\ba= \max_{\bx \in \bX} (- \bLL(\bx,\bx))$. 
\item Choose $q'= 2 \ba \va{\brX}/ \va{\bX}$.  Compute $K_{q'}= q' (q' \Id - \LL)^{-1}$. This can be done by 
a polynomial approximation. The scaling functions are then 
\[ x \mapsto \phi_{\bx}(x)  = \frac{K_{q'}(\bx, x)}{\mu(x)} \, ,  
\]
while the wavelets are 
\[ x \mapsto \psi_{\brx}(x) = \frac{(K_{q'}- \Id)(\brx, x)}{\mu(x)}
\, . 
\]
Compute also the reconstruction operators $\bR$ and $\brR$.
\item Go on with $(\bX,\bLL, \bar{\mu}, \ba)$ in place of  $(\XX,\LL, \mu, \alpha)$.
 \end{enumerate}
 
  \subsection{Numerical issues.} 
  \label{numeric}
  The three main computational issues at each step of the algorithm are:
  \begin{enumerate}
  \item the $N$ sampling of the coalescence-fragmentation forest process  
  $(\Phi^{(i)}_t)_{t \in [\frac{1}{\theta_2 \alpha};\frac{1}{\theta_1 \alpha}]; i =1, \cdots, N}$. As proved in \cite{MAR}, the mean time to sample $\Phi_q$ is
  equal to $\sum_{i=0}^{n-1} \frac{1}{q + \lambda_i}$, corresponding to a number of random jumps for the Markov  chain of order 
  $\sum_{i=0}^{n-1} \frac{\alpha}{q + \lambda_i} \leq \frac{\alpha \va{\XX}}{q}$. Hence, the number of operations to sample $N$ realisations of the forest process is of order $N \va{\XX}/\theta_1$.
 \item the computation of the Schur complement. This requires to compute the inverse $(-\LL_{\brX,\brX})^{-1}$. Note that 
 \begin{equation}
 \label{inv-serie.eq}
  (-\LL_{\brX,\brX})^{-1}= \frac{1}{\alpha} (\Id_{\brX}- P_{\brX,\brX})^{-1}  
 			= \frac{1}{\alpha} \sum_{k=0}^{+ \infty} (P_{\brX,\brX})^k \, . 
\end{equation}
Since $(P_{\brX,\brX})^k(\brx,\bry)$ is the probability for the discrete time Markov Chain to go from $\brx$ to $\bry$ in $k$ steps without entering $\bX$, and since $\bX$ is "well spread", this probability is fast decaying in $k$, so that a good approximation of 
$(-\LL_{\brX,\brX})^{-1}$ is obtained by truncating the series in \eqref{inv-serie.eq} to the first terms. 
  
  The computation of  $(-\LL_{\brX,\brX})^{-1}$ is then fast if the matrix $-\LL_{\brX,\brX}$ is sparse. This may be satisfied by $\LL$
  at the beginning of the algorithm, but even if $\LL$ is sparse, this is usually no more the case for $\bLL$ (see Figure \ref{nosparse.fig}). Therefore, after the computation of $\bLL$, the authors of  \cite{HAM,SFV} perform a sparsification method. In our frame, it would be 
 convenient that the sparsification step does not alter  too much  the error bound we get on 
 $\nor{ (K_{q'})_{\bX,\XX} \LL -\bLL  (K_{q'})_{\bX,\XX}}$. 
 Denote by $\bLL_s$ the sparsified version of $\bLL$, and let us consider the interwining error:
 \begin{align*}
&  \nor{(K_{q'})_{\bX,\XX}  \LL  -\bLL_s(K_{q'})_{\bX,\XX} }_{\infty,\infty} 
 \\
 & \qquad  \leq  \nor{(K_{q'})_{\bX,\XX}  \LL -\bLL (K_{q'})_{\bX,\XX} }_{\infty,\infty} + \nor{(\bLL -\bLL_s) (K_{q'})_{\bX,\XX}}_{\infty,\infty}
 \\
& \qquad  \leq   \nor{(K_{q'})_{\bX,\XX}  \LL -\bLL (K_{q'})_{\bX,\XX} }_{\infty,\infty}  +  \nor{\bLL -\bLL_s}_{\infty,\infty}
\\
 &  \qquad \leq    \nor{(K_{q'})_{\bX,\XX}  \LL -\bLL (K_{q'})_{\bX,\XX} }_{\infty,\infty} 
+ \max_{\bx \in \bX} \sum_{\by \in \bX} \va{\bLL(\bx,\by)-\bLL_s(\bx,\by)}
\, . 
\end{align*}
Therefore, the sparsification should for instance satisfy 
 \[   \max_{\bx \in \bX} \sum_{\by \in \bX} \va{\bLL(\bx,\by)-\bLL_s(\bx,\by)} \leq 
 			\nor{(K_{q'})_{\bX,\XX}  \LL -\bLL (K_{q'})_{\bX,\XX} }_{\infty,\infty}  \, . 
			\]
By building $\bar {\mathcal L}_s$ with this constraint only,
and replacing ${\mathcal L}$ by ${\mathcal L}_s$,
we can however locally deteriorate a good intertwining approximation,
i.e., increase a lot a small error
$$
    \epsilon(\bar x)
    = \sum_{x \in {\cal X}} \bigl|\bigl(K_{q'}{\cal L}\bigr)(\bar x, x)
        - \bigl(\bar{\cal L}(K_{q'})_{\bar{\cal X}, {\cal X}}\bigr)(\bar x, x)\bigr|
$$
by allowing this error to grow up to the order of $\epsilon = \max_{\bar x \in \bar{\cal X}} \epsilon(\bar x)$.
Since, at least from a probabilistic point of view (see \cite{ACGM1} and its appendix),
the collection of these local errors $\epsilon(\bar x)$ is more meaningful
than the global error $\epsilon$ we follow a more local and restrictive approach.

For $\theta \geq 1$ ($\theta = 4$ in our simulation, except in Figure~\ref{sparse.fig}
where we show the effect of different choices of $\theta$ in
our sparsification scheme), we set to 0 a maximal number of coefficients on each row $\bar x$ of $\bar {\cal L}$
until the off-diagonal suppressed weight remains below a targeted error level $\epsilon(\bar x) \bar\alpha / 2 \theta \alpha$
(the coefficients of $\cal L$ and $\bar {\cal L}$ are naturally on scale $\alpha$ and $\bar\alpha$ respectively).
But this has to be done in a symmetric way to ensure 
the symmetry of $\bLL_s$.  In addition one has to make $\bLL_s$ a
Markov generator.  To satisfy all these constraints, we proceed in the following way. 
We first sort in non-decreasing order the off-diagonal pairs $(\bar x, \bar y)$ (with $\bar x \neq \bar y$) according to the quantity 
$\max(\bw(\bx,\by), \bw(\by,\bx))$. We then set $\delta_{\bx} = 0$ for all $\bx \in \bX$ and screen the ordered off-diagonal pairs. 
Each time $(\bx,\by)$ is such that $\delta_{\bx} + \bw(\bx,\by) \leq \epsilon(\bar x)$ and $\delta_{\by} + \bw(\by,\bx) \leq \epsilon(\bar y)$, 
we set to $0$ both $\bw(\bx,\by)$ and $\bw(\by,\bx)$. We also
increase $\delta_{\bx}$  by $\bw(\bx,\by)$, and $\delta_{\by}$ by $\bw(\by,\bx)$. We finally adjust the diagonal coefficients $\bLL(\bx,\bx)$ 
by subtracting them the associated removed weight $\delta_{\bx}$.
This ensures 
\begin{align*}
    \max_{\bx \in \bX} \sum_{\by \in \bX} \va{\bLL(\bx,\by)-\bLL_s(\bx,\by)}
    & \leq {\bar\alpha \over \theta \alpha}\nor{(K_{q'})_{\bX,\XX}  \LL -\bLL (K_{q'})_{\bX,\XX} }_{\infty,\infty} \\
    & \leq \nor{(K_{q'})_{\bX,\XX}  \LL -\bLL (K_{q'})_{\bX,\XX} }_{\infty,\infty}  \, . 
\end{align*}
In doing so, we can lose the irreducibility property, i.e., we can get a disconnected reduced graph associated with $(\bX, \bLL_s)$.
This actually does not raise any particular difficulty.
The only use we made of irreducibility was to define without ambiguity our reference measure $\mu$.
When irreducibility is lost, we simply proceed by using the natural restriction of $\mu$ as  reference measure.  

%
%

 \item the computation of $K_{q'}$. As in \cite{SFV}, this can be done by Chebyshev polynomial approximation. Apart from computational gain, 
using polynomial approximation has the advantage to produce filters  $K_{q'}$ with a good space localization as soon as $\LL$ is nearly diagonal.
  \end{enumerate}

\begin{figure} 
 \begin{tabular}{ccc}
 \includegraphics[scale=0.25]{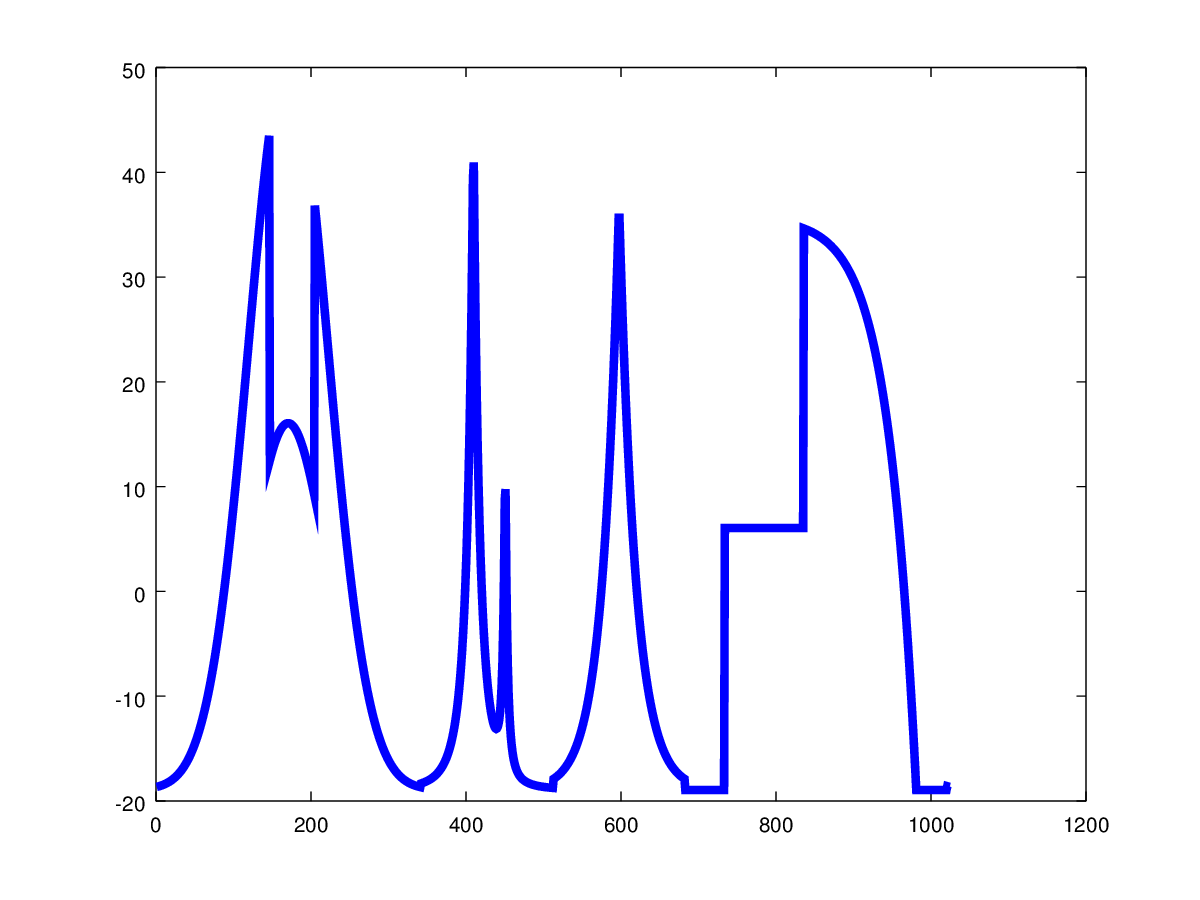}
&
  \includegraphics[scale=0.25]{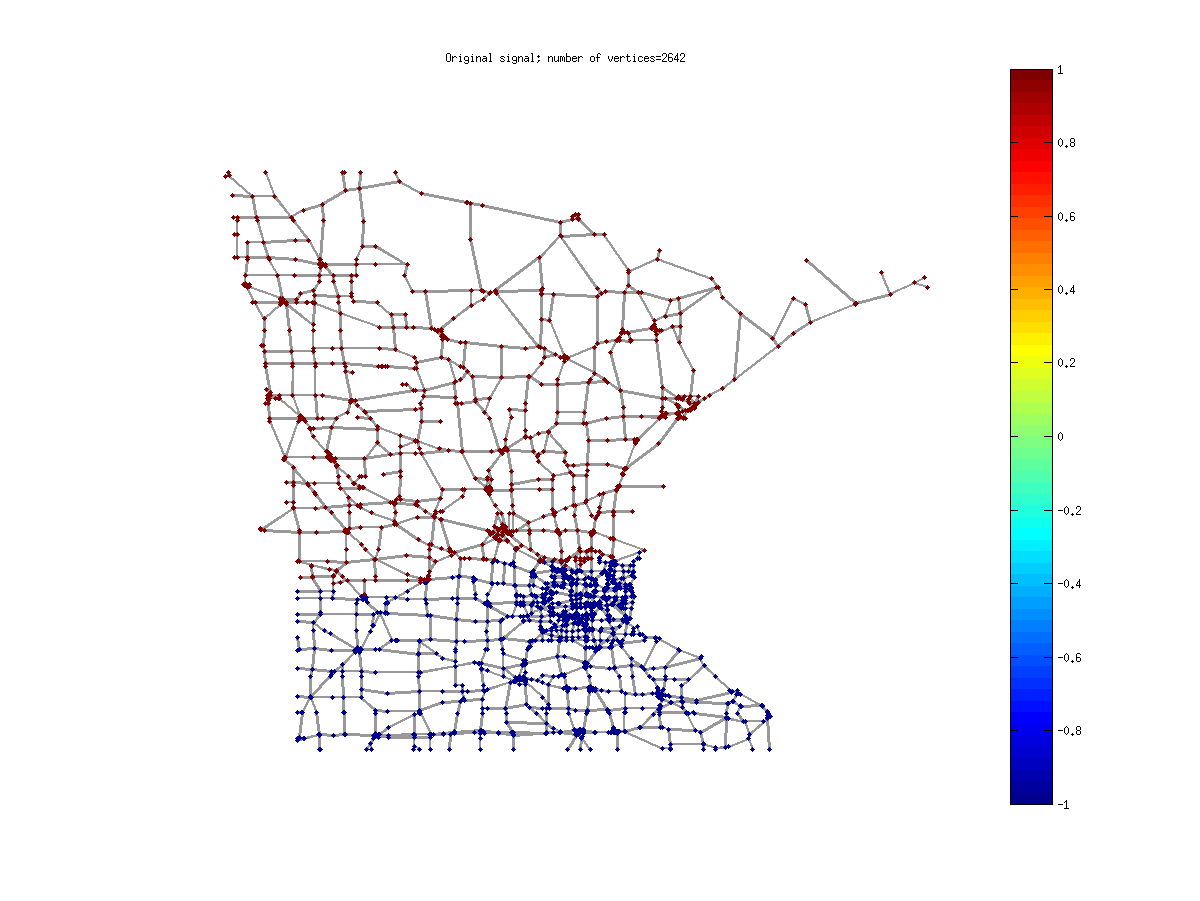}
&
 \includegraphics[scale=0.25]{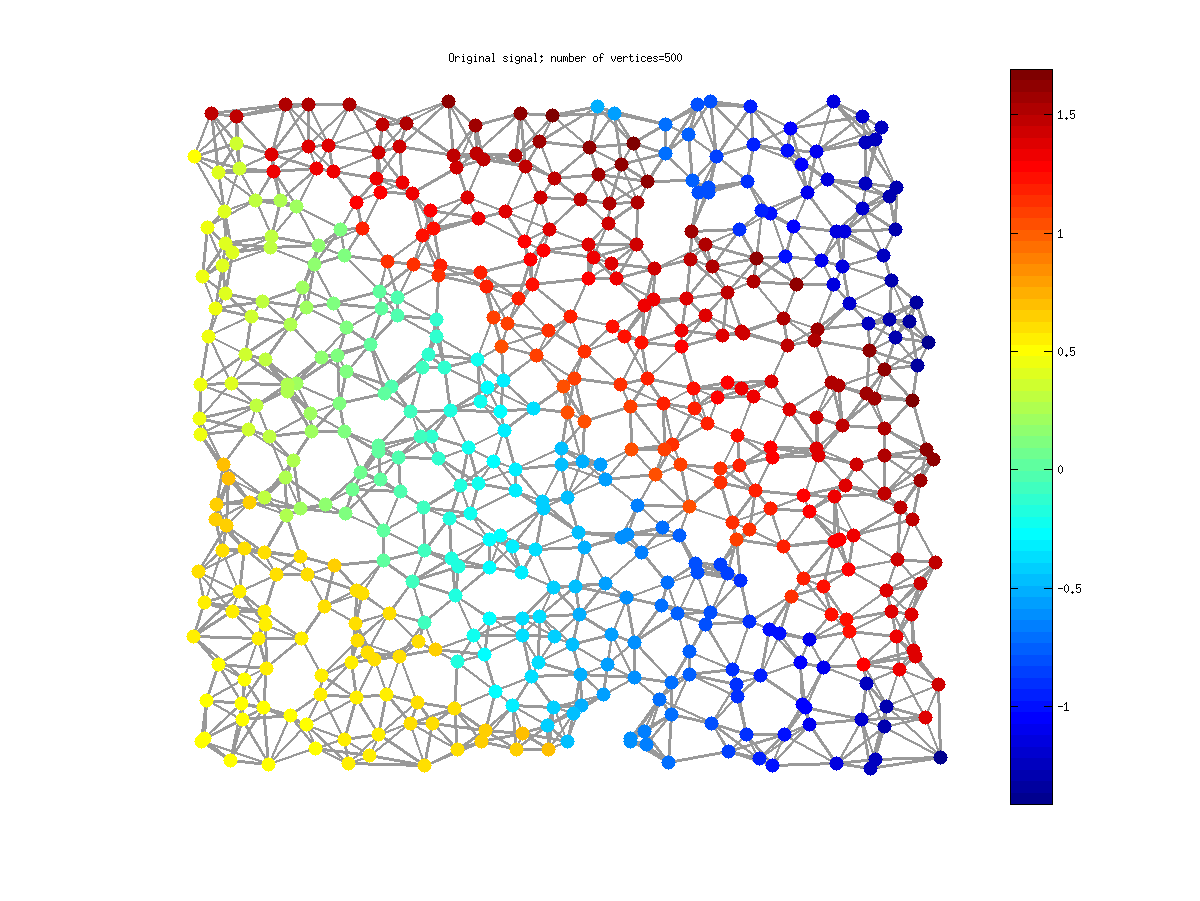} 
  \\
(a) & (b) & (c)
\end{tabular}
\caption{Original signals: (a) Signal on the line presented in \cite{MAL}. (b) Signal on the Minnesota roads network: sign of the first Fourier mode $e_1$. (c) Signal from \cite{SFV} on the sensor graph from the  GSPBox.} 
 \label{signaux}
  \end{figure}

 \section{Numerical Results}
 \label{numeric.sec}
 This section is devoted to numerical illustrations of our multiresolution scheme,  referred to as "the intertwining wavelets multiresolution". 
 We show the results of some downsampling steps on  Minnesota roads network (cf Figures \ref{nosparse.fig} and \ref{sparse.fig}) containing 2642 vertices,
  and use the multiresolution schemes 
 to analyse and compress  the three benchmark signals of Figure \ref{signaux}. 

\subsection{Downsampling of the Minnesota roads network.}
 Figure \ref{nosparse.fig} shows the result of two levels of forest's roots sampling, combined with the weighting procedure through 
 Schur's complement computation without sparsification. It illustrates the loss of sparsity of the weighting procedure. In Figure 
 \ref{sparse.fig}, we used the sparsification method proposed in section \ref{numeric}
with three values of the parameter $\theta$. On these graphs, the width of one edge is proportional to its weight. 
 
 \begin{figure} 
\begin{tabular}{c}
 \includegraphics[height=6cm,width=16cm]{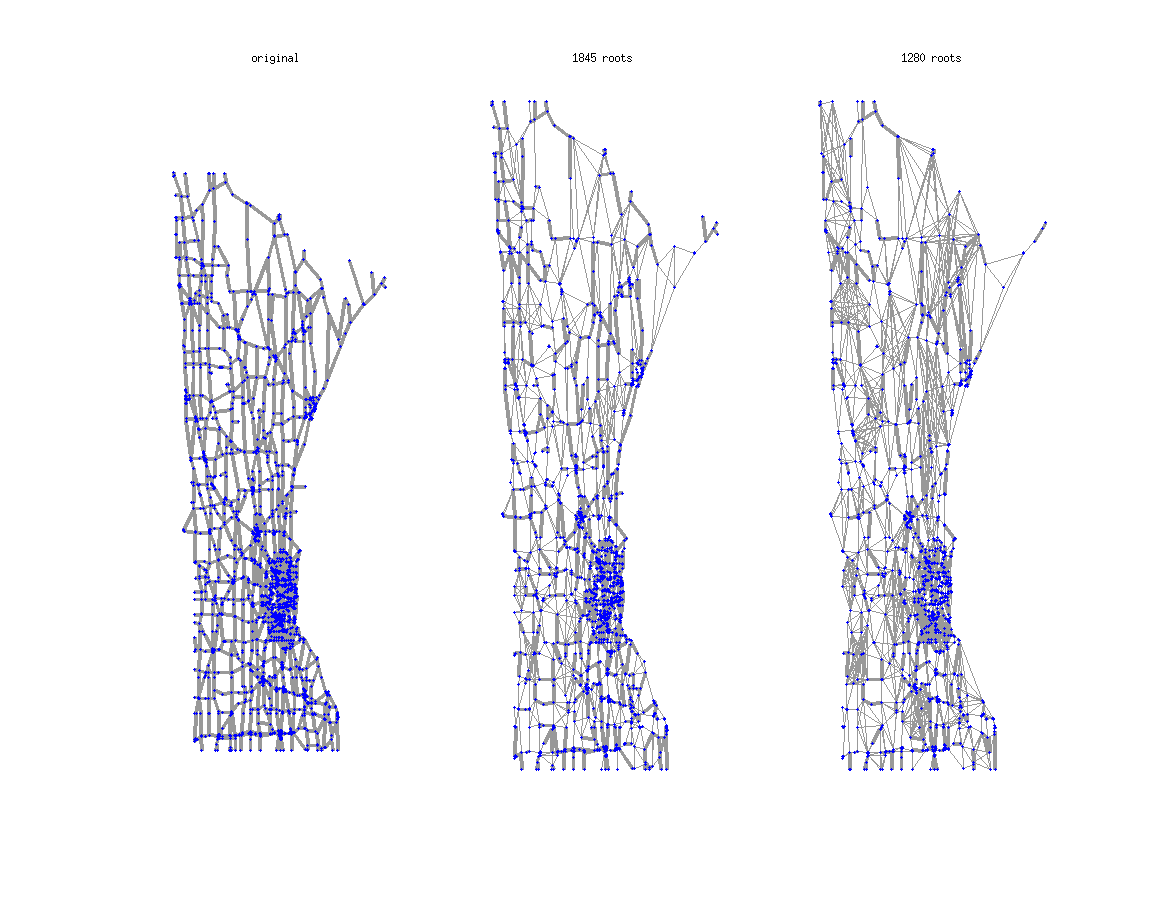}
 \\[-1cm]
(a) \hspace{3.7cm} (b) \hspace{3.7cm} (c) 
\end{tabular}
\caption{ One sequence of Minnesota's subgraphs, without sparsification. (a) Original graph. (b) First level of downsampling. 
(c) Second level of downsampling.}
 \label{nosparse.fig}
  \end{figure}

  \begin{figure} 
 \begin{tabular}{c}
  \includegraphics[height=6cm,width=16cm]{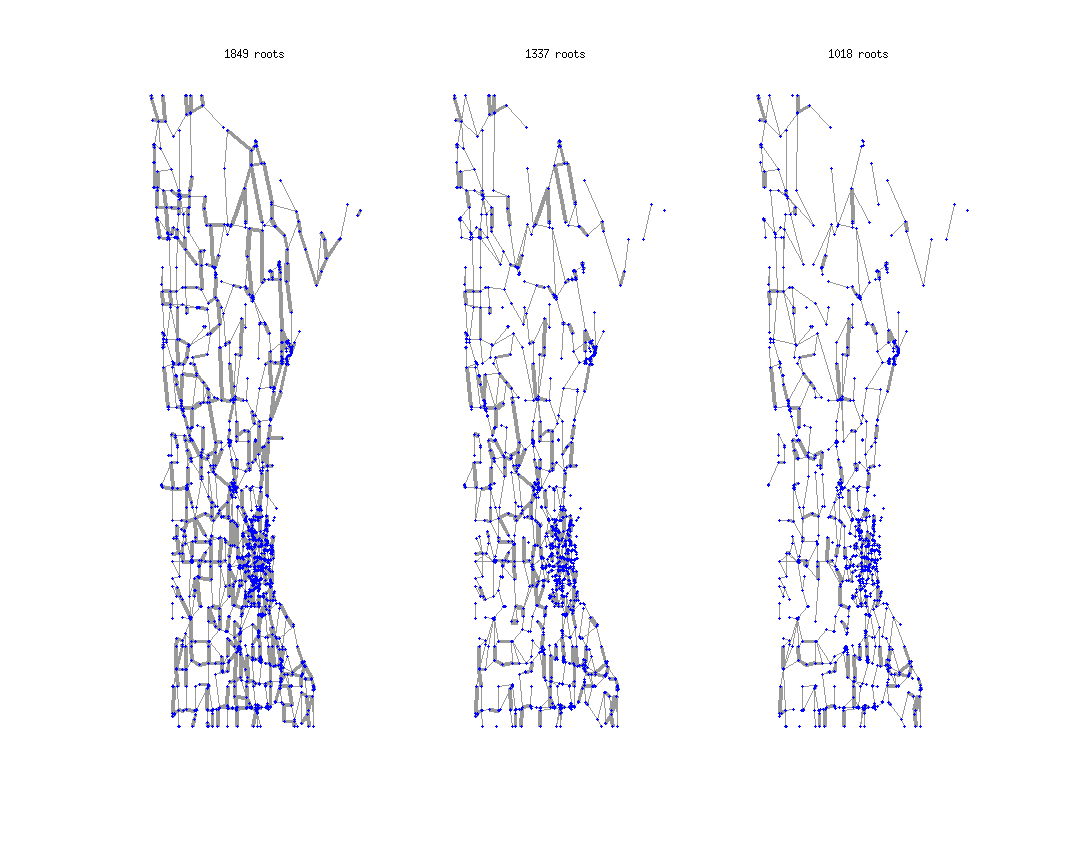}
 \\[-1cm]
  \includegraphics[height=6cm,width=16cm]{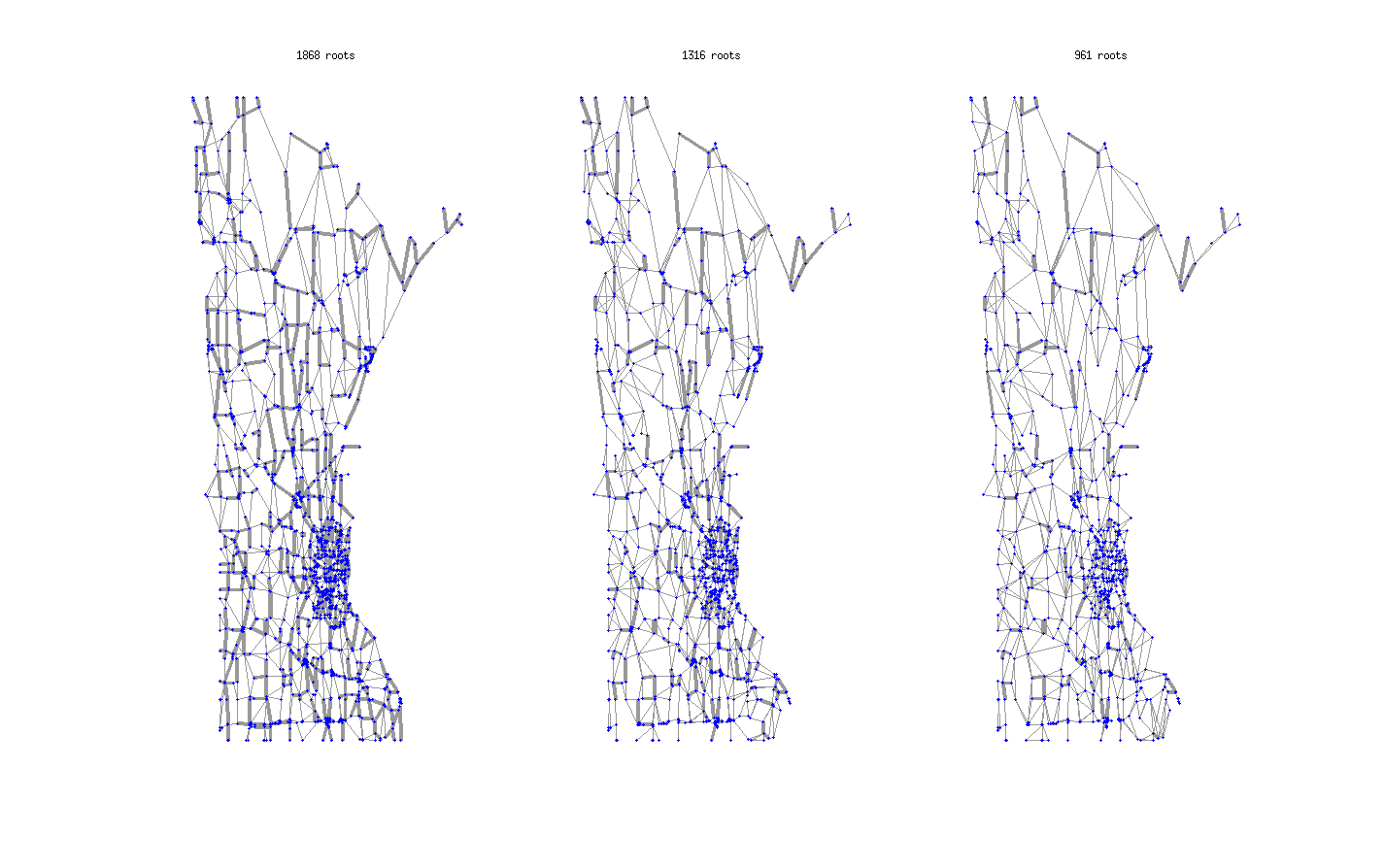}
 \\[-1cm]
   \includegraphics[height=6cm,width=16cm]{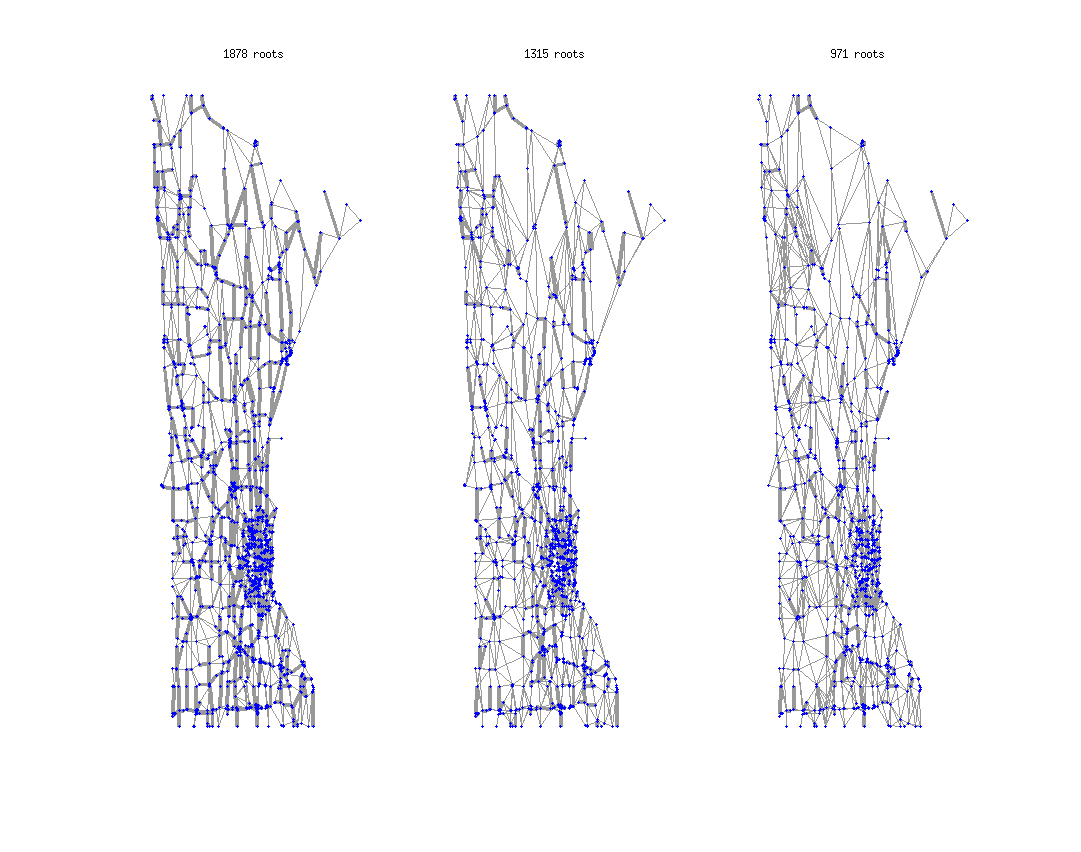}
 \\[-.8cm]
(a) \hspace{3.7cm} (b) \hspace{3.7cm} (c) 
\end{tabular}
\caption{One sequence of Minnesota's subgraphs, using sparsification.
On the first line, $\theta=2$, while $\theta=4$ and $\theta = 8$ on the second and third ones.
(a) First level of downsampling.
(b) Second level of downsampling.
(c) Third level of downsampling.}
 \label{sparse.fig}
  \end{figure}

 \subsection{Analysis.}
\subsubsection{The line.} 
 We analyse the signal of Figure \ref{signaux}(a) using our multiresolution scheme.
The results after one step are presented in Figure \ref{segment_coeff},
where we can see that the big detail coefficients are located at the discontinuities of the original signal.
We also compare our scheme with a classical wavelet scheme involving Daubechies12 wavelets.
The results are given in Figure \ref{segment_comp_coeff}.
After two steps , we end up with 370 approximation coefficients ($f_2$), instead of 256 for the classical wavelets.
In both cases, the number of non vanishing detail coefficients is small.

  \begin{figure}
  \includegraphics[scale=0.6]{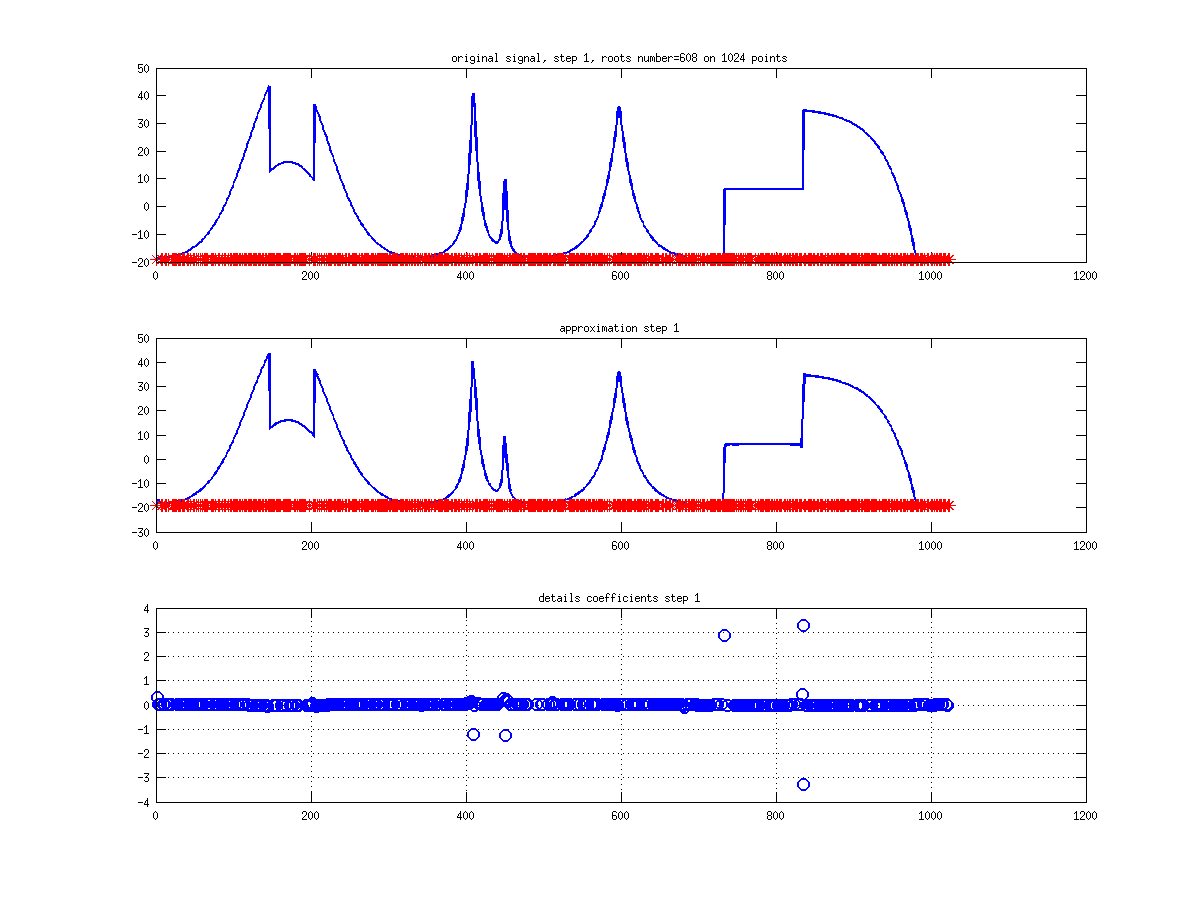}
 \caption{One step of the analysis. The red crosses are the locations of the 608 chosen points in $\bX$, while the blue circles correspond to the points of $\brX$. The first figure is the original signal; the second one gives the approximations coefficients ($f_1$), and the third one the detail coefficients ($g_1$).}
 \label{segment_coeff}
 \end{figure}
 
 \begin{figure}
  \includegraphics[scale=0.6]{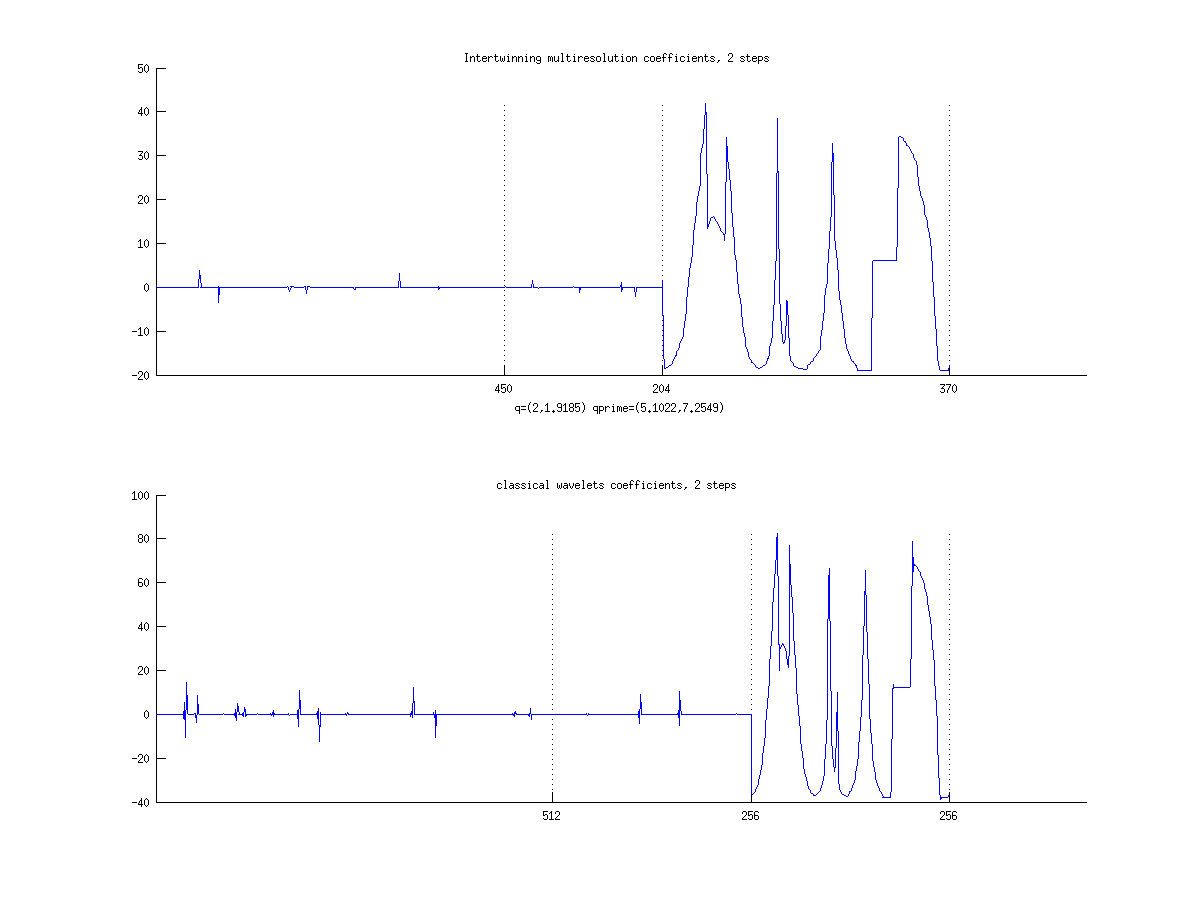}
 \caption{Results of two steps of the multiresolution for the intertwining wavelets (first graph) and Daubechies  12 wavelets. The three parts of a graph 
 give $(g_1,g_2,f_2)$.}
 \label{segment_comp_coeff}
 \end{figure}

 \subsubsection{Minnesota graph} 
 The analysis of the signal of Figure \ref{signaux}(b) after two steps of the interwining wavelets multresolution is presented in  Figure \ref{Minnesota_2steps}. Here again, the big detail coefficients are located at discontinuities of the signal. 
 
 \begin{figure}
 \begin{multicols}{2}
(a) Original signal: sign of $e_1$.

\centerline{\includegraphics[width=6cm,height=4cm]{minnesota_one_step_original_signal}}

(b) Approximation. Size of $f_2$: 1268.

\centerline{\includegraphics[width=6cm,height=4cm]{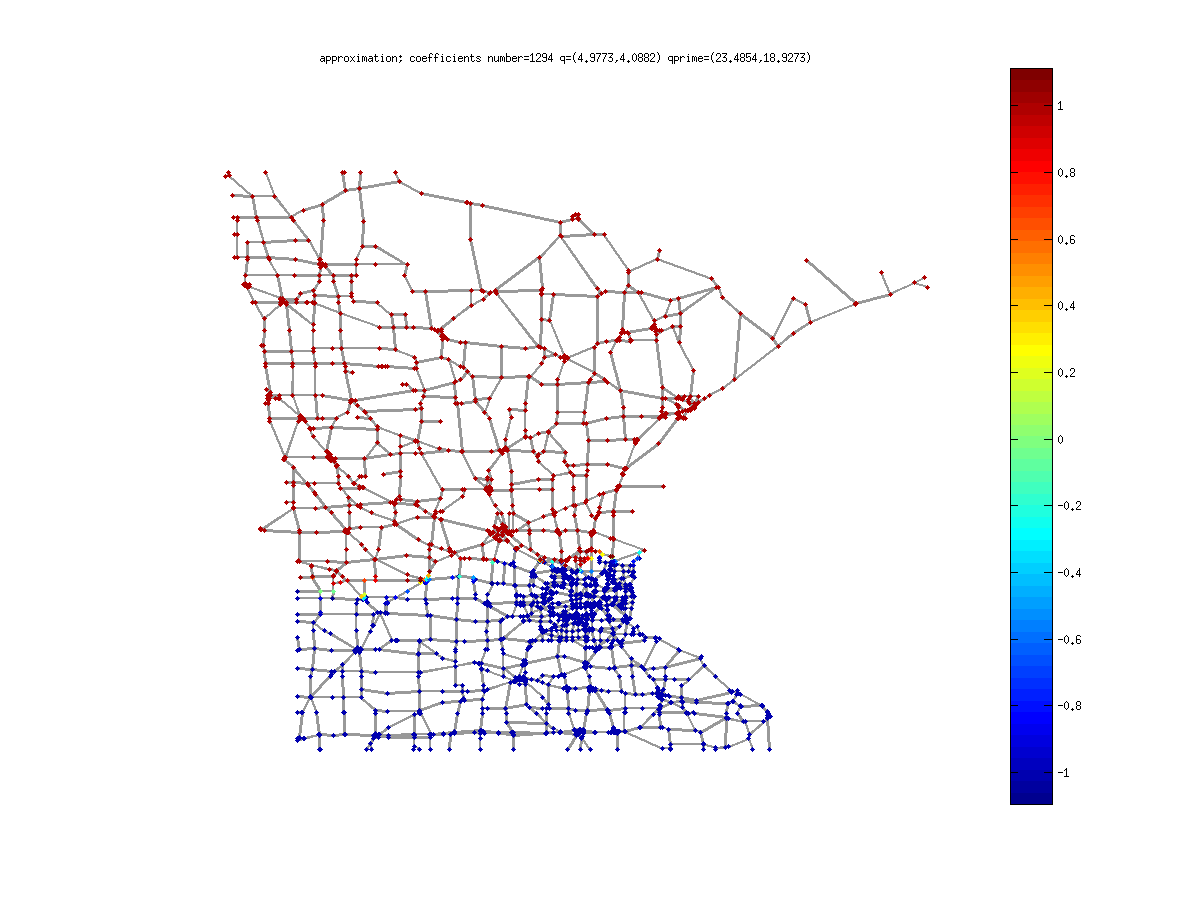}}
\columnbreak

(c) Detail at scale 1.   Size of $g_1$: 800.

\centerline{\includegraphics[width=6cm,height=4cm]{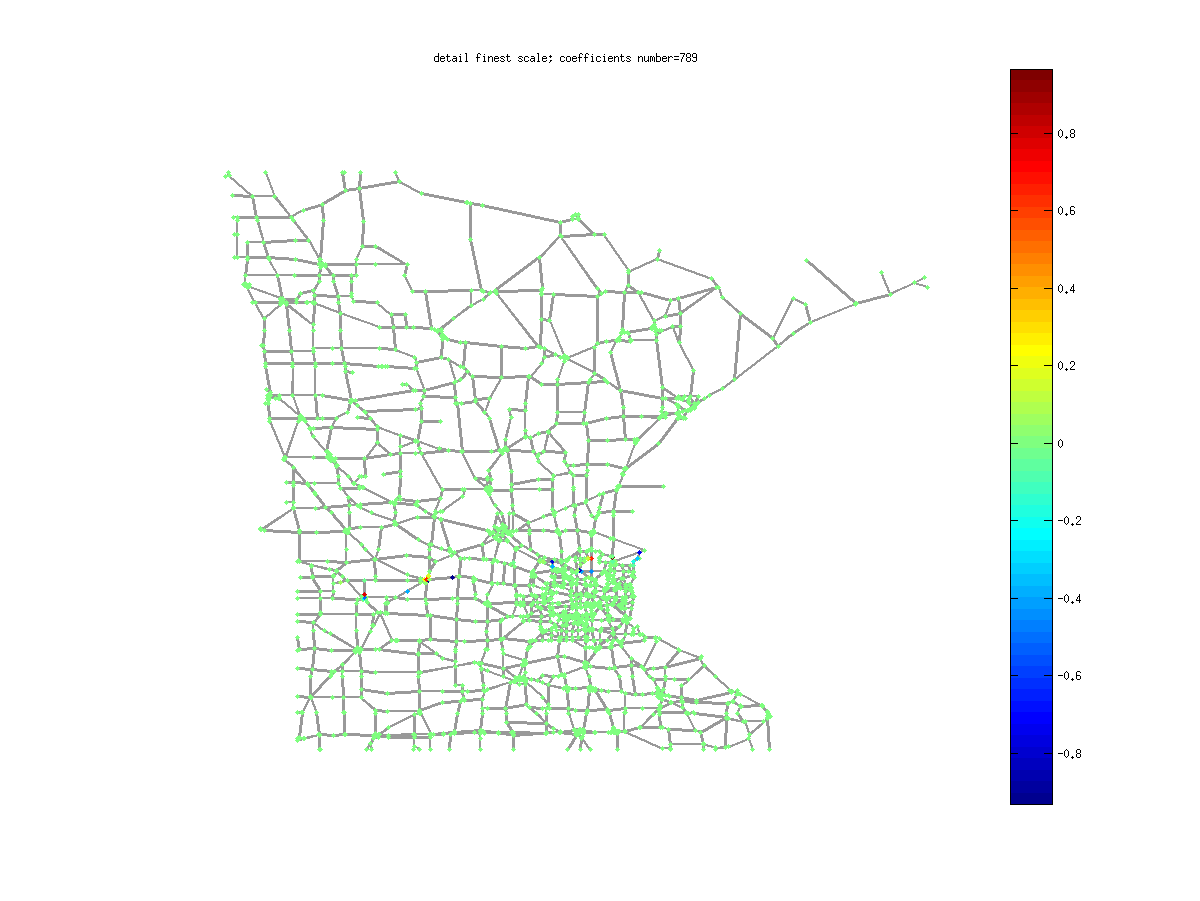}}

(d) Detail at scale 2.  Size of $g_2$: 574.

\centerline{\includegraphics[width=6cm,height=4cm]{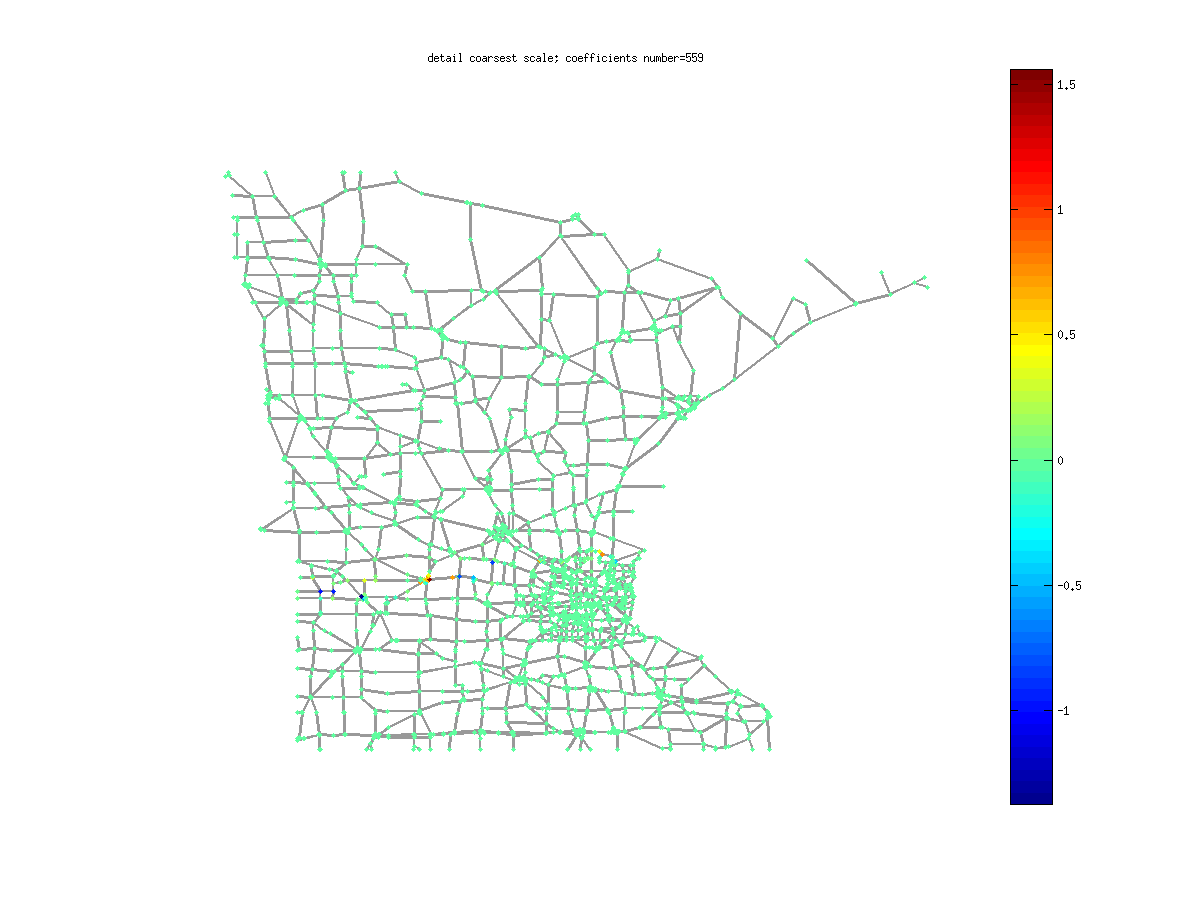}}
\end{multicols}
\caption{Two steps of the intertwining wavelets multiresolution upsampled to the original graph: (a) $f_0$; (b) $\bar{R}_0 \bar{R}_1 f_2$; 
(c) $ \brR_0 g_1$;  (d) $\bar{R}_0 \brR_1 g_2$}.
\label{Minnesota_2steps}
\end{figure}

 \subsubsection{Sensor graph} 
 The analysis of the signal of Figure \ref{signaux}(c) after two steps of the interwining wavelets multresolution is presented in  Figure \ref{sensor_2steps}. 
 
 \begin{figure}
 \begin{multicols}{2}
(a) Original signal.

\centerline{\includegraphics[width=6cm,height=4cm]{sensor_one_step_original_signal}}

(b) Approximation. Size of $f_2$: 206.

\centerline{\includegraphics[width=6cm,height=4cm]{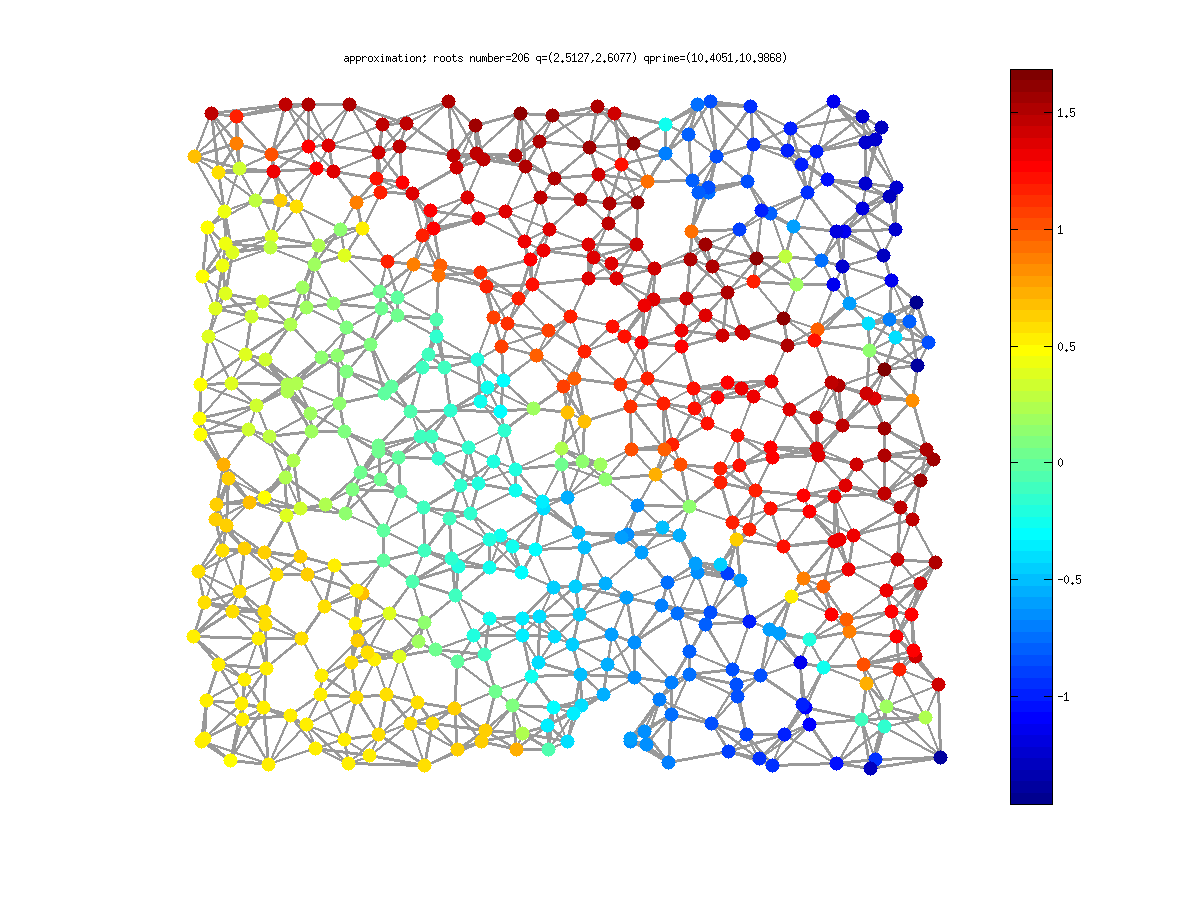}}
\columnbreak

(c) Detail at scale 1.   Size of $g_1$: 188 .

\centerline{\includegraphics[width=6cm,height=4cm]{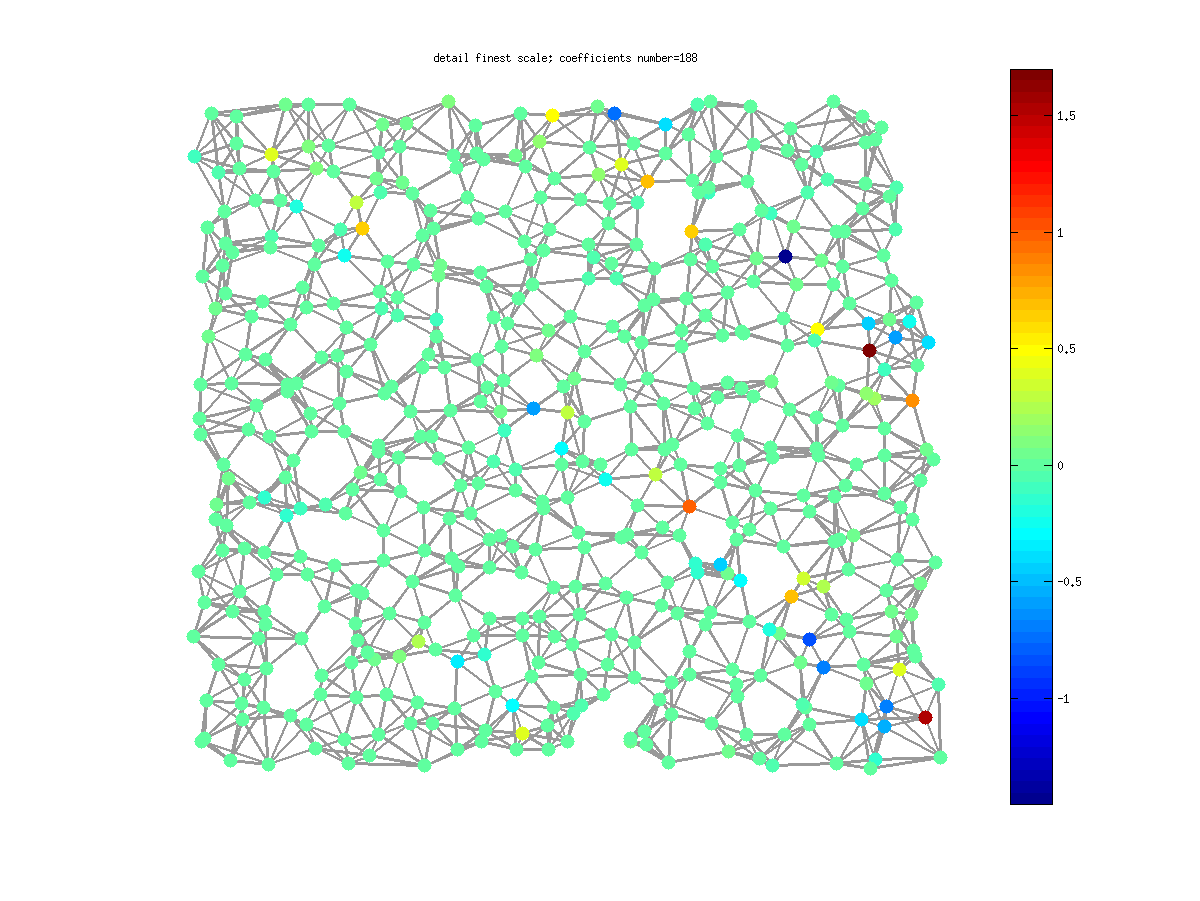}}

(d) Detail at scale 2.  Size of $g_2$: 106.

\centerline{\includegraphics[width=6cm,height=4cm]{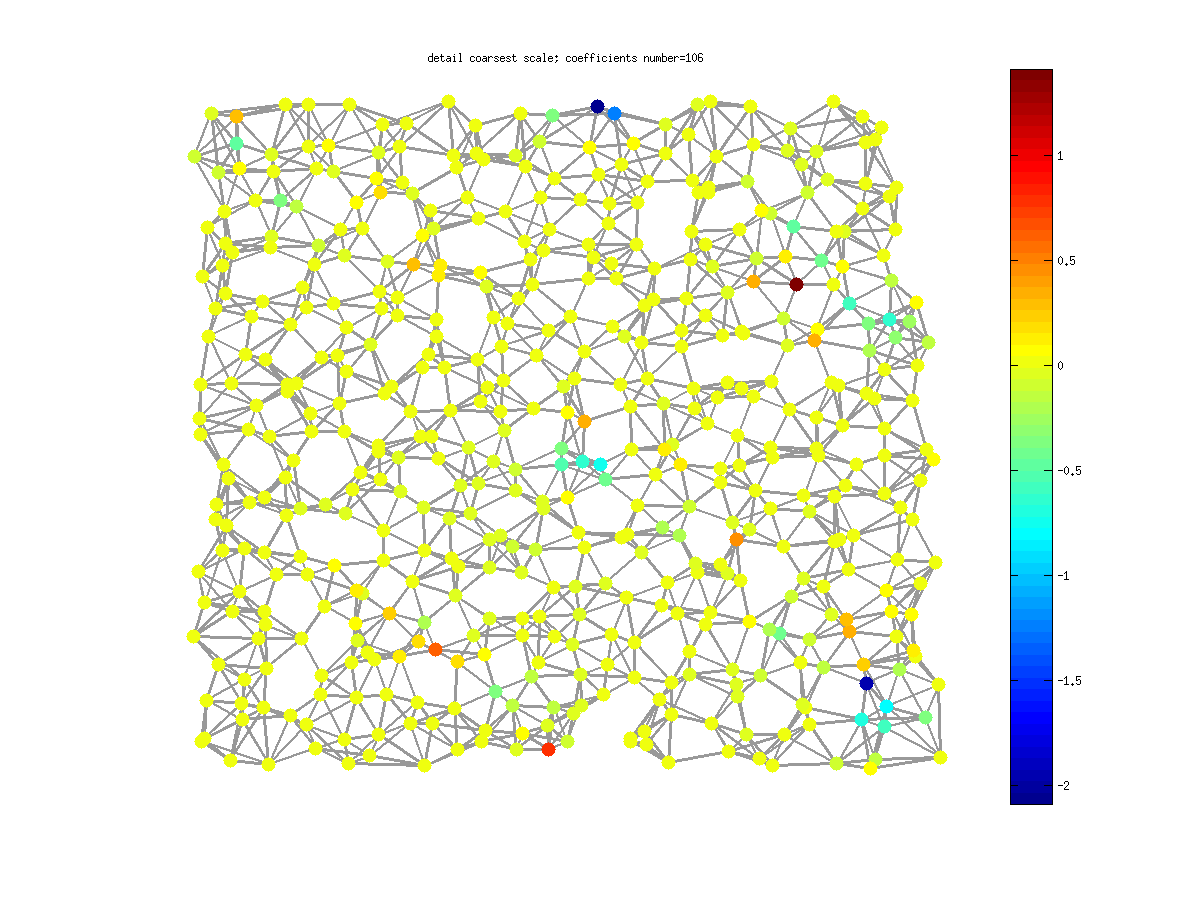}}
\end{multicols}
\caption{Two steps of the intertwining wavelets multiresolution upsampled to the original graph: (a) $f_0$; (b) $\bar{R}_0 \bar{R}_1 f_2$; 
(c) $ \brR_0 g_1$;  (d) $\bar{R}_0 \brR_1 g_2$}.
\label{sensor_2steps}
\end{figure}
 \subsection{Compression.}
 Now, we use the intertwining wavelets  to compress the signals of  Figure \ref{signaux}. Since the  intertwining wavelets
  $\psi_{\brx}$ are unnormalized, we normalize detail coefficients in the compression problem. 
 More precisely, given a signal $f$, the unnormalized coefficients after one multiresolution step are $\baf$ and 
 $\brf$, from which we can reconstruct $f$ by 
 \[ f = \bR \baf + \brR \brf = \sum_{\bx \in \bX} \baf(\bx) \tilde{\xi}_{\bx} + 
 \sum_{\brx \in \brX} \brf(\brx) \tilde{\xi}_{\brx}
 \, , 
 \]
where the dual basis $(\tilde{\xi}_{\bx}, \tilde{\xi}_{\brx})_{\bx \in \bX, \brx \in \brX}$ corresponds to the 
columns of matrices $\bR$ and $\brR$.  Unlike the classical wavelets, the basis $(\xi_x, x \in \XX)$ and its 
dual basis $(\tilde{\xi}_x, x \in \XX)$ are not orthonormal ones. Therefore to compress our signal, we 
 truncate the "normalized coefficients" $\brf(\brx) \nor{ \tilde{\xi}_{\brx}}$. In a similar way, after $k$ 
 multiresolution steps, the unnormalized coefficients are $[f_k, g_k, g_{k-1}, \cdots, g_1]$, 
 from which we can reconstruct $f$ by:
\begin{align}
  f  & =  \bR_0 \cdots \bR_{k-1}f_k + \sum_{j=0}^{k-1} \bR_0 \cdots \bR_{j-1} \brR_j g_j
\\ 
& = \sum_{x_k \in \XX_k} f_k(x_k) \tilde{\xi}^{(k,k)}_{x_k} + 
 \sum_{j=0}^{k-1} \sum_{\brx_{j-1} \in \brX_{j-1}} g_j(\brx_{j-1}) \tilde{\xi}^{(k,j)}_{\brx_{j-1}}
 \end{align}
 where $(\tilde{\xi}^{(k,k)}_{x_k}, x_k \in \XX_k)$ are  the 
columns of the matrix $\bR_0 \cdots \bR_{k-1}$, while $(\tilde{\xi}^{(k,j)}_{\brx_{j-1}}, \brx_{j-1} \in \brX_{j-1})$
are the colums of the matrix $\bR_0 \cdots \bR_{j-1} \brR_{j}$ ($j=0, \cdots k-1$). Given a threshold 
$\epsilon > 0$, the compressed version of $f$ is 
 \[
 f_c = \sum_{x_k \in \XX_k} f_k(x_k) \tilde{\xi}^{(k,k)}_{x_k} + 
 \sum_{j=0}^{k-1} \sum_{\brx_{j-1} \in \brX_{j-1}} g_j(\brx_{j-1}) \
 \ind_{\va{g_j(\brx_{j-1})} \nor{\tilde{\xi}^{(k,j)}_{\brx_{j-1}}} \geq \epsilon} \, \tilde{\xi}^{(k,j)}_{\brx_{j-1}}
 \, . 
 \]
 Another way to compress $f$ is to keep a fixed percentage of the highest (in absolute value) normalized detail coefficients. This is the way we have done our compression experiments.
 
 \subsubsection{The line.}  We compared the compression results of our method with those obtained using classical  Daubechies12 wavelets. 
 We let our algorithm evolve until we get an approximation of size less than 16. In the experiment, this was achieved after 20 steps, and 
 led to  15 approximation coefficients. For classical wavelets, we get 16 approximation coefficients  after 6 steps. 
 For both methods, a given proportion $p$ of the details coefficients are kept to compute the compressed signals $f_c$. 
Figure \ref{segment_compression_error} presents the relative errors $\nor{f-f_c}_2/\nor{f}_2$ in terms of $p$, and shows the good behavior of the intertwining wavelets.  The compressed signal computed  with 10\% of normalized detail coefficients is shown in Figure \ref{segment_compression_signal}.

  \begin{figure}
  \includegraphics[scale=0.6]{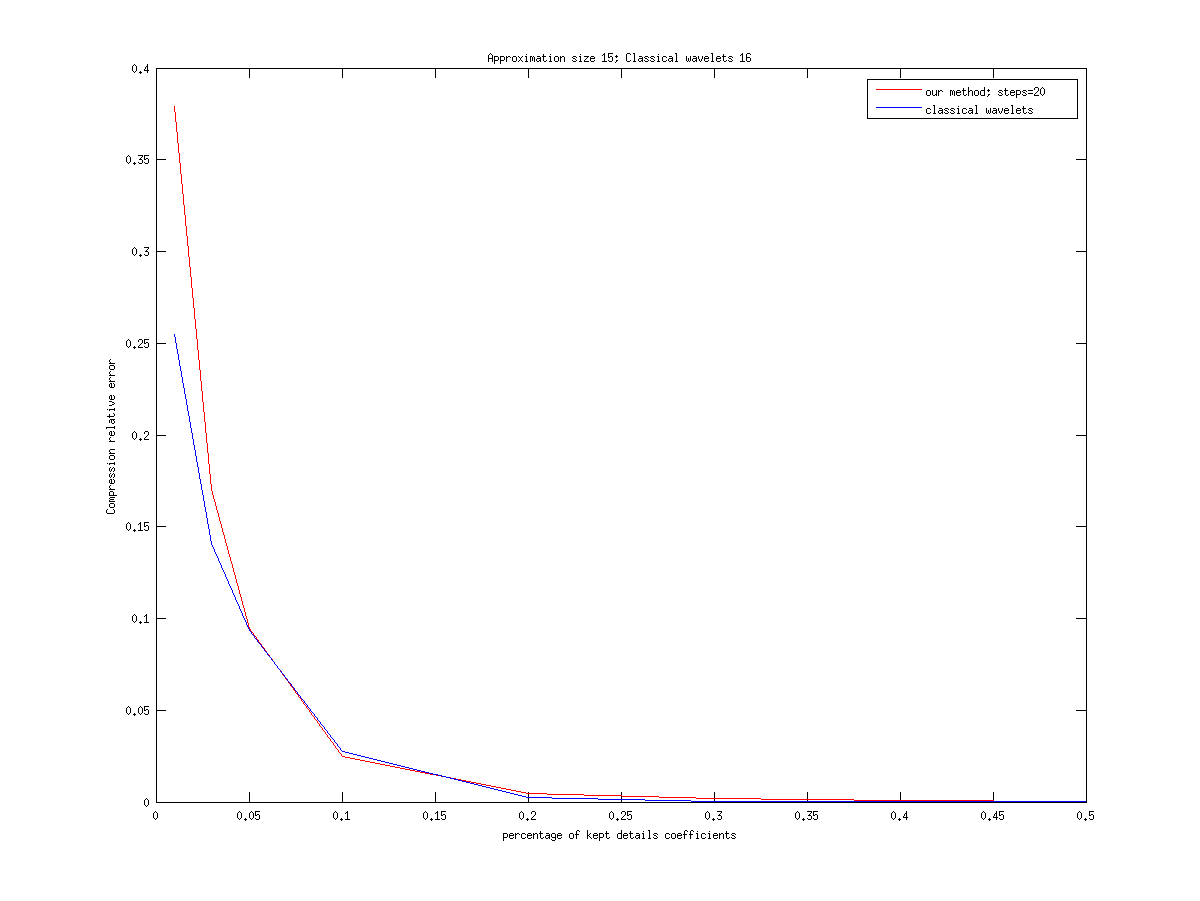}
 \caption{Relative compression error of signal in Figure \ref{signaux}(a) in terms of the percentage of kept normalized detail coefficients. In red, the error using intertwining wavelets. In blue, error using Daubechies12 wavelets. }
 \label{segment_compression_error}
 \end{figure}
  \begin{figure}
  \includegraphics[scale=0.6]{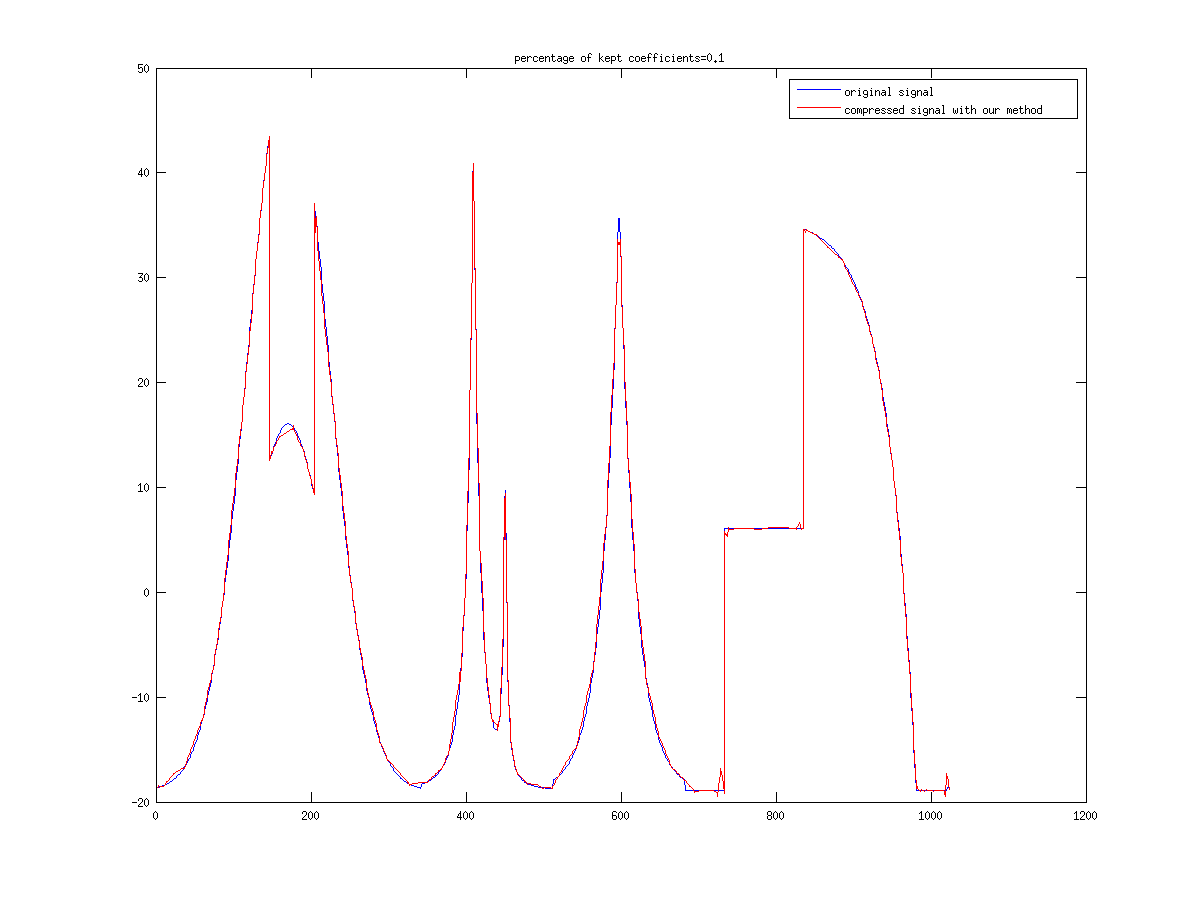}
 \caption{Original signal (blue line) and compressed one (red line) keeping 10\% of normalized detail coefficients.  }
 \label{segment_compression_signal}
 \end{figure}

 \subsubsection{The Minnesota graph.} 
 Figure \ref{minnesota_compression_signal} gives the compressed signal computed  with 10\% of kept coefficients 
  after 3 multiresolution steps.
  \begin{figure}
  \includegraphics[height=12cm, width=8cm]{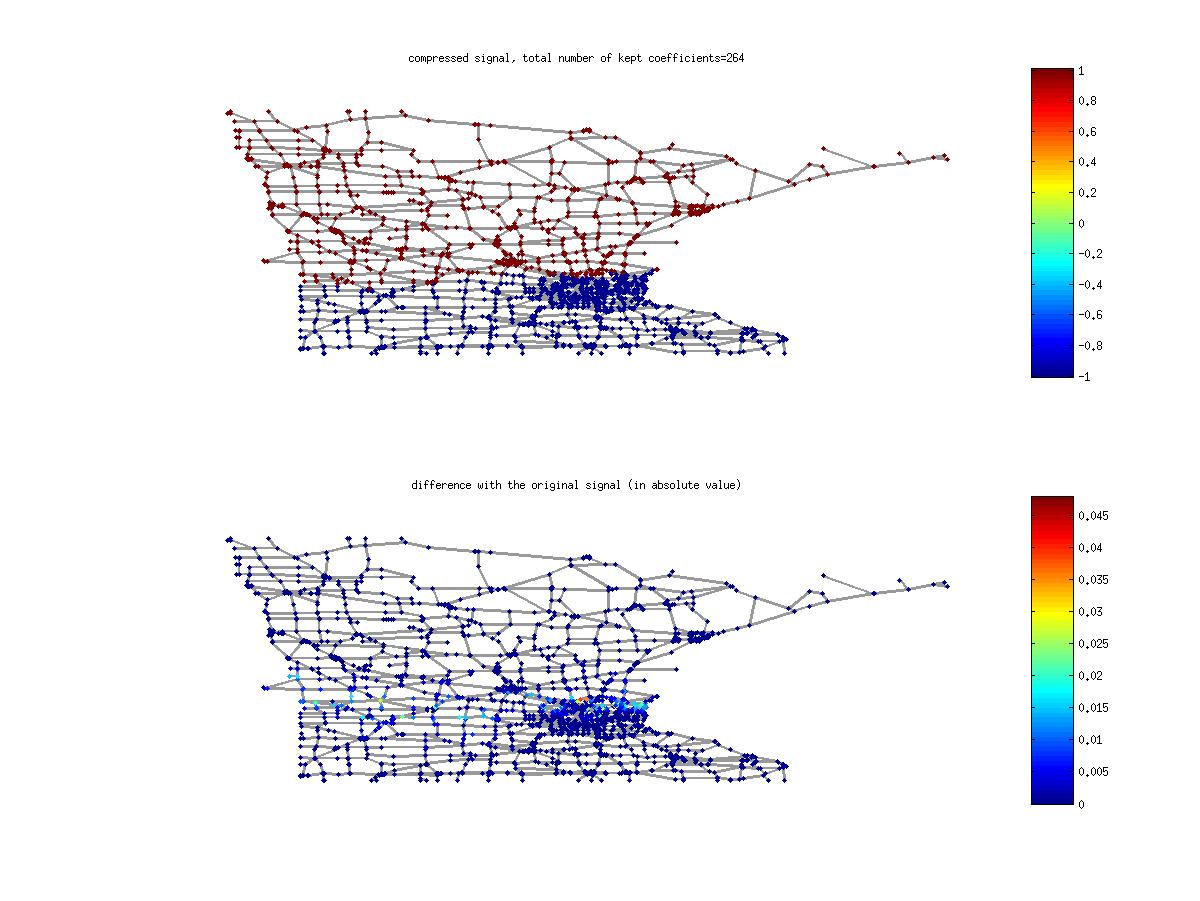}
 \caption{Compression with  10\% of kept coefficients. The  first graph is the compressed signal, the second one is the error with the original one.}
 \label{minnesota_compression_signal}
 \end{figure}
 
  In addition we compare the intertwining wavelets compression  results  with those obtained through the spectral graph wavelets pyramidal algorithm of \cite{HAM,SFV}. For this purpose, we used  the GSPBox\footnote{available at https://lts2.epfl.ch/gsp/} \cite{GSP}. The main features of this pyramidal algorithm are the following ones:
 \begin{enumerate}
 \item Subsampling: $\bX$ is chosen according to the sign of the highest frequency Fourier mode $e_{n-1}$.
 \item Weighting: $\bLL$ is computed by  the Schur complement followed by a sparsification step. 
 \item Approximation coefficients: $\bar{f}(\bx)=g(t \LL)(\bx)$, where $g$ is a low-pass filter, and $t$ is a positive real number. As in 
 \cite{SFV}, we took $g(x)=1/(1+x)$ and $t=2$ to analyse the signal of Figure \ref{signaux}(b). 
 With this choice, $\baf(\bx)=K_{1/2}f(\bx)$.
 \item The signal $\bf$ on $\bX$ is then interpolated on the whole of $\XX$: $\epsilon > 0$ being fixed, the interpolation is defined 
 as 
 \[ f_{int}(x) = (\LL^{\epsilon})^{-1}_{\XX \bX} \bLL^{\epsilon} \baf(x) \, , 
 \]
 where $\LL^{\epsilon}= \epsilon \Id - \LL$, and $\bLL^{\epsilon}$ is the Schur complement of $\LL^{\epsilon}_{\brX, \brX}$ in $\LL^{\epsilon}$. Using Proposition \ref{schur.prop}, one can see at once that $f_{int}|_{\bX}=\baf$.
 \item The error prediction $y=f-f_{int}$ is stored.
 \end{enumerate}
 At the end of one step, the signal $f \in \R^{\XX}$ is encoded by $(\baf,y) \in \R^{\bX \times \XX}$, resulting in a redundant information. Figure \ref{minnesota_compression_error} presents the relative compression  error in terms of the number of kept coefficients for the two methods. More precisely, we ran the spectral graph wavelets pyramidal algorithm for 7 steps, resulting in 20 approximations coefficients among approximately  $2 \times 2642$ stored ones. 
 We also ran our intertwining wavelets multiresolution until getting approximately the same number of approximation coefficients to get a fair comparison. This took 16 steps, resulting in 16 approximation coefficients. 
 We kept then the same number of the biggest coefficients to construct the compressed version of the signal. 
 
   \begin{figure}
  \includegraphics[scale=0.6]{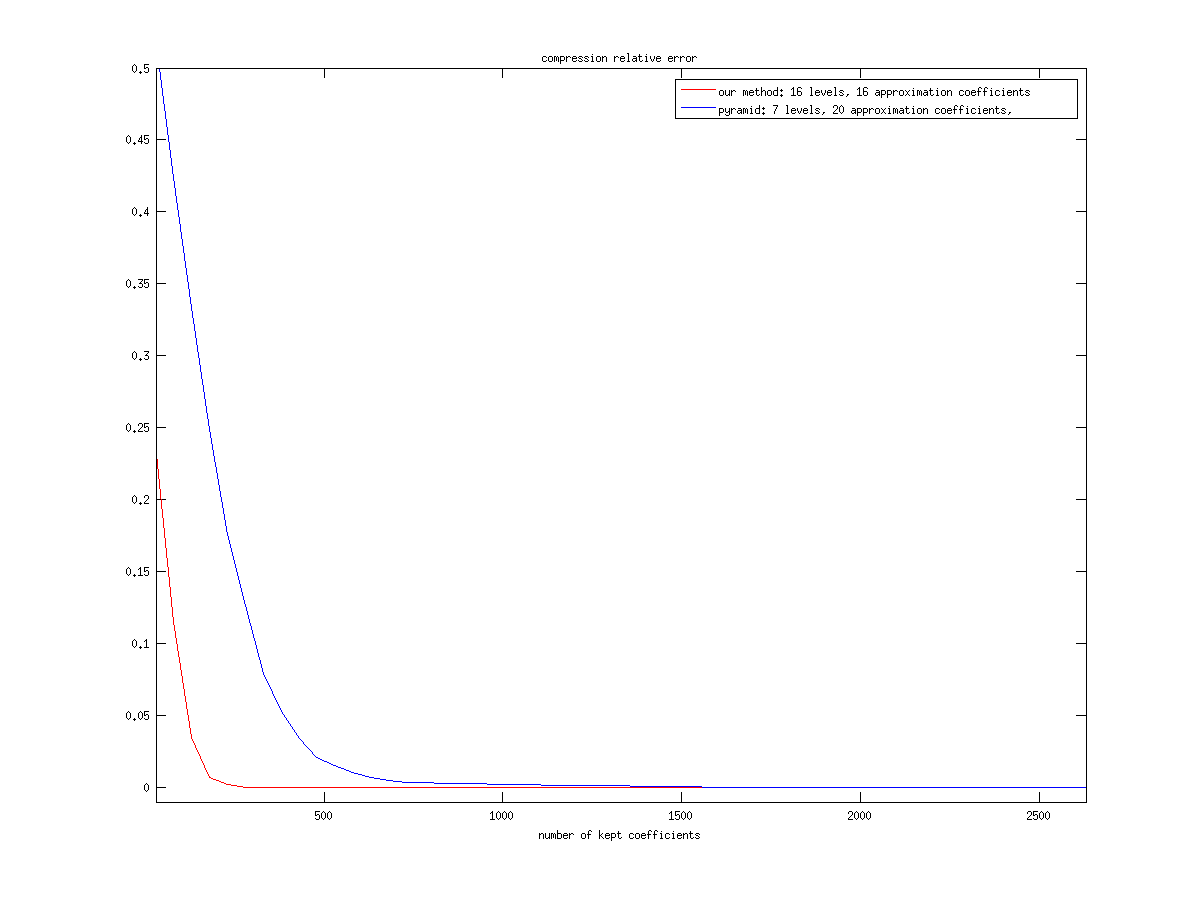}
 \caption{Relative compression error of signal in Figure \ref{signaux}(b), in terms of the number of kept coefficients. In red, the error using intertwining wavelets. In blue, error using the spectral graph wavelets pyramidal algorithm.}
 \label{minnesota_compression_error}
 \end{figure}

\subsubsection{Sensor graph}
For the signal on the sensor graph we compare once again in the same way
our method with the Pyramid algorithm on 3 steps.
In this case we also included the results of our procedure {\em without} sparsification, since they surprisingly show that the sparsification improve the results.  
In Figure \ref{sensor_compression_error}, we compare the relative compression errors  in terms of the number of kept coefficients for the Pyramid algorithm and our method with and without sparsification.   
 \begin{figure}
\begin{tabular}{cc}
  \includegraphics[scale=0.35]{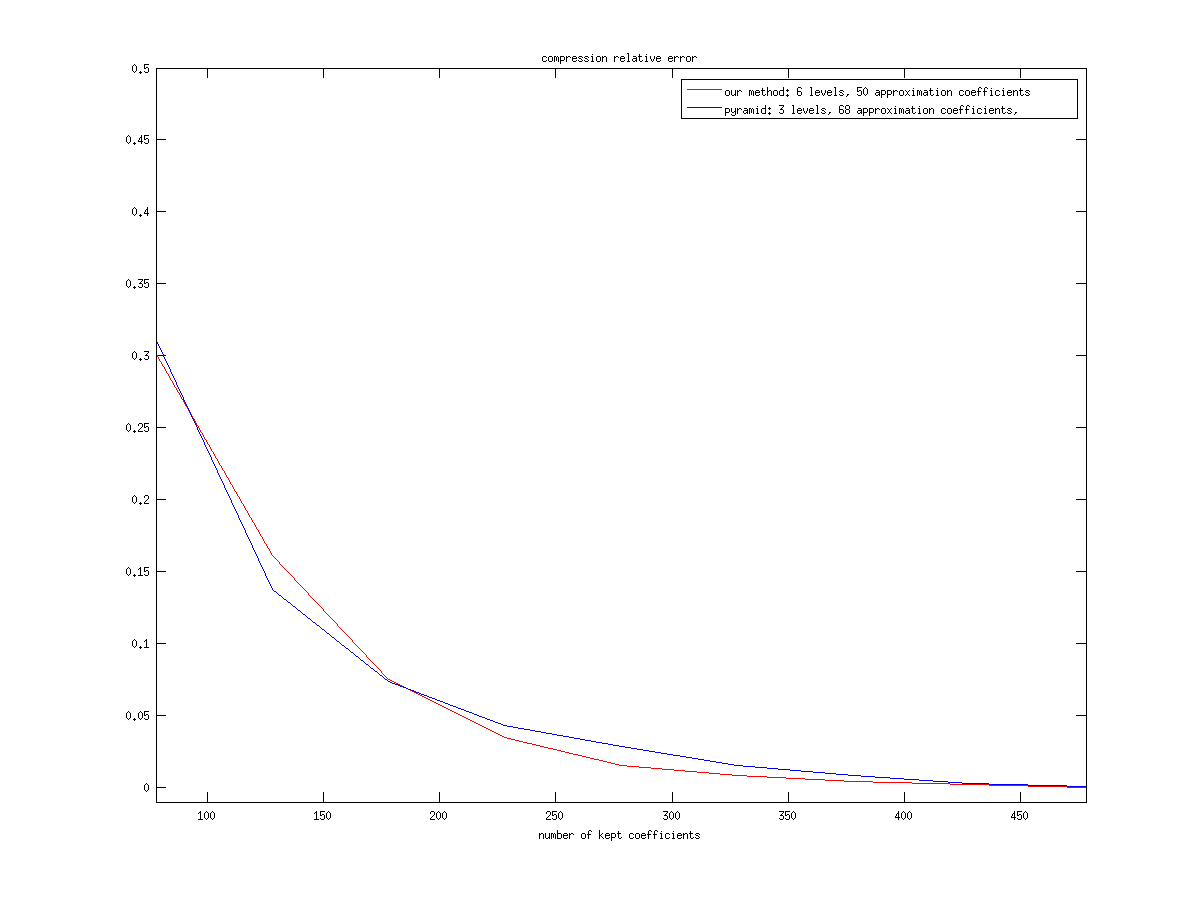}
 &
  \includegraphics[scale=0.35]{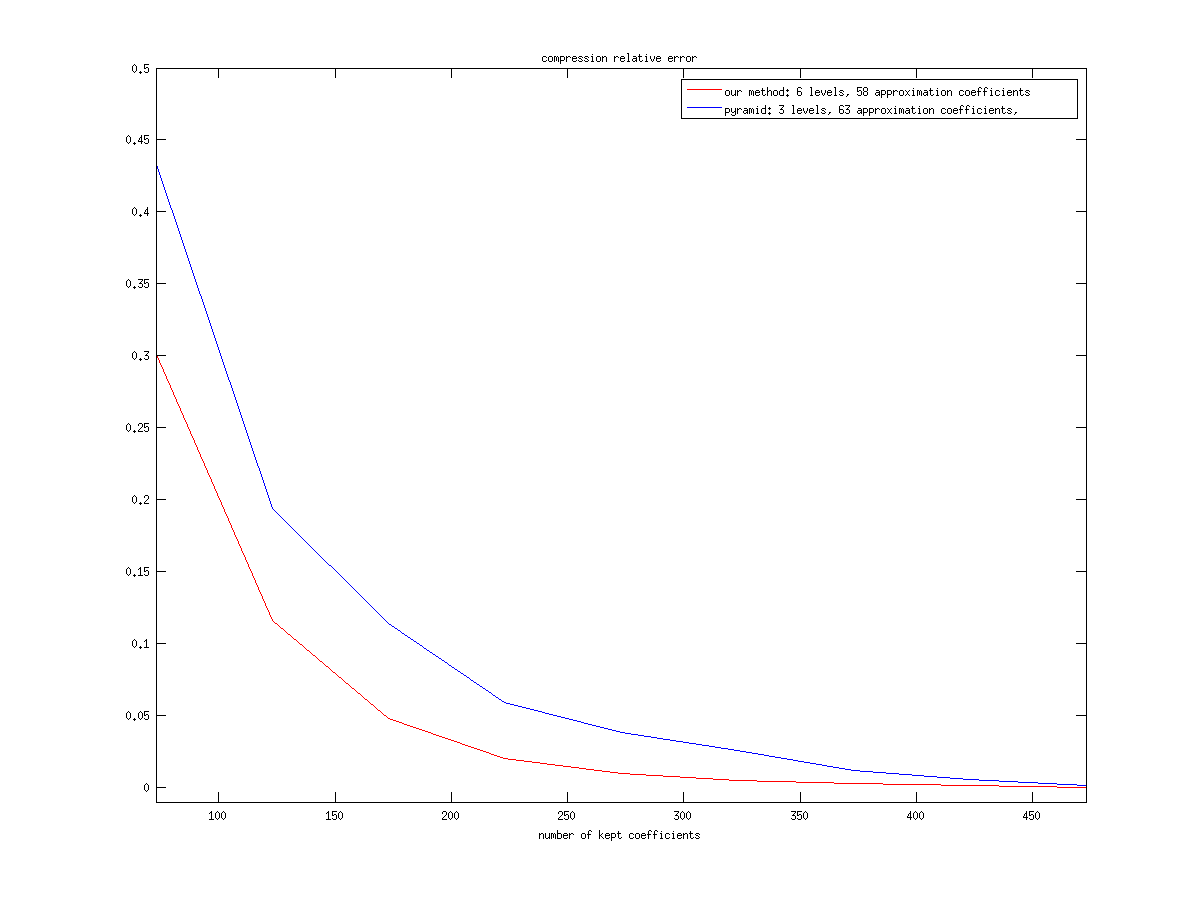}
\\
Without sparsification 
&
With sparsification 
\end{tabular}
 \caption{Relative compression error of signal in Figure \ref{signaux}(c), in terms of the number of kept coefficients. In red, the error using intertwining wavelets. In blue, error using the spectral graph wavelets pyramidal algorithm. In the first graph, we do not sparsify the graph, while we perform sparsification on the second. }
 \label{sensor_compression_error}
 \end{figure}

\section*{Aknowledgements} Marseille's team thanks Leiden university for its hospitality along many cumulated weeks during which we discovered
our wavelets. This was supported by Frank den Hollander's ERC Advanced Grant 267356-VARIS. 
  The four authors are especially grateful to Dominique Benielli (labex Archim\`ede) who made possible the numerical work. 
L. Avena was partially supported by NWO Gravitation Grant 024.002.003-NETWORKS.

\end{document}